\newcommand{\bzero}{\mathbf{0}}
\newcommand{\ba}{\mathbf{a}}
\newcommand{\bd}{\mathbf{d}}
\newcommand{\be}{\mathbf{e}}
\newcommand{\bs}{\mathbf{s}}
\newcommand{\bv}{\mathbf{v}}
\newcommand{\brho}{\boldsymbol{\rho}}
\newcommand{\bA}{\mathbf{A}}
\newcommand{\bX}{\mathbf{X}}
\newcommand{\bY}{\mathbf{Y}}
\newcommand{\bF}{\mathbf{F}}
\newcommand{\bW}{\mathbf{W}}
\newcommand{\bI}{\mathbf{I}}
\newcommand{\bE}{\mathbf{E}}
\newcommand{\bbC}{\mathbb{C}}
\newcommand{\bbD}{\mathbb{D}}
\newcommand{\bbR}{\mathbb{R}}
\newcommand{\bbS}{\mathbb{S}}
\newcommand{\bbT}{\mathbb{T}}
\newcommand{\bbY}{\mathbb{Y}}
\newcommand{\bbU}{\mathbb{U}}
\newcommand{\bbW}{\mathbb{W}}
\newcommand{\bbV}{\mathbb{V}}
\newcommand{\bbP}{\mathbb{P}}
\newcommand{\bbN}{\mathbb{N}}
\newcommand{\bbZ}{\mathbb{Z}}
\newcommand{\rme}{\mathrm{e}}
\newcommand{\rmdist}{\mathrm{dist}}
\newcommand{\rmRe}{\mathrm{Re}}
\newcommand{\bx}{\mathbf{x}}
\newcommand{\by}{\mathbf{y}}
\newcommand{\bz}{\mathbf{z}}
\newcommand{\bb}{\mathbf{b}}
\newcommand{\bt}{\mathbf{t}}
\newcommand{\bu}{\mathbf{u}}
\newcommand{\bH}{\mathbf{H}}
\newcommand{\bP}{\mathbf{P}}
\newcommand{\bU}{\mathbf{U}}
\newcommand{\bZ}{\mathbf{Z}}
\newcommand{\calB}{\mathcal{B}}
\newcommand{\calH}{\mathcal{H}}
\newcommand{\calG}{\mathcal{G}}
\newcommand{\calF}{\mathcal{F}}
\newcommand{\calJ}{\mathcal{J}}
\newcommand{\calK}{\mathcal{K}}
\newcommand{\calR}{\mathcal{R}}
\newcommand{\calQ}{\mathcal{Q}}
\newcommand{\comment}[1]{}
\newtheorem{theorem}{Theorem}
\newtheorem{lemma}{Lemma}
\newtheorem{condition}{Condition}
\begin{document}

\title{Unrolled Wirtinger Flow with Deep Decoding Priors for Phaseless Imaging}

\author{Samia~Kazemi,
        Bariscan~Yonel,~\IEEEmembership{Member,~IEEE,}
        and~Birsen~Yazici,~\IEEEmembership{Fellow,~IEEE}% <-this % stops a space
\thanks{Manuscript submitted on August 10, 2021. Revised manuscript submitted on March 6, 2022. This work was supported in part by the Air Force Office of Scientific Research (AFOSR) under the agreement FA9550-19-1-0284, in part by Office of Naval Research (ONR) under the agreement N00014-18-1-2068, in part by the National Science Foundation (NSF) under Grant No ECCS-1809234 and in part by the United States Naval Research Laboratory (NRL) under the agreement N00173-21-1-G007. (\emph{Corresponding author: Birsen Yazici}.)}
\thanks{The authors are with the Department of Electrical, Computer and Systems Engineering, Rensselaer Polytechnic Institute, Troy, NY 12180 USA (e-mail: kazems@rpi.edu; yonelb2@rpi.edu; yazici@ecse.rpi.edu)}}

% The paper headers
%\markboth{IEEE Transaction on Computational Imaging,~Vol.~8, 2022}%
%{Shell \MakeLowercase{\textit{et al.}}: Bare Demo of IEEEtran.cls for IEEE Journals}

% use for special paper notices
%\IEEEspecialpapernotice{(Invited Paper)}

% make the title area
\maketitle

% As a general rule, do not put math, special symbols or citations in the abstract or keywords.
\begin{abstract}
We introduce a deep learning (DL) based network and an associated exact recovery theory for imaging from intensity-only measurements.
The network architecture uses a recurrent structure that unrolls the Wirtinger Flow (WF) algorithm with a deep decoding prior that enables performing the algorithm updates in a lower dimensional encoded image space.
We use a separate deep network (DN), referred to as the encoding network, for transforming the spectral initialization used in the WF algorithm to an appropriate initial value for the encoded domain.
The unrolling scheme models a fixed number of iterations of the underlying optimization algorithm into a recurrent neural network (RNN).
Furthermore, it facilitates simultaneous learning of the parameters of the decoding and encoding networks and the RNN.
%We establish sufficient conditions on the network to guarantee exact recovery under deterministic forward models and demonstrate the relation between the Lipschitz constants of the trained decoding prior and encoding networks to the convergence rate of the WF algorithm.
We establish a sufficient condition to guarantee exact recovery under deterministic forward models.
Additionally, we demonstrate the relation between the Lipschitz constants of the trained decoding prior and encoding networks to the convergence rate of the WF algorithm.
We show the practical applicability of our method in synthetic aperture imaging using high fidelity simulation data from the PCSWAT software.
Our numerical study shows that the decoding prior and the encoding network facilitate improvements in sample complexity.
\end{abstract}

\begin{IEEEkeywords}
Deep learning, inverse problems, phase retrieval, deep prior, Wirtinger Flow, synthetic aperture imaging, algorithm unrolling.
\end{IEEEkeywords}

% For peer review papers, you can put extra information on the cover
% page as needed:
% \ifCLASSOPTIONpeerreview
% \begin{center} \bfseries EDICS Category: 3-BBND \end{center}
% \fi
%
% For peerreview papers, this IEEEtran command inserts a page break and
% creates the second title. It will be ignored for other modes.
\IEEEpeerreviewmaketitle

\section{Introduction}
\label{sec:introduction}
\subsection{Motivation and Prior Art}
\IEEEPARstart{P}{haseless} imaging refers to the task of reconstructing an image from measurements whose magnitude or intensity values are available while the phase information is either missing or unreliable.
This challenging problem necessitates compensation either through hand-crafted prior information~\cite{soltanolkotabi2019structured} or significant measurement redundancy~\cite{candes2015phase_IEEE, jaganathan2015phase}.
In practical imaging applications with deterministic forward maps, hand-crafted priors may not be sufficiently descriptive of the underlying image domain to reduce the requirement of a large number of measurements~\cite{cai2016optimalThWF, wu2020hadamardWF}.
In this paper, we introduce a deep learning (DL) based phaseless imaging method that incorporates data-driven prior information for deterministic imaging problems with theoretical convergence and an exact recovery guarantee.

We consider the state-of-the-art phase retrieval methods that fall under two general categories: \emph{Wirtinger Flow} (WF) type algorithms~\cite{candes2015phase_IEEE, chen2015WF, yuan2019SWF, yonel2020deterministic} and \emph{DL-based approaches}~\cite{hand2018phase, shamshad2018robust, hand2020compressive, shamshad2020robust, Jagtap_2020_DD}.
The first category includes WF~\cite{candes2015phase_IEEE} and its variants~\cite{chen2015WF, zhang2017RWF, cai2016optimalThWF, yuan2019SWF, wu2020hadamardWF} which offer exact recovery guarantees based on non-convex optimization.
Unlike the earlier lifting-based convex phase-retrieval algorithms~\cite{candes2013phaselift, candes2015phase}, WF performs iterations in the signal space relieving the extensive computation and memory requirements.
However, classical WF requires an appropriate choice of the initial estimate, learning rate and high sample complexity of $\mathcal{O}(N \log N)$ under the Gaussian measurement model.
Several initial estimates for WF have been studied including the spectral estimation~\cite{candes2015phase_IEEE}, spectral estimation with sample truncation~\cite{zhang2018median} and more general sample processing functions
%\cite{lu2020phase, luo2019optimal, yonel2020spectral, yonel2021dissertation},
\cite{lu2020phase, luo2019optimal, yonel2020spectral},
linear spectral estimation~\cite{ghods2018linear}, orthogonality-promoting initialization~\cite{wang2018TAF} etc.
Original WF algorithm has been extended to include prior information~\cite{cai2016optimalThWF, yuan2019SWF} to reduce its sample complexity, most prominent of which is sparsity.
However, finding a hand-crafted optimal basis over which the unknown image is sparse can be challenging.
Other variants of WF aiming to reduce sample complexity include~\cite{chen2015WF, zhang2017RWF}.
However, the exact recovery theory of WF~\cite{candes2015phase_IEEE} and its variants
%\cite{chen2015WF, zhang2017RWF, cai2016optimalThWF, yuan2019SWF, wu2020hadamardWF}
\cite{chen2015WF, zhang2017RWF, cai2016optimalThWF, yuan2019SWF}
relies on the assumption that the forward map is Gaussian.
This poses a fundamental limitation for imaging applications since the forward models are almost always deterministic.

Recently, in~\cite{yonel2020deterministic}, we introduced a mathematical framework for establishing an exact recovery guarantee for the WF algorithm involving deterministic forward maps under a sufficient condition that sets a concentration bound on the spectral matrix~\cite{candes2015phase_IEEE}.
This paves the way for the adoption of WF-type algorithms in a wide range of practical applications with provable performance guarantees.
However, this framework does not account for prior information about the image domain or study how the sufficient condition will be affected by the incorporation of such information.

The second category of state-of-the-art methods for phaseless imaging are practically attractive as they present a trade-off between the number of measurements and the training data, by solving the imaging problem in a lower dimensional encoded image space using a generative prior~\cite{hand2018phase, shamshad2018robust, hand2020compressive, shamshad2020robust}.
These are iterative algorithms where the parameters of the prior network, often referred to as the generative network, are learned to capture the global characteristics of the image manifold.
Once trained, starting from a randomly initialized encoded image, this network is used to update the encoded image estimation.
A convergence guarantee for the phaseless imaging problem is established for real positive-valued unknown image components in~\cite{hand2018phase} given that the trained weight matrices and the forward map satisfy a weight distribution condition and a range restricted concentration property, respectively.
For the same network as in~\cite{hand2018phase} with expansive layers of particular dimensionalities, and a measurement matrix and trained weight matrices of i.i.d. Gaussian distributed components,~\cite{hand2020compressive} shows that optimal sample complexity can be achieved for the phaseless imaging problem after a sufficient number of iterations.
However, since the prior network is trained separately from the phaseless imaging problem, these methods require large training sets in order to effectively estimate the probability distribution over the image domain instead of a conditional distribution given the phaseless measurements~\cite{willett2020_neuman}.
Additionally, this training scheme precludes the inclusion of an optimal initialization scheme for the encoded image space.

For overcoming the large training set requirement and fixed image space restriction of the generative prior, a related class of methods utilizes untrained networks in which the network structure itself works as the prior~\cite{Ulyanov_2018_DIP}.
For the phaseless imaging problem, a deep decoder~\cite{Heckle_2018_DD, Jagtap_2020_DD}, which uses an under-parameterized architecture, is utilized in~\cite{Jagtap_2020_DD} and an exact recovery guarantee is established for a two-layer decoder model and Gaussian distributed forward map that satisfies a specific restricted eigenvalue condition.
However, an optimal initialization scheme for the weights of the network, instead of the encoded image, is not established.
Additionally, theoretical results for this approach are very limited.

To address the limitations of state-of-the-art phaseless imaging methods, in this paper, we combine the WF algorithm and theory in~\cite{yonel2020deterministic} with a DL-based approach.
We consider the following two major modifications: the use of a deep decoding prior in conjunction with DL-based initialization and the unrolling of the WF algorithm into a recurrent neural network (RNN) architecture
which enables end-to-end training.
Our overall network is composed of the transformation network for initialization referred to as the \emph{encoder}, an RNN that represents the unrolled gradient descent updates of the WF in the encoded domain and the deep decoding prior network referred to as the \emph{decoder}.
%Additionally, we have considered a separate projection network at each step of the RNN that imposes prior information related to the encoded image space.

Unrolling, which has been widely implemented to a range of linear inversion problems~\cite{willett2020_neuman, monga2021_unrolling} has limited utilization in the phase retrieval literature.
In~\cite{hyder2020solving}, an unrolled network is introduced for a Fourier phase retrieval problem with a reference signal.
In~\cite{zhang2021physics}, a complex unrolled network with unsupervised training is proposed for lensless microscopy imaging from phaseless measurements.
An unrolled Incremental Reshaped Wirtinger Flow based phase retrieval approach is presented in~\cite{naimipour2020_upr} for direct image estimation from amplitude measurements.
However, the trainable parameter set for this method is only related to the learning rates and no theoretical exact recovery guarantee is established.
To the best of our knowledge, our approach is the first to unroll a phaseless imaging algorithm with deep priors and end-to-end supervised training for general imaging applications.
Additionally, we have established a theoretical exact recovery guarantee.
A related approach in~\cite{Metzler2018_prdeep} incorporates adaptive step sizes, but their implementation does not use a fixed number of iterations, the step sizes are not learned and no theoretical exact recovery guarantee is established.

\subsection{Our Approach and its Advantages}
Our approach bridges the class of theoretically sound state-of-the-art purely optimization-based non-convex approaches with data-driven schemes deploying deep decoding priors for phaseless imaging in a deterministic setting.
Instead of the generative adversarial network (GAN)~\cite{radford2015unsupervised} based training used in the prior work~\cite{hand2018phase, shamshad2018robust, hand2020compressive, shamshad2020robust}, we adopt an end-to-end training approach where the parameters of the decoder, RNN and the encoder are learned simultaneously during training.
The unrolling strategy benefits from the inherent computational efficiency of a trained optimal network.
Additionally, being derived from model-based iterative algorithms, the network also offers interpretability of its architecture and parameters unlike an arbitrary deep network for phaseless imaging.

Our approach relates the spectral initialization-based WF algorithm with a generative prior based approach within a DL framework.
Existing applications of the generative prior~\cite{hand2018phase, shamshad2018robust, hand2020compressive, shamshad2020robust} lack a rigorous justification for the choice of initialization.
Furthermore, it is not well-understood how this value affects the convergence rate.
By establishing an explicit connection to the spectral initialization step, we determine the effect of the decoding network on the validity of the convergence guarantees and the rate of convergence to the true solution.
Our theoretical analysis reveals two key observations:
\begin{itemize}
    \item Firstly, the parameters of the underlying encoding and decoding prior networks have direct implications on the convergence rate and initialization accuracy which can be quantified by their Lipschitz constant values after training.
    A learned decoding prior can achieve a faster convergence rate compared to non-DL based WF~\cite{yonel2020deterministic} as long as certain Lipschitz constant related
    %sufficient
    conditions are satisfied by the trained networks.
    \item Secondly, using the lower dimensional embedding of the decoding prior, we establish a new sufficient condition for exact recovery where, by virtue of specific imposed conditions on the decoder,
    the concentration property considered in~\cite{yonel2020deterministic} is parameterized over the encoded space.
    Hence, a sufficiently accurate initial estimate for the algorithm can be obtained using fewer measurements, as the representations are embedded in the lower dimensional space by the encoder.
    This sample complexity reduction aspect is also observed empirically through our numerical simulations.
\end{itemize}

The main differences with the existing generative prior based phase retrieval methods are notably in the initialization criteria, and the type of conditions assumed on the measurement vectors and the DL network parameters for establishing exact recovery guarantee when compared to~\cite{hand2018phase, hand2020compressive}.
In~\cite{hand2018phase, shamshad2018robust, hand2020compressive, shamshad2020robust}, the encoded unknown is randomly initialized, while in our approach, which can be viewed as a DL enhanced WF, we implement a DL network to transform the spectral initialization output to an encoded initialization value in order to facilitate a better starting point.
Even though the spectral initialization is computationally more expensive compared to a random initialization step, imaging applications in~\cite{shamshad2018robust, shamshad2020robust} use multiple initial guesses each of which is iteratively updated for selecting the best one.
Our approach avoids the need for repeating the algorithm for an arbitrary number of initial guesses, and its computation complexity is of the same order as in~\cite{yonel2019generalization}.
%We also introduce a DL-based projection step for the updated encoded image values that enables learning prior information directly in the encoded image space.
Additionally, unlike~\cite{hand2018phase, hand2020compressive}, our sufficient conditions on the trained DL networks for achieving exact recovery guarantee do not depend on the explicit consideration of the network architectures or imposition of specific properties on the trained network weights.
The sufficient condition on the forward map is similar to the deterministic WF analysis in~\cite{yonel2020deterministic} rather than the generative network architecture dependent condition in~\cite{hand2018phase, hand2020compressive}.

Our numerical simulation results demonstrate the ability of end-to-end learning with the unrolled WF method for reconstructing a wide range of unknown image sets.
This includes MNIST image set of handwritten digits, simulated images with geometric objects and PCSWAT~\cite{sammelmann2002personal} simulated images with mine-like objects for different non-Gaussian deterministic forward maps.

\vspace{-0.1in}
\subsection{Notation and Organization of the Paper}
\label{subsec:notation_organization}
Bold upper case and bold lower case letters are used to represent matrices and vectors, respectively.
$\|{\bX}\|_F$ refers to the Frobenius norm of $\bX$, and it is calculated as $\mathrm{Tr}(\bX^H\bX)$. $\mathrm{Tr}(.)$ denotes the trace of a matrix, while superscripts $T$ and $H$ on a matrix (or vector) denote its transpose and Hermitian transpose, respectively.
$\|.\|$ around a matrix and a vector refer to their spectral norm and $\ell_2$-norm, respectively.
Calligraphic letters and doublestruck upper case letters are used for operators and sets, respectively.
We use lower case Greek letters to represent various constants, and lower case italic letters, with or without subscripts, are used to denote different functions.
For a network $\calB$ with input $\bx$, $\calB(\bx)$ is its output vector.
Finally, we are using upper case italic letters for constant integers, and a set of integer values from $1$ to $K$ is written as $[K]$.
\begin{table}[!t]
\renewcommand{\arraystretch}{1.0}
\caption{List of Important Notations}
\label{table:notations}
\centering
\begin{tabular}{|c|c|}
\hline
$\hat{\brho}$ & Estimated image \\
\hline
$\brho^*$ & True unknown image \\
\hline
$\brho\brho^H$ & Lifted image vector $\brho$ \\
\hline
$\bd$ & Vector of the measured intensity values $\{d_m\}_{m = 1}^M$ \\
\hline
$\calF$ & Lifted forward map where $\bd = \calF(\brho^*{\brho^*}^H)$ \\
\hline
$\bF$ & Forward map with $\{\ba^H_m\}_{m = 1}^M$  along its rows \\
\hline
$\calF^H$ & $\ell_2$ adjoint of $\calF$ \\
\hline
$\bY$ & Spectral matrix defined as $\bY := \frac{1}{M}\calF^H(\bd)$ \\
\hline
$\calG$ & Encoding network or Encoder \\
\hline
$\calH$ & \shortstack{Decoding prior network or Decoding network \\
or Decoder} \\
\hline
$\bE_{\brho}$ & $\calH(\by)\calH(\by)^H - \brho^*{\brho^*}^H$ \\
\hline
$\langle ., .\rangle_F$ & Frobenius inner product \\
\hline
\end{tabular}
\end{table}
Table~\ref{table:notations} includes a list of important notations used throughout this paper.

The rest of the paper is organized as follows:
The problem statement and background on the non-DL based phase retrieval methods are discussed in Section~\ref{sec:prob_statement}.
The DL-based overall imaging network is introduced in Section~\ref{sec:DN}.
Theoretical foundations required for establishing the exact recovery guarantee of our approach are discussed in Section~\ref{sec:convergence}.
Section~\ref{sec:recovery_guarantees} presents our theoretical results involving the accuracy of the DL-based initial value, convergence guarantee and properties on the DNs for desired reconstruction performance.
The training process and the implementation details of specific properties of the encoder, decoder and the RNN are presented in Subsection~\ref{subsec:implement_Lipschitz_bound} and Subsection~\ref{subsec:complexity} discusses the computational complexity of our approach.
Numerical simulations examining the performance of our approach compared to the WF algorithm and other DL-based methods are presented in Section~\ref{sec:numerical_results}.
Finally, Section~\ref{sec:conclusions} concludes the paper.

\section{Problem Statement}
\label{sec:prob_statement}
\subsection{The Phase Retrieval Problem}
The phase retrieval problem entails estimating an unknown $\brho^* \in \mathbb{C}^N$, from its intensity, or magnitude-only measurements of the form:
\begin{equation}
%\label{eq:phaless} d_m = |\langle\ba_m, \brho^*\rangle|^2, \quad \text{for} \ \ m = 1, 2, \ldots M,
\label{eq:phaless} d_m = |\langle\ba_m, \brho^*\rangle|^2, \quad \text{for} \ \ m = 1, 2, \ldots M,
\end{equation}
%where $\{\ba_m\}_{m = 1}^M \in \mathbb{C}^N$ constitute a known, \emph{linear} measurement model, $\bF$, pertaining to the application of interest, such as Gaussian sampling, coded diffraction patterns, or the rows of a known linear transformation, $\bF$, such as the Fourier transform or an imaging operator corresponding to a particular acquisition geometry.
where $\ba_m\in\mathbb{C}^N$, for all $m = 1, \cdots, M$, denotes the $m^{th}$ sampling vector.
These vectors constitute a known, \emph{linear} measurement model, $\bF$, pertaining to the application of interest, such as Gaussian sampling, coded diffraction patterns, Fourier transform etc.
We refer to $\bF$ as the forward map.
When $\{\ba_m \}_{m = 1}^M$ are Fourier sampling vectors, the problem is classically known as Fourier phase retrieval, or \emph{the phase problem} in optical imaging, and quantum physics fields. 

Fundamentally, \eqref{eq:phaless} constitutes a system of $M$ quadratic equations, and solving it is known to be NP-hard in general \cite{wang2017solving}. 
Nonetheless, classical algorithms based on alternating minimization have been used to empirical success in optical imaging applications
%\cite{gerchberg1972practical, fienup1982phase, netrapalli2013phase, luke2004relaxed},
\cite{fienup1982phase, netrapalli2013phase, luke2004relaxed},
despite the severe ill-posedness of the problem that arises due to the quadratic dependence of the measurements to the quantity of interest in \eqref{eq:phaless}
%\cite{bendory2017fourier, barnett2020geometry}. 
\cite{barnett2020geometry}. 
%These include trivial ambiguities that are easily observed from in-variance of \eqref{eq:phaless} global phase factors on $\brho^*$, as well as model-dependent challenges such as translation in-variance with Fourier models, or in the extreme case, non-trivial ambiguities that can only be addressed with priori-information as in the case of $1D$-Fourier phase retrieval \cite{}.

Over the last decade, optimization-based approaches have methodically progressed towards establishing performance guarantees in exactly recovering $\brho^*$ from $\bd = [d_1, \cdots d_M]^T \in \mathbb{R}^M$.
First major developments to this end have been through a reformulation of \eqref{eq:phaless} via \emph{lifting} the problem, as the recovery of a rank-1, positive semidefinite (PSD) unknown $\brho^* {\brho^*}^H$ from $\bd$. 
\eqref{eq:phaless} become equivalent to realizations under a \emph{linear} measurement model, governed by a \emph{lifted forward map}, $\mathcal{F} : \mathbb{C}^{N \times N} \mapsto \mathbb{C}^M$, where
\begin{equation}
d_m = \langle \ba_m \ba_m^H, \brho^* {\brho^*}^H \rangle_F, \ \ \text{for} \ \ m = 1, \ldots M.
\end{equation}
This reformulation facilitates the use of established tools from low rank matrix recovery theory through convex-relaxations and semidefinite programming \cite{candes2013phaselift, candes2015phase}.
The injectivity and the spectral properties of $\calF$ over rank-1, PSD matrices therefore determine the exact recovery of $\brho^* {\brho^*}^H$ \cite{candes2013phaselift}.

More recently, algorithms that attain performance guarantees by directly operating on the original signal domain ~\cite{bahmani2017PhaseMax, goldstein2018PhaseMax, hand2016elementary} have been introduced to overcome the demanding computational and memory requirements of the lifting-based approaches.
One of the most prominent one is the WF algorithm~\cite{candes2015phase_IEEE}, which minimizes the following functional:
\begin{align}
    \label{eq:calJ} \calJ(\brho) := \frac{1}{2M}\sum_{m = 1}^M |(\ba_m)^H\brho\brho^H\ba_m - d_m|^2.
\end{align}
At the $p^{th}$ iteration step, the WF algorithm updates the current estimate $\brho^{(p - 1)}$ of the unknown quantity as follows: 
\begin{align}
    \label{eq:gwf_itr_updt} \brho^{(p)} = \brho^{(p - 1)} - \frac{\gamma_p}{\|\brho^{(0)}\|^2}\nabla\calJ(\brho)_{\brho = \brho^{(p - 1)}}.
\end{align}
Here, $\gamma_p$ denotes the learning rate at the $p^{th}$ stage and the gradient is given by the Wirtinger derivative of $\calJ(\brho)$,
\begin{align}
    \nabla\calJ(\brho) & = \left(\frac{\partial\calJ}{\partial\brho}\right)^H.
\end{align}
The critical component of the WF framework is at the initialization step, where $\brho^{(0)}$ is determined from the leading eigenvector $\mathbf{v}_0$ of the backprojection estimate $\bY\in\bbC^{N \times N}$ as follows:
\begin{align}
    \label{eq:spectral_matrix} \bY & := \frac{1}{M}\calF^H(\bd), \\
    \brho^{(0)} & = \sqrt{\lambda_0}\mathbf{v}_0,
\end{align}
where $\calF^H$ is the $\ell_2$ adjoint of $\calF$ and the scaling factor $\sqrt{\lambda_0}$ is a norm-estimate of the unknown image of interest.
%where $\calF^H$ is the $\ell_2$ adjoint of $\calF$ and the scaling factor $\sqrt{\lambda_0}$ is an estimate of the energy of the unknown image.
We refer to $\bY$ as the spectral matrix.

Under the following concentration inequality on $\bY$
\begin{align}
    \label{eq:cond_phaseless} \|\bY - \left(\brho\brho^H + \|\brho\|^2\mathbf{I}\right)\| \leq \delta\|\brho\|^2,
\end{align}
the initial estimate provably enters a basin of attraction in the neighborhood of the global solution set $\mathbb{P}:= \{ \brho^* \mathrm{e}^{\mathrm{i} \phi}, \phi \in [0, 2 \pi) \}$, such that convergence is guaranteed under the validity of a \emph{regularity condition} for the loss functional $\mathcal{J}$ in the noise-free setting with Gaussian sampling, and coded-diffraction models \cite{candes2015phase}. 
These amount to exact recovery guarantees in the statistical setting, where any $\brho \in \mathbb{C}^N$ can be exactly recovered up to a global phase factor, with overwhelming probability if the number of samples exceeds $\mathcal{O}(N \log N)$.

On the other hand in \cite{yonel2020deterministic}, the validity of \eqref{eq:cond_phaseless} for all $\brho \in \bbC^N$ with a sufficiently small $\delta$ ($<0.184$) was shown to be a sufficient condition for universal exact recovery via WF for any $\calF$ in a deterministic mathematical framework.
Hence, deterministic forward maps, $\bF$, that relate to underlying data collection geometry are equipped with exact recovery guarantees. 
%This is especially useful for wave-based imaging applications, where parameters of $\{ \ba_m \}_{m = 1}^M$ are related to the transmitter and receiver locations, transmission signal waveform, its speed within the propagation medium, and are unlikely to follow i.i.d. Gaussian distribution.
This is especially useful for wave-based imaging applications, where the sampling vectors, $\{ \ba_m \}_{m = 1}^M$, are related to the transmitter and receiver locations, transmission signal waveform, and its speed within the propagation medium, and are unlikely to follow i.i.d. Gaussian distribution.

\vspace{-0.1in}
\subsection{WF with a Deep Decoding Prior}
\label{subsec:WF_with_decoder}
In this paper, we build on the mathematical arguments introduced in~\cite{yonel2020deterministic} in establishing the exact recovery guarantee for a DL-based algorithm.
This allows our DL-based algorithm and theoretical results to be applicable to a wide range of practical imaging applications involving deterministic forward maps.
In particular, we present our phaseless imaging approach that performs WF iterations in \emph{a lower dimensional encoded space} in $\mathbb{C}^{N_y}$, where $N_y \ll N$, in lieu of the original image domain in $\mathbb{C}^N$.

The key distinction from existing phase retrieval theory arises from the \emph{non-linearity} of the underlying measurement map prior to loss of phase information, since \eqref{eq:phaless} corresponds to $\bd = | \bF \brho^* |^2$, where $| \cdot |$ denotes element-wise absolute-value operation, and $\bF \in \mathbb{C}^{M \times N}$ is the matrix with $\{\ba_m^H\}_{m = 1}^M$ as its rows.   
Namely, we now assume that our image class of interest resides in a low dimensional manifold $\mathbb{T}$, embedded in the high dimensional space in $\bbC^N$. 
We aim to capture this image manifold $\bbY$ by parameterization over the $\bbC^{N_y}$ in the range of a non-linear transformation $\calH : \bbY \subset \bbC^{N_y} \mapsto \bbT$,
which we refer to as the \emph{decoder}.
This yields a measurement model of the form:
\begin{equation}\label{eq:phaless2}
d_m = | \langle \ba_m , \calH (\by^*) \rangle |^2, \quad \text{for} \ \ m = 1, \ldots M 
\end{equation}
where $\brho^* = \calH (\by^*)$, such that we have a compositely non-linear mapping, $\bd = | \bF \calH (\by^*) |^2$, over the low dimensional parameter space in $\bbY \subset \bbC^{N_y}$. %$N_y$$<$$N$. 

The problem consists of two key elements: \emph{i}) given $\calH$, solving for the underlying, \emph{compressive} representation $\by \in \bbY$ from \eqref{eq:phaless2}, and \emph{ii}) solving for an $\calH$ that sufficiently approximates the image manifold $\bbT \subset \bbC^N$. 
While the first component requires the composite mapping formed by $\bF$ and $\calH$ to demonstrate favorable properties of the parameter space, the other requires constructing one such representation in the first place. Practically, the two can be summarized under an objective using a training set of $\bbD := \{ \brho^*_t, \bd_t \}_{t = 1}^T$, such that
\begin{align}\label{eq:problStat}
    \underset{\{ \by_t \}_{t =1}^T, \calH \in \bbW}{\text{arg min}} &\  \frac{1}{T M} \sum_{t = 1}^T \sum_{m = 1}^M  | \ba_m^H \calH(\by_t) \calH(\by_t)^H \ba_m - d_{t,m} |^2 \\
    s.t. \ \ & \| \calH(\by_t) - \brho^*_t \| \leq \varepsilon, \ \forall t = 1, \ldots T, \nonumber
\end{align}
where $\bbW$ denotes a space of functionals that acts as a constraint in the search of $\calH$, and $\varepsilon > 0$ models the approximation error in the range of the decoder.

Ultimately, despite serving as a conceptual motivation, solving \eqref{eq:problStat} is not meaningful without attaining proper generalization over the image manifold $\bbT$, i.e., any $\brho \in \bbT$ must be reliably reconstructed by recovering its encoded representation from its intensity-only measurements. 
To this end, we enlist a DL-based approach, where $\calH$ is obtained in a \emph{task-driven} manner, such that it facilitates the accurate recovery of elements in $\bbT$ in its range after the iterative procedure of WF is deployed on the lower dimensional, encoded parameter space.
The DL-based approach effectively \emph{splits} the objective in \eqref{eq:problStat} to be minimized over its forward, and back-propagation stages.
Namely, at the forward pass, we pursue a solution
%$\hat{\by} \in \mathbb{C}^{N_y}$
$\hat{\by} \in \bbY$
that minimizes the following objective function for each training sample: %:
\begin{align}\label{eq:calK}
    {\calK}(\by) := \frac{1}{2M}\sum_{m = 1}^M \left[(\ba_m)^H\calH(\by)\calH(\by)^H\ba_m - d_m\right]^2,
\end{align}
whereas in the back-propagation, we use the solution $\hat{\by}$ to formulate the training loss over $\calH \in \bbW$, evaluated over the training set $\bbD$.

Accordingly, our approach incorporates \emph{a deep decoding prior} into the WF framework.
Deep decoding prior refers to the type of compressive representation implemented under our decoding network $\calH$, as it constrains the reconstructed images to its output space.
We opt to use decoding prior in referring to $\calH$ to differentiate our overall approach from works that consider generative priors~\cite{hand2018phase, shamshad2018robust, hand2020compressive, shamshad2020robust}, which do not utilize the phaseless measurements and the corresponding ground truth images for training, and instead use pre-trained GAN generator model
for $\calH$. 
% For existing methods that adopts a lower dimensional encoded image estimation followed by the original unknown image calculation through a decoding network, the parameterized network based prior is instead referred to as generative prior~\cite{hand2018phase, shamshad2018robust, hand2020compressive, shamshad2020robust}.
% Unlike our approach, these methods do not utilize the phaseless measurements and the corresponding ground truth images for training, and instead implements a GAN based training by setting the decoding network as the generator.
%The parameters of $\calH$ are learned through a training stage that uses a set of phaseless measurements and the corresponding ground truth images.
The end-to-end training of $\calH$ transforms an $M/N$ phase retrieval problem into an $M/N_y$ phase retrieval problem akin to the generative prior setting. 
Hence, the composite operator mapping $\by$ to the measurements attains a higher oversampling factor, albeit, at the cost of non-linearity. 
On the other hand, overcoming the $N/N_y$ factor reduction is offloaded to the approximation capability of the decoder. In accordance, we are interested in the theoretical justifications of recovering a true representation $\by^* \in \bbY$, for a given $\calH$ such that $\calH(\by^*) = \brho^*$, using the iterative scheme of WF.
% Hence, the major distinction of our work arises in the theoretical study,  
% which we tailor for the case of an arbitrary pairing of $\calF$, and $\calH$, without assumptions on the underlying network parameterization.  
% For practical purposes, we focus our attention to problems where
% $\bF$ does not necessarily have zero-mean i.i.d. Gaussian property and the number of total measurements
% $M < N$, such that our architecture can overcome the sample starved nature of the inverse problem. 

%It is important to note that generative priors for phase retrieval are not first studied in this work. 

%While the analysis in this paper admits a natural comparison, there is an important distinction in our overall approach with respect to those pursued in \cite{hand2018phase, hand2020compressive}.
%Namely, our architecture is based on the observation that $\calH$, and the measurement map $\calF$ need to satisfy sufficient conditions for exact recovery \emph{in composition with each other}.
Unlike \cite{hand2018phase, hand2020compressive}, our architecture is based on the observation that $\calH$ and the measurement map $\calF$ need to satisfy certain sufficient conditions for exact recovery \emph{in composition with each other}.
This serves as our key motivation to utilize end-to-end training, as it directly entangles the presence of the generator with the measurement map of the problem, hence drives the training procedure to enhance the feasibility of the phase retrieval problem over $\bbT$.
%Guarantees on finding such an $\calH$, or the impact of approximation and generalization errors encountered in the training of $\calH$, however, are not in the scope of this paper.
However, guarantees on finding such an $\calH$, or the impact of approximation and generalization errors encountered in the training of $\calH$ are beyond the scope of this paper.

\section{Network Architecture}
\label{sec:DN}
% Response to reviewer 3 question 1
\begin{figure}[!t]
\centering
\subfloat[]{\includegraphics[width=0.99\columnwidth]{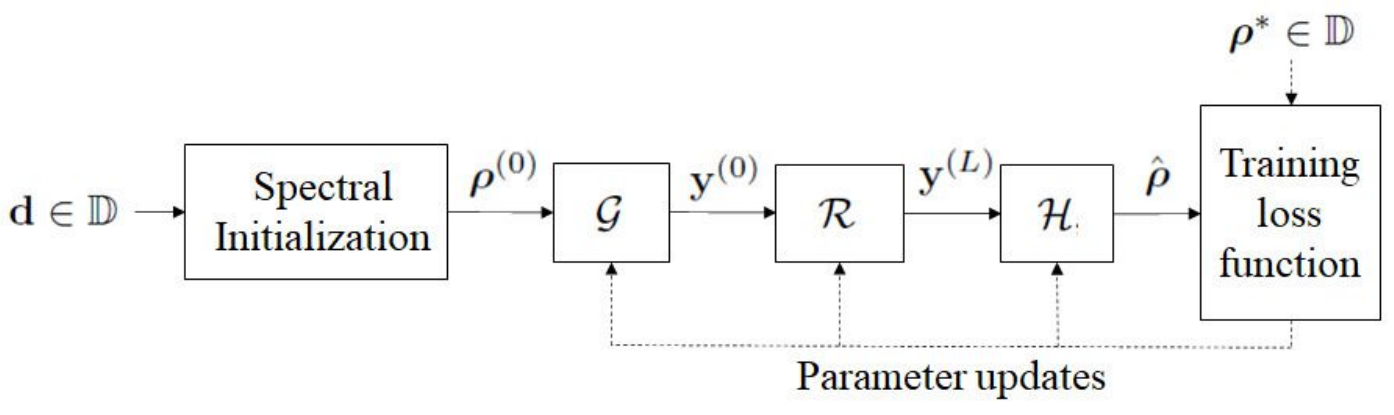}%
\label{fig_training_phase}}
\\
\subfloat[]{\includegraphics[width=0.9\columnwidth]{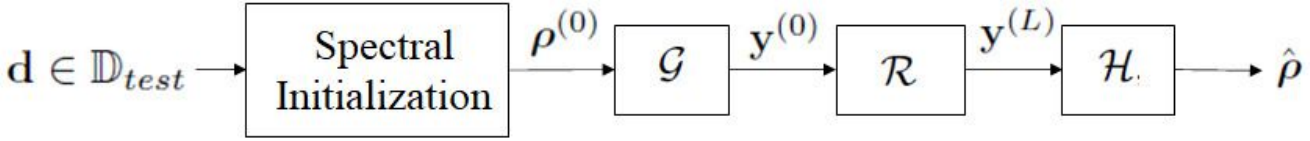}%
\label{fig_test_phase}}
\caption{Schematic diagrams showing the (a) training and (b) inversion processes.}
\label{fig:training_test_phase}
\end{figure}
As our phaseless imaging approach recovers an encoded version of the unknown image through WF updates, our first challenge is to design an efficient initialization scheme for the encoded image space.
To this end, we utilize the spectral initialization step, and learn a non-linear transformation from the set of initial estimates $\bbS\subseteq\bbC^N$ to the set of encoded initial estimates $\bbY_0\subseteq\bbC^{N_y}$, $\calG:\bbS\mapsto\bbY_0$, to map the spectral estimate $\brho^{(0)}\in\bbS$ to an initial estimate $\by^{(0)}\in\bbY_0$ in the encoded image space.
We refer to $\calG$ as the \emph{encoding network}.
We use an $L$-layer RNN, $\calR$, to generate the final estimated encoded image $\calR(\by^{(0)}) = \by^{(L)} = \hat{\by}$, where $\by^{(L)}$ is the output of the $L^{th}$ layer of the RNN.
We denote the set of encoded image values generated at the $l^{th}$ RNN layer by $\bbY_l\subset\bbC^{N_y}$ for $l\in[L - 1]$ and define $\bbY$ as $\bbY = \bigcup_{l = 0}^L\bbY_l \subset \bbC^{N_y}$.
Thus, $\calR:\bbY_0\mapsto\bbY$.
Finally, the output from the RNN is decoded back by $\calH:\bbY\mapsto\bbT$ to generate the estimated image $\hat{\brho}\in\bbT$.
Under exact recovery, $\hat{\brho} = \brho^*$.

In our network architecture, the encoder, RNN and the decoder are jointly learned through supervised training.
%The training dataset $\bbD$ is composed of different ground truth or correct images and corresponding intensity measurement vectors, while a new sample on the test set $\bbD_{test}$ only requires the intensity measurement vector which is applied to the trained imaging network to produce the estimated image.
The training dataset $\bbD$ is composed of different ground truth or correct images and the corresponding intensity measurement vectors.
On the other hand, each new sample from the test set, $\bbD_{test}$, only requires the intensity measurement vector which is then applied to the trained imaging network to produce the estimated image.
A block diagram of the training and inversion phases of our algorithm are shown in Fig.~\ref{fig_training_phase} and~\ref{fig_test_phase}, respectively.

%\vspace{-0.2in}
\subsection{RNN Structure from the Iterative WF Updates}
\label{subsec:objective}
Starting from the initial encoded representation $\by^{(0)}$, iterative WF update at the $l^{th}$ stage is calculated as follows:
\begin{align}
    \label{eq:yl_y_l_minus_1} \by^{(l)} & = \by^{(l - 1)} - \frac{\gamma_{l}}{\|\by^{(0)}\|^2}\nabla\calK(\by)_{\by = \by^{(l - 1)}}.
\end{align}
%$- \frac{\gamma_{1}}{\|\by^{(0)}\|^2}\nabla\calK(\by)_{\by = \by^{(0)}}$ \\
%\\
%$- \frac{\gamma_{2}}{\|\by^{(0)}\|^2}\nabla\calK(\by)_{\by = \by^{(1)}}$ \\
%\\
%$- \frac{\gamma_{L}}{\|\by^{(0)}\|^2}\nabla\calK(\by)_{\by = \by^{(L - 1)}}$ \\
%\\
$\by^{(l)}$ denotes the output at the $l^{th}$ iteration and $\gamma_{l}$ is a positive real-valued constant associated with the learning rate for the $l^{th}$ update.
The WF update in~\eqref{eq:yl_y_l_minus_1} results in $\by^{(l)}$ that reduces the data fidelity term $\calK(.)$ compared to $\by^{(l - 1)}$.
The gradient of $\calK(\by)$ with respect to $\by\in\bbC^{N_y}$ is given by
\begin{align}
    \label{eq:nablaJ} \nabla \calK(\by) & = \left(\frac{\partial\calK}{\partial\by}\right)^H = \frac{1}{M}\nabla\calH(\by)\calF^H \left(\be\right)\calH(\by),
\end{align}
where $\be = \begin{bmatrix}e_1 & \cdots & e_M\end{bmatrix}$ and
$e_m\in\mathbbm{R}$, for $m\in[M]$, is defined as $e_m := \ba^H_m\calH(\by)\calH(\by)^H\ba_m - d_m$.

%\subsection{RNN Structure}
\begin{figure*}
\centering
\includegraphics[width=1.2\columnwidth]{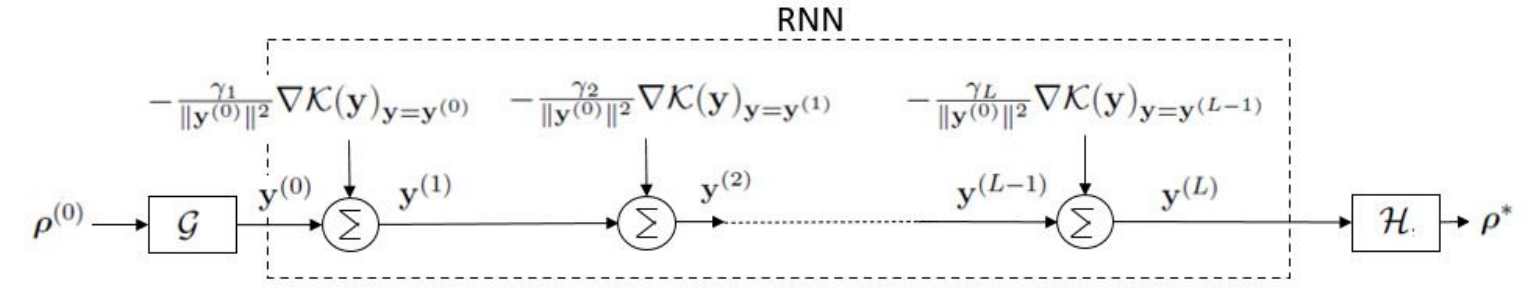}
\caption{Block diagram of DL-based phaseless imaging network.}
\label{fig:network_WF_DL}
\end{figure*}
Instead of continuing to update the encoded representation until convergence, we consider a fixed number of iterative update steps over which the algorithm is promoted to recover accurate solutions over certain conditions on the network parameters.
Similar to~\cite{mason2017_radar, yonel2017deep, yonel2019deep, kazemi2019deep, kazemi2020deep}, $L$ number of subsequent update steps from~\eqref{eq:yl_y_l_minus_1} are mapped into the stages of an $L$-layer RNN.
The resulting network is referred to as an RNN due to the recursive nature of its architecture.
Each RNN layer essentially carries out a WF update on the encoded representation.
The learning rate related constants, $\{\gamma_l\}_{l = 1}^L$, are all trainable parameters of the RNN whose values are learned during the training process.
The overall diagram of our DL-based inversion network for phaseless imaging is shown in Fig.~\ref{fig:network_WF_DL}.

\vspace{-0.13in}
\subsection{Lipschitz Constants of the DL Networks}
\label{subsec:lipschitz_constant_def}
The encoding and decoding prior networks are trained with the goal of recovering images from their low dimensional representations at a faster convergence rate compared to the WF algorithm and extending the recovery guarantees of~\cite{yonel2020deterministic} to challenging problem settings with $M$$<$$N$ for arbitrary forward maps.
In order to achieve the above two objectives, we characterize the impact of the encoder and decoder networks on recovery guarantees through their Lipschitz constants, rather than the explicit architectures of the networks or any probabilistic properties on their learned parameter values.
Appropriate ranges of these constants that are associated with improved recovery performance compared to the WF algorithm is presented in Section~\ref{sec:recovery_guarantees}.

The Lipschitz constant of $\calG$ is defined as the smallest value of $\mu_{\calG}\in\mathbbm{R}^+$ satisfying~\cite{miyato2018spectral}
\begin{align}
    \label{eq:M_G_Lip} \frac{\|\calG(\brho^{(0)}_1) - \calG(\brho^{(0)}_2)\|}{\|\brho^{(0)}_1 - \brho^{(0)}_2 \|} \leq \mu_{\calG},
\end{align}
$\forall\brho^{(0)}_1,\brho^{(0)}_2\in\bbS$, and is given by
\begin{align}
    \mu_{\calG} = \sup_{\brho^{(0)}\in\bbS} \sigma(\nabla\calG(\brho^{(0)})),
\end{align}
where $\sigma(\bA)$ denotes the largest singular value of $\bA$.

Suppose $\by^{(0)} = \calG(\brho^{(0)})$ is expressed as a function of a set of weight matrices $\bU_j\in\bbC^{P_j \times P_{j - 1}}$'s, bias vectors $\bb_j\in\bbC^{P_j}$'s, and non-linear functions $f_j(.)$'s, where $j\in[J]$, with $P_0 = N$ and $P_J = N_y$.
The output at the $j^{th}$ step, denoted by $\breve{\by}_j\in\bbC^{P_j}$, relates to its input $\breve{\by}_{j - 1}\in\bbC^{P_{j - 1}}$ as follows:
\begin{align}
    \breve{\by}_j & = f_j(\bU_j\breve{\by}_{j - 1} + \bb_j),
\end{align}
where $\breve{\by}_J = \by^{(0)}$ and $\breve{\by}_0 = \brho^{(0)}$.
The activation function $f_j(.)$ operates componentwise on the corresponding vector inputs.
For the choice of $f_j(.)$ as the rectified linear unit (ReLU), Lipschitz constant of $f_j(.)$ is upper-bounded by $1$
and thus, the Lipschitz constant of $\calG(.)$ for this case is upper bounded by $\prod_{j = 1}^J\sigma(\bU_j)$.
Similarly, the Lipschitz constant of $\calH$ is calculated as $\mu_{\calH} = \sup_{\by\in\bbY} \sigma(\nabla\calH(\by))$.

\section{Theoretical Foundations}
\label{sec:convergence}
In order to justify the effectiveness of our phaseless imaging approach, we provide a theoretical foundation towards attaining exact recovery for a given, arbitrary lifted forward map $\mathcal{F}$, and an image manifold that is assumed to be characterized in the range of a non-linear operator $\calH$.

In terms of the technical content of the exact recovery theory, our work differs from prior works in \cite{hand2018phase, hand2020compressive} in two notable ways.
The first pertains to the conditions exerted on $\calH$.
In \cite{hand2018phase, hand2020compressive}, a pre-determined architecture is assumed for $\calH$ and a concentration property on the network weights is used to facilitate recovery guarantees by a sufficient condition on $\calF$. 
We do not deploy an architecture specification for $\calH$, and only assume a \emph{local} concentration-type property instead. 
The second pertains to the sufficient condition on the measurement map $\calF$, where
\cite{hand2018phase, hand2020compressive} use a \emph{range restricted} RIP-type property on the underlying linear sampling vectors $\bF$, while our sufficient condition enforces a range restriction on the sufficient condition introduced in \cite{yonel2020deterministic}. 
The major distinction arises in the domain of the accompanying concentration property, where our work evades the requirement of validity over pair-wise differences. 
%This is particularly pursued in our approach since an analysis on the lifted domain reveals that the pair-wise difference equivalently implies the concentration property to hold over the affine hull of elements in the range of $\calH$. 
%Hence with an arbitrary decoder, deploying the concentration over differences could prohibitively expand the intrinsic dimension of the manifold over which the range restriction must apply, and thereby limit the scope of theoretical gains for the WF framework. 

%Our trade-off to avoid this arises in our $\calH$ condition, which despite its advantage of being a local property, is more challenging to verify for a given problem setting. 

%\vspace{-.12in}
\subsection{Approach}
To understand the feasibility of such a theoretical justification under an arbitrary pairing of $\calF$ and $\calH$, it is useful to initially revisit the standard phase retrieval problem in the statistical setting of Gaussian sampling. 
Indeed, theoretical results in phase retrieval literature commonly consider this case, where $\ba_m$ are i.i.d. complex Gaussian distributed, with which the recovery from intensity-only measurements is achieved with overwhelming probability \cite{candes2015phase_IEEE, yuan2019SWF}. 
%Therefore, it is especially useful here to note how the standard statistical analysis of phase retrieval motivates, and even facilitates our approach in this paper. 

Using the property that Gaussian distribution is invariant under unitary transformations, the classically studied statistical phase retrieval problem under the Gaussian sampling model is equivalent to a $1D$-Fourier phase retrieval problem under a \emph{linear Gaussian generator}:
\begin{equation}
\bd = | \mathbf{A} \bs |^2 = | \mathbf{F}_M \mathbf{F}^H_M \mathbf{A} \bs |^2  = | \mathbf{F}_M \tilde{\mathbf{A}} \bs |^2 = |  \mathbf{F}_M \bt |^2,
\end{equation} 
%where $\bt = \tilde{\mathbf{A}} \bs$, $\bs \in \mathbb{C}^{N_s}$, $\bA\in\bbC^{M \times {N_s}}$ has all i.i.d. Gaussian distributed components, $\tilde{\bA}\in\bbC^{M\times {N_s}}$ and $\bF_M\in\bbC^{M \times M}$ is the discrete 1D-Fourier matrix.
where $\bA\in\bbC^{M \times {N_s}}$ has all i.i.d. Gaussian distributed components, $\bs \in \mathbb{C}^{N_s}$, $\bF_M\in\bbC^{M \times M}$ is the discrete 1D-Fourier matrix, $\tilde{\bA} = \bF^H_M\bA$ and $\bt = \tilde{\mathbf{A}} \bs$.
In other words, standard statistical theory states that a signal $\bt \in \bbC^M$ realized from a Gaussian generative prior can provably be recovered from its $M-$point periodogram, \emph{if the intrinsic dimension $N_s$ is sufficiently low}. 

Exact recovery guarantees in the statistical setting highlight the power of having \emph{a generative prior} at inference, albeit disguised as the measurement model due to spherical symmetry of the Gaussian distribution. 
This is because the $1D$-Fourier phase retrieval problem is well-known to be {severely} ill-posed: it admits at best $2^M$ non-equivalent solutions in the feasible set of $\bd = |  \mathbf{F}_M \bt |^2$ for an arbitrary $\bt \in \bbC^M$ \cite{jaganathan2017sparse}. 
The linear Gaussian generator alleviates the fundamental limitations in this regard, and provides a guarantee directly on the lower, $N_s$-dimensional encoded space, given that $\bt = \tilde{\mathbf{A}} \bs$. 
%It furthermore alleviates presence of equivalent solutions under translation and conjugate flips that naturally arise in $1D$-Fourier phase retrieval, except for the invariance to global phase factor. 

Ultimately, our work aims at generalizing this phenomenon by: \emph{i}) using the deterministic setting of \cite{yonel2020deterministic} to account for an arbitrary $\calF$, and \emph{ii}) incorporating the presence of a non-linear $\calH$ that can capture the signal domain. 
To this end, we quantify the impact of operating in the lower dimensional encoded domain on the existing deterministic guarantees of \cite{yonel2020deterministic} by specifying conditions on $\calH$ within the sufficient conditions, and identifying the numerical impact of the generator, i.e. our decoder, on convergence guarantees.

% \subsection{Contribution}

% \vspace{-.15in}

\subsection{Background}
\label{subsec:background}
\paragraph{Exact phase retrieval theory}
%\begin{equation}\label{eq:sufcon}
%\| \frac{1}{M} \calF^H \calF (\brho \brho^H) - (\brho \brho^H + \| \brho \|^2 \mathbf{I}) \| \leq \delta \| \brho \|^2,
%\end{equation}
For universality of exact recovery described in \cite{yonel2020deterministic},
%the following concentration bound for the lifted model $\calF$ is a sufficient condition if it holds over all $\brho \in \mathbb{C}^N$:
the concentration bound in~\eqref{eq:cond_phaseless} is a sufficient condition if it holds over all $\brho \in \mathbb{C}^N$ with $\delta < 0.184$. The terms involved in the concentration bound are relevant for the initial estimate to land within a basin of attraction around the true solution $\brho^*$, guaranteeing that:
\begin{equation}\label{eq:rip2}
\frac{1}{M} \| \calF(\brho\brho^H - \brho^*{\brho^*}^H) \|^2 \geq (1 - \delta^{WF}_1) \| \brho\brho^H - \brho^*{\brho^*}^H\|^2_F.
\end{equation}
Let $\bbN_{\epsilon}(\brho^*)$ denote this $\epsilon$-neighborhood of $\brho^*$ obtained from the sufficient condition in \eqref{eq:cond_phaseless}, and $\epsilon\in\mathbbm{R}^+$ and $\delta^{WF}_1\in\mathbbm{R}^+$ are both functions of $\delta$. In the end, \eqref{eq:rip2} facilitates the restricted strong convexity around the solution $\brho^*$ if $\delta^{WF}_1 < 1$, which \eqref{eq:cond_phaseless} guarantees an initial estimate to land in for any $\brho^*$.
The way to establish \eqref{eq:cond_phaseless} as a sufficient condition is through deriving \eqref{eq:rip2} as a deterministic consequence, and showing that the requirement of $\delta^{WF}_1 < 1$ implies $\delta < 0.184$ in the sufficient condition.

\paragraph{Range restriction with $\calH$}
The stringency of the sufficient condition in \eqref{eq:cond_phaseless} arises through its universality over all $\brho \in \bbC^N$ and the corresponding requirement for $\calF^H\calF$ to be well-conditioned over the manifold of rank-1 PSD matrices.
On the other hand, with the presence of $\calH$, the range of the decoder incorporates an additional constraint, and hence, creates a smaller feasible set for the problem over which $\calF^H \calF$ should be well-conditioned instead.

An intuitive incorporation of the image manifold in the recovery guarantees therefore is by restricting the parameter space of the original concentration bound, where the lifted normal operator is to satisfy, for all $\by \in \bbY \subset \bbC^{N_y}$:
\begin{align}
%\label{eq:sufconH}
\bigg\| \frac{1}{M} \calF^H \calF(\calH(\by) \calH(\by)^H) &- \left(\calH(\by)\calH(\by)^H + \| \calH(\by) \|^2 \mathbf{I} \right) \bigg\| \nonumber \\
\label{eq:cond_phaseless2} &\leq \delta \| \calH(\by) \|^2. 
\end{align}
%where, the spectral search is characterized over a parameter domain $\bbY \subset \bbR^{N_y}$, as:
%\begin{equation}
%\| \frac{1}{M} \calF^H\calF (\brho \brho^H) - (\brho \brho^H + \| \brho \|^2 \mathbf{I}) \|_H := \underset{ \bv \in \bbY}{\text{max}} \ {\bigg| \frac{\calH(\bv)^H}{{\| \calH(\bv) \| }} \left( \frac{1}{M} \calF^H\calF (\brho \brho^H) - (\brho \brho^H + \| \brho \|^2 \mathbf{I}) \right)\frac{\calH(\bv)}{{\| \calH(\bv) \|}} \bigg|}. 
%\end{equation}
\eqref{eq:cond_phaseless2} shows that the concentration property of $\calF^H\calF$ is now required to hold over only the image manifold captured by the range of $\calH$.

%However,
To fully understand the usefulness of this condition, we must establish its corresponding restricted strong convexity property over the image manifold. 
Namely, for a $\brho = \calH(\by)$, and a ground truth $\brho^* = \calH(\by^*)$, does \eqref{eq:cond_phaseless2} with a sufficiently small $\delta$ imply the property in \eqref{eq:rip2} with $\delta^{WF}_1$ replaced by ${\delta_1} < 1$ in some locality in the \emph{parameter space}, i.e. $\by \in \bbN_{\epsilon_{\by}}(\by^*)$?
Here, $\bbN_{\epsilon_{\by}}(\by^*)$ denotes the $\epsilon_{\by}$-neighborhood of $\by^*$ and $\epsilon_{\by}\in\mathbbm{R}^+$.

\paragraph{The limitation for sufficiency}

In order to verify whether the restricted concentration property is sufficient, we consider first the linear perturbation operator $\Delta$ that maps $\brho\brho^H$ to $\bbC^{N\times N}$ over all $\brho$ vectors that are reproducible by the decoder from $\by\in\bbY$, as
%$\Delta(\brho\brho^H) = \frac{1}{M}\calF^H\calF(\brho\brho^H) - (\brho\brho^H + \|\brho\|^2\bI)$.
\begin{align}
\Delta(\brho\brho^H) = \frac{1}{M}\calF^H\calF(\brho\brho^H) - (\brho\brho^H + \|\brho\|^2\bI).
\end{align}
Similarly to the steps of the proof of Lemma III.4 in~\cite{yonel2020deterministic}, it is easy to verify that the validity of the restricted strong convexity condition through~\eqref{eq:rip2} hinges on the concentration property of a perturbation operator $\Delta$, over the pairwise differences,
\begin{equation}\label{eq:DeltaB}
%\bigg| \langle \Delta(\tilde{\bP} - \tilde{\bP}^*), \tilde{\bP} - \tilde{\bP}^* \rangle_F \bigg| \leq \delta_1 \ \| \tilde{\bP} - \tilde{\bP}^* \|^2_F,
\bigg| \langle \Delta(\bE_{\brho}), \bE_{\brho} \rangle_F \bigg| \leq \delta_1 \ \| \bE_{\brho} \|^2_F,
\end{equation}
where $\bE_{\brho}$ is defined as
%$\bE_{\brho}$ is defined in Table~\ref{table:notations},
\begin{align}
    \label{eq:E_rho_def} \bE_{\brho} & = \calH(\by)\calH(\by)^H - \brho^*{\brho^*}^H,
\end{align}
and that~\eqref{eq:DeltaB} is guaranteed to hold with $\delta_1 < 1$ when \eqref{eq:cond_phaseless2} is satisfied.
%Now, if \eqref{eq:cond_phaseless2} is valid over all $\brho \in \bbC^N$ with $\delta < 0.184$,~\eqref{eq:DeltaB} is guaranteed to hold as shown in~\cite{yonel2020deterministic}.
%The requirement for~\eqref{eq:cond_phaseless} to be true universally over $\brho$ for~\eqref{eq:DeltaB} to be valid stems from the fact that,
As we know, $| \langle \Delta(\bE_{\brho}), \bE_{\brho} \rangle_F |$ can be upper bounded by $\sqrt{2} \| \bE_{\brho} \|_F \| \Delta(\bE_{\brho}) \|$.
%Moreover, $\| \Delta(\bE_{\brho}) \|$ can be upper bounded by $ \sum_{i =1}^2 |\lambda_i| \| \Delta(\bv_i \bv_i^H) \| $ using the two eigenvalue-eigenvector pairs of $\bE_{\brho}$.
Moreover, $\| \Delta(\bE_{\brho}) \|$ can be upper bounded by $ \sum_{i =1}^2 |\lambda_i| \| \Delta(\bv_i \bv_i^H) \| $, where $\lambda_i\in\mathbbm{R}$ and $\bv_i\in\bbC^N$ are the $i^{th}$ eigenvalue and the corresponding eigenvector of $\bE_{\brho}$, respectively, for $i\in\{1, 2\}$.

Consequently, to promote \eqref{eq:cond_phaseless2} as a sufficient condition for our approach, $\bv_i$'s need to be reproducible by the decoding network $\calH$, such that $\| \Delta(\bv_i \bv_i^H) \|$ terms are controlled. 
For an arbitrary pair of $\brho, \brho^*$, the error $\bE_{\brho}$ for the corresponding lifted Kronecker signals admit a direct spectral analysis, such that the $\bv_i$ are formed by \emph{affine combinations} in the range of $\calH$ (see Appendix~\ref{app:eigenvector_lifted_error}). 
This presents the key limitation for the sufficiency of a range restriction by the generator $\calH$, unless the domain of concentration is expanded to include the union of pair-wise affine hulls of the elements in the range of $\calH$.

%\vspace{-0.06in}
\subsection{Conditioning $\calH$}
\label{subsec:condition_H}
\paragraph{Sufficiency with linearity}
It is clear that for a linear $\calH$, \eqref{eq:cond_phaseless2} is a sufficient condition, as the affine combinations are reproducible by $\calH$ via an affine combination in the $\bbY$-domain.
However, for a general non-linear $\calH$, the eigenvectors $\bv_i$ do not necessarily admit such a representation.
We instead are interested in casting~\eqref{eq:cond_phaseless2} as a sufficient condition through specific conditions on an arbitrary, non-linear $\calH$.
To this end, we first identify the properties that facilitate our objective when using a linear decoder model, towards obtaining an intuitive extension onto the general case. 
%
%We start by denoting the lifted encoded images $\by\by^H$ and $\by^*{\by^*}^H$ by $\tilde{\bY}$ and $\tilde{\bY}^*$, respectively.
The assumption that $\calH$ is a linear map, i.e., $\calH(\by) = \bH\by$ where $\bH\in\bbC^{N\times N_y}$, leads to
\begin{equation}
\label{eq:Delta_Ep1} \| \Delta(\bE_{\brho}) \| = \| \Delta(\bH(\by\by^H - \by^*{\by^*}^H)\bH^H) \|.
\end{equation}
Now, $\by\by^H - \by^*{\by^*}^H$ can be represented by its eigenvalues and eigenvectors as $\sum_{i = 1}^2 \lambda_i \bu_i \bu_i^H$, where $\lambda_i\in\mathbbm{R}$ and $\bu_i\in\bbC^{N_y}$ are eigenvalues and the corresponding eigenvectors for $i=1,2$.
$\bu_1$ and $\bu_2$ are constructed from affine combinations of $\by$, $\by^*$ per spectral analysis presented in Appendix~\ref{app:eigenvector_lifted_error}.

\paragraph{Requirements for the general case}
We now assume that~\eqref{eq:cond_phaseless2} holds for all $\calH(\by)$, $\by\in\bbR^{N_y}$ for convenience.
Therefore, since $\| \Delta(\bE_{\brho}) \|$ can be upper bounded by $\sum_{i = 1}^2 |\lambda_i | \| \Delta( (\bH \bu_i) ( \bH \bu_i)^H ) \|$ when $\calH$ is linear, then using the relation in~\eqref{eq:cond_phaseless2}, we have from~\eqref{eq:Delta_Ep1},
\begin{equation}
\| \Delta(\bE_{\brho}) \| \leq \delta \sum_{i = 1}^2 |\lambda_i |  \| \bH \bu_i \|^2.
\end{equation}
Here, the first crucial property of $\calH$ arises, as the Lipschitz continuity of $\calH$, along with the assumption that $\calH(\bzero) = \bzero$, which yields the following upper bound for linear $\calH$:
%\begin{equation}
%\label{eq:Delta_Ep3} \| \Delta(\bE_{\brho}) \| \leq \sqrt{2}\mu^2_{\calH}\delta\| \by\by^H - \by^*{\by^*}^H \|_F,
%\end{equation}
%As will be seen, a similar but arguably more useful way to approach a final bound is by using the spectral norm, where:
\begin{equation}\label{eq:Delta_Ep3}
\begin{split}
\| \Delta(\bE_{\brho}) \| &\leq \delta \ \text{max}(|\lambda_1|, |\lambda_2|) \sum_{i = 1}^2 \| \bH \bu_i \|^2 \\
&\leq 2 \delta \mu^2_{\calH}  \| \by \by^H - \by^* {\by^*}^H \|. 
\end{split}
\end{equation}

Although this bound is not the tightest, it is of interest because, it gives a blueprint that befits generalization to the non-linear setting. 
Mainly, in the linear setting with a \emph{spectrally well-conditioned} generator, we can obtain a universal constant ($2$ in this case) that upper bounds this perturbation operator only through the leading eigenvalue-eigenvector pair, since $ \| \mathbf{H} \bu_i \|^2  \leq \mu^2_{\calH}$ by the Lipschitz property of $\mathbf{H}$.
The key observation is that via an encoder-decoder scheme that enforces the model to operate in an $\epsilon_{\by}$-neighborhood in the parameter space, such a condition as in \eqref{eq:Delta_Ep3} is only needed to be satisfied \emph{locally} over $\bbY$, in lieu of the global property demonstrated by a linear $\calH$. 

\paragraph{Extension via a local property}
For a general non-linear decoder, we instead perform this analysis using an operator $\tilde{\calH}: \bbC^{N_y \times N_y} \mapsto \bbC^{N \times N}$, which is defined as follows:
\begin{enumerate}
\item Given input $\bZ \in \bbC^{N_y \times N_y}$, extract the leading eigenvalue-eigenvector pair: $\lambda_0$, $\bu_0$.
\item Apply $\calH$ on $\sqrt{\lambda_0} \bu_0$ to calculate $\calH( \sqrt{\lambda_0} \bu_0 )$. 
\item Get output $\tilde{\calH}(\bZ)$ by lifting: $\calH ( \sqrt{\lambda_0} \bu_0 ) \calH ( \sqrt{\lambda_0} \bu_0 )^H$.
\end{enumerate}
Under this definition, our desired bound on the perturbation operator for a generic $\calH$ as can be written as
\begin{equation}\label{eq:finalCond}
\| \Delta(\tilde{\calH}(\by\by^H) - \tilde{\calH}(\by^*{\by^*}^H)) \| \leq \hat{\delta} \| \by\by^H - \by^*{\by^*}^H \|_F,
\end{equation} 
which, incorporating the locality property on the encoded domain, should hold $\forall \by^* \in \bbY$, $\by \in \bbN_{\epsilon_{\by}}(\by^*)$.
%For the PSD rank-1 inputs $\by\by^H$ and $\by^*{\by^*}^H$, $\tilde{\calH}(\by\by^H)$ and $\tilde{\calH}(\by^*{\by^*}^H)$ are essentially equal to $\calH(\by)\calH(\by)^H$ and $\brho^*{\brho^*}^H$, respectively.
For the PSD rank-1 inputs $\by\by^H$ and $\by^*{\by^*}^H$, $\tilde{\calH}(\by\by^H)$ and $\tilde{\calH}(\by^*{\by^*}^H)$ are equal to $\calH(\by)\calH(\by)^H$ and $\brho^*{\brho^*}^H$, respectively.
Moreover, we are not necessarily interested in this bound globally as obtained for the linear case in~\eqref{eq:Delta_Ep3}, but only locally, since that is sufficient for our guarantees.

This leads us to the following property on $\calH$: for the definition of $\tilde{\calH}$ presented above, for a given $\calF$, the following inequality is satisfied $\forall \by^* \in \bbY$, $\by \in \bbN_{\epsilon_{\by}}(\by^*)$:
\begin{equation}\label{eq:Hcondition}
\| \Delta(\tilde{\calH}(\by\by^H) -  \tilde{\calH}(\by^*{\by^*}^H)) \| \leq \omega(\epsilon_{\by}) \| \Delta(\tilde{\calH}(\by\by^H - \by^*{\by^*}^H)) \|,
\end{equation}
where $\omega(\epsilon_{\by})$ is a positive real-valued constant.
We omit the term in the bracket for future references to this constant, and its dependency on $\epsilon_{\by}$ should be understood.
Under this condition, it is straightforward to verify that the desired bound in~\eqref{eq:finalCond} is satisfied with a constant
\begin{align}
    \hat{\delta} & = \omega\mu^2_{\calH}\delta,
\end{align}
as shown in Appendix~\ref{app:derivation_26_27}.

\section{Recovery Guarantees}
\label{sec:recovery_guarantees}
In this section, we present the exact recovery guarantee for our end-to-end DL-based algorithm.
This result is built upon the theoretical foundations presented in Section \ref{sec:convergence}. 
We elaborate on the numerical implications of our result, and discuss its key outcomes in quantifying the impact and limitations of incorporating a decoding prior.
Finally, we consider the practical implications of our result for implementation purposes.

\subsection{Main Result}
Let $\mathrm{dist}(\by^{(0)}, \by^*)$ be the distance between $\by^{(0)}$ and $\by^*$ defined as follows:
\begin{align}
    \label{eq:dist_definition} \mathrm{dist}(\by^{(0)}, \by^*) & = \min_{\phi\in[0, \pi]}\|\by^{(0)} - \by^*\rme^{i\phi}\|.
\end{align}
Our main result concerns the convergence of the WF iterates to the true representation in the encoded space via our unrolled, encoder-decoder network architecture.
Let $\mu_{\calG}$, $\mu_{\calR}$ and $\mu_{\calH}$ be the Lipschitz constants of $\calG$, $\calR$ and $\calH$, respectively.
%We define the following quantities:
%\begin{align}
%    c(\delta, \epsilon_{\by}) & := (1 + \epsilon_{\by})(2 + \epsilon_{\by})(2 + \omega\delta), \\
%    h(\delta, \epsilon_{\by}) & := \tilde{\mu}^4_{\calH}(1 - \delta_1)(1 - \epsilon_{\brho})(2 - \epsilon_{\brho}), \\
%    \delta_1 & := \frac{\sqrt{2}\hat{\delta}(2 + \epsilon_{\brho})(2 + \epsilon_{\by})}{\tilde{\mu}^2_{\calH}(1 - \epsilon_{\brho})(2 - \epsilon_{\brho})}, \\
%    \epsilon_{\brho} & := \mu_{\calG}\mu_{\calR}\mu_{\calH} (1 + \epsilon) \epsilon_{\by}.
%\end{align}
%We assume that there exists $\tilde{\mu}_{\calH} > 0$ and $\mu_{\calH} > 0$ such that
%\begin{align}\label{eq:LipH}
%    \tilde{\mu}_{\calH} \leq \frac{\|\calH(\by_1) - \calH(\by_2)\|}{\|\by_1 - \by_2\|} \leq  \mu_{\calH}, 
%\end{align}
%for all $\by_1, \by_2 \in \mathbb{Y}$.
We assume that there exists $\tilde{\mu}_{\calH} > 0$ and $\mu_{\calH} > 0$ such that
\begin{align}\label{eq:LipH}
    \tilde{\mu}_{\calH} \leq \frac{\|\calH(\by_1) - \calH(\by_2)\|}{\|\by_1 - \by_2\|} \leq  \mu_{\calH}, 
\end{align}
for all $\by_1, \by_2 \in \mathbb{Y}$.

We define $\epsilon_{\by}$, which we introduced in Subsection~\ref{subsec:background}, as $\epsilon_{\by} := \chi\mu_{\calH}\epsilon$.
$\chi$ is a positive real-valued constant and $\epsilon$ is defined in (21) in~\cite{yonel2020deterministic}.
$\chi$ is lower bounded by
\begin{align}
 \label{eq:k_ub_lb} \chi & \geq \max\left[b_1(\mu_{\calG}, \mu_{\calH}, \epsilon), b_2(\mu_{\calG}, \mu_{\calH}, \mu_{\calR}, \epsilon)\right],
\end{align}
and $b_1(\mu_{\calG}, \mu_{\calH}, \epsilon)$ and $b_2(\mu_{\calG}, \mu_{\calH}, \mu_{\calR}, \epsilon)$ are defined in Appendix~\ref{app:initialization} along with the detailed derivation of~\eqref{eq:k_ub_lb}.
%The lower bound on $\chi$ indicates how small $\epsilon_{\by}$ can be.
We also define the following quantities:
%\begin{align}
%    c(\delta, \epsilon_{\by}) & := (1 + \epsilon_{\by})(2 + \epsilon_{\by})(2 + \omega\delta), \\
%    h(\delta, \epsilon_{\by}) & := \tilde{\mu}^4_{\calH}(1 - \delta_1)(1 - \epsilon_{\brho})(2 - \epsilon_{\brho}), \\
%    \delta_1 & := \frac{\sqrt{2}\hat{\delta}(2 + \epsilon_{\brho})(2 + \epsilon_{\by})}{\tilde{\mu}^2_{\calH}(1 - \epsilon_{\brho})(2 - \epsilon_{\brho})}, \\
%    \epsilon_{\brho} & := \mu_{\calG}\mu_{\calR}\mu_{\calH} (1 + \epsilon) \epsilon_{\by}.
%\end{align}
\begin{align}
    c(\delta, \epsilon_{\by}) & := (1 + \epsilon_{\by})(2 + \epsilon_{\by})(2 + \omega\delta), \\
    \epsilon_{\brho} & := \mu_{\calG}\mu_{\calR}\mu_{\calH} (1 + \epsilon) \epsilon_{\by}, \\
    \delta_1 & := \frac{\sqrt{2}\hat{\delta}(2 + \epsilon_{\brho})(2 + \epsilon_{\by})}{\tilde{\mu}^2_{\calH}(1 - \epsilon_{\brho})(2 - \epsilon_{\brho})}, \\
    h(\delta, \epsilon_{\by}) & := \tilde{\mu}^4_{\calH}(1 - \delta_1)(1 - \epsilon_{\brho})(2 - \epsilon_{\brho}).
\end{align}

\begin{theorem}
\label{theorem:convergence}
Suppose the conditions in~\eqref{eq:cond_phaseless2} and~\eqref{eq:Hcondition} are satisfied for all $\by\in\bbY$, where $\bbY$ is an affine subset of $\bbC^{N_y}$.
Additionally, assume that there exist $\tilde{\mu}_{\calH} > 0$ and $\mu_{\calH}$ such that~\eqref{eq:LipH} holds; and $\calG(\bzero) = \bzero$ and $\calH(\bzero) = \bzero$.
Then, starting from $\by^{(0)}$ that is $\epsilon_{\by}$-distant from $\by^*$, using the step sizes $\frac{\gamma_l}{\|\by^{(0)}\|^2}\leq\frac{2}{\beta}$, the iterates in~\eqref{eq:yl_y_l_minus_1} satisfy
\begin{align}
    \label{eq:convergence_rate3} \rmdist^2(\by^{(j)}, \by^*) & \leq \epsilon^2_{\by}\left[\prod_{l = 1}^{j}\left(1 - \frac{2\gamma_{l}}{\alpha\|\by^{(0)}\|^2}\right)\right]\|\by^*\|^2,
\end{align}
for $j \in [L]$, where $\alpha, \beta > 0$ are such that
\begin{align}
    \label{eq:4_by_ab}\frac{4}{\alpha\beta} & \leq \left(\frac{\tilde{\mu}_{\calH}}{\mu_{\calH}} \right)^8 \left(\frac{h(\delta, \epsilon_{\by})}{c(\delta, \epsilon_{\by})}\right)^2.
\end{align}
\end{theorem}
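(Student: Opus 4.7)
The plan is to adapt the standard Wirtinger Flow convergence argument of \cite{candes2015phase_IEEE, yonel2020deterministic} to the encoded domain by establishing a local regularity condition for the functional $\calK$ on $\bbY$, and then using it to drive a geometric-contraction induction on $\rmdist(\by^{(l)}, \by^*)$. The objective is to show that, locally in an $\epsilon_{\by}$-neighborhood of $\by^*$, there exist $\alpha, \beta > 0$ such that the regularity inequality
\begin{equation*}
2\, \mathrm{Re}\langle \nabla \calK(\by), \by - \by^* \rme^{\rmi \phi(\by)}\rangle \;\geq\; \frac{1}{\alpha \|\by^{(0)}\|^2}\, \rmdist^2(\by, \by^*) \;+\; \frac{1}{\beta}\, \| \nabla \calK(\by)\|^2
\end{equation*}
holds, where $\phi(\by)$ realizes the minimum in \eqref{eq:dist_definition}. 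Given this, expanding $\|\by^{(l)} - \by^* \rme^{\rmi \phi}\|^2$ from the update rule \eqref{eq:yl_y_l_minus_1}, using $\gamma_l/\|\by^{(0)}\|^2 \leq 2/\beta$ to absorb the squared gradient term, and unrolling across $l = 1, \dots, j$ directly yields the product bound \eqref{eq:convergence_rate3}.

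The core of the proof is therefore to manufacture the regularity condition. First I would rewrite the Wirtinger gradient from \eqref{eq:nablaJ} as $\nabla \calK(\by) = \nabla \calH(\by)\,\Delta(\bE_{\brho})\, \calH(\by) + \nabla \calH(\by) (\calH(\by)\calH(\by)^H + \|\calH(\by)\|^2 \bI)\calH(\by) - \nabla\calH(\by) (\brho^*{\brho^*}^H + \|\brho^*\|^2\bI)\calH(\by)/\ldots$, i.e., decompose $\frac{1}{M}\calF^H\calF(\bE_\brho)$ into the ``ideal'' part that would appear in the Gaussian setting plus the perturbation $\Delta(\bE_\brho)$ defined in Section~\ref{subsec:condition_H}. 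Taking the inner product with $\by - \by^* \rme^{\rmi \phi}$ and chasing the chain rule through $\calH$, the ideal part produces a term proportional to $\|\bE_{\brho}\|_F^2$ exactly as in the classical WF analysis, which is then converted to $\rmdist^2(\by,\by^*)$ by the two-sided Lipschitz bound \eqref{eq:LipH} (each direction contributing a factor $\tilde{\mu}_{\calH}^2$ or $\mu_{\calH}^2$). The perturbation part is bounded by $\sqrt{2}\|\bE_\brho\|_F \|\Delta(\bE_\brho)\|$; applying \eqref{eq:Hcondition} followed by the derived bound $\|\Delta(\bE_\brho)\| \leq \hat\delta\|\by\by^H - \by^*{\by^*}^H\|$ (Appendix~\ref{app:derv_finalcond_Hcond}), and once more converting $\|\by\by^H - \by^*{\by^*}^H\|$ and $\|\bE_\brho\|_F$ back to $\rmdist(\by,\by^*)\cdot\|\by^*\|$ via \eqref{eq:LipH}, produces the leading constant $h(\delta, \epsilon_{\by})$, where $\delta_1$ absorbs the locality terms $(1-\epsilon_\brho)(2-\epsilon_\brho)$ that arise from the eigen-expansion of $\by\by^H - \by^*{\by^*}^H$.

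Second, I would upper bound $\|\nabla \calK(\by)\|^2$ by the same kind of decomposition. Applying $\|\nabla\calH(\by)\| \leq \mu_{\calH}$, the triangle inequality, and once again \eqref{eq:cond_phaseless2} together with \eqref{eq:Hcondition}, each appearance of $\calH(\by)$ contributes either $\mu_{\calH}$ or $(1+\epsilon_\by)\|\by^*\|$, collecting to the constant $c(\delta, \epsilon_\by)$ (the $(1+\epsilon_\by)(2+\epsilon_\by)$ factor comes from cross-terms $\calH(\by)\calH(\by)^H + \|\calH(\by)\|^2 \bI$ and the $(2 + \omega\delta)$ factor from the perturbation plus the identity part). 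This yields $\|\nabla\calK(\by)\|^2 \leq c(\delta,\epsilon_\by)^2 \mu_{\calH}^4 \rmdist^2(\by,\by^*) / \|\by^{(0)}\|^2$ up to a normalization by $\|\by^{(0)}\|$, from which $\beta$ is read off.

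The hard part, and where the expected $(\tilde{\mu}_{\calH}/\mu_{\calH})^8$ factor emerges, is the bookkeeping of Lipschitz conversions: two conversions from $\bbY$ to $\bbT$ for the inner-product term and two for the gradient-norm term, each squared because $4/(\alpha\beta)$ combines $1/\alpha$ and $1/\beta$, yields the eighth power overall. Concretely, $1/\alpha$ scales like $\tilde{\mu}_\calH^4 \, h / \|\by^*\|^2$ and $1/\beta$ like $c^2 \mu_\calH^4$, so $4/(\alpha\beta) \leq (\tilde\mu_\calH/\mu_\calH)^8 (h/c)^2$ once the leading $\|\by^*\|^2$ is used to normalize $\rmdist^2(\by,\by^*) \leq \epsilon_\by^2 \|\by^*\|^2$. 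The remaining subtlety is verifying that the iterates remain within $\bbN_{\epsilon_\by}(\by^*)$ so that \eqref{eq:Hcondition} keeps applying; this follows by a standard inductive argument using the geometric decay of $\rmdist(\by^{(l)}, \by^*)$, closing the induction.
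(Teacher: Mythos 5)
Your high-level skeleton (a local regularity condition for $\calK$ over $\bbN_{\epsilon_{\by}}(\by^*)$, the gradient-norm bound producing $\mu^4_{\calH}c(\delta,\epsilon_{\by})$, the step-size restriction $\gamma_l/\|\by^{(0)}\|^2\le 2/\beta$ absorbing the squared-gradient term, and the unrolled product bound) matches the paper, and your accounting of where $(\tilde{\mu}_{\calH}/\mu_{\calH})^8$ comes from is in the right spirit. The gap is in how you establish the curvature half of the regularity condition. You propose to lower bound $\mathrm{Re}\langle\nabla\calK(\by),\by-\by^*\rme^{\rmi\phi}\rangle$ directly, claiming that after ``chasing the chain rule through $\calH$'' the ideal part of $\frac{1}{M}\calF^H\calF(\bE_{\brho})$ yields a term proportional to $\|\bE_{\brho}\|^2_F$ ``exactly as in the classical WF analysis.'' That classical identity rests on the algebraic fact $\brho\be^H+\be\brho^H=\bE_{\brho}+\be\be^H$ with $\be=\brho-\brho^*\rme^{\rmi\phi}$, which is available because the gradient in the image domain is $\frac{1}{M}\calF^H\calF(\bE_{\brho})\brho$. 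In the encoded domain the gradient is $\nabla\calK(\by)=\nabla\calH(\by)\bigl[\bE_{\brho}+(\|\calH(\by)\|^2-\|\brho^*\|^2)\bI+\Delta(\bE_{\brho})\bigr]\calH(\by)$, so the inner product with $\be_{\by}$ involves $\nabla\calH(\by)^H\be_{\by}$, which for a non-linear decoder is \emph{not} $\be_{\brho}$, and none of the hypotheses of the theorem controls it: the two-sided Lipschitz bound \eqref{eq:LipH} and the condition \eqref{eq:Hcondition} bound norms of differences, but give no alignment or positivity of the Jacobian against the error direction (no Taylor-remainder control, no restricted lower bound on $\mathrm{Re}\langle\nabla\calH(\by)^H\be_{\by},\be_{\brho}\rangle$). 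Only in the linear case $\calH(\by)=\bH\by$, where $\nabla\calH(\by)^H\be_{\by}=\bH\be_{\by}=\be_{\brho}$, does your computation go through, so as written this step fails.

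The paper avoids exactly this obstacle by never lower-bounding the gradient inner product term by term. It (i) upper bounds $\|\nabla\calK(\by)\|$ (Lemma~\ref{lemma:norm_del_calK}, Appendix~\ref{app:norm_derivative}); (ii) lower bounds the \emph{function value} $\calK(\by)=\frac{1}{2M}\|\calF(\bE_{\brho})\|^2\ge \tfrac{1}{2}\tilde{h}(\delta)\|\brho^*\|^2\|\be_{\brho}\|^2$ (Appendix~\ref{app:calK_lb}), a quantity involving only $\bE_{\brho}$ and no Jacobian of $\calH$, using \eqref{eq:cond_phaseless2}, \eqref{eq:Hcondition} and the spectral analysis of the lifted errors; and (iii) links the two to the regularity condition through a strong-convexity argument, converting $\|\brho^*\|$ and $\|\be_{\brho}\|$ into $\|\by^*\|$ and $\|\be_{\by}\|$ via the lower Lipschitz constant $\tilde{\mu}_{\calH}$, which is where $h=\tilde{\mu}^4_{\calH}\tilde{h}$ and then \eqref{eq:4_by_ab} arise from balancing $\frac{1}{\alpha\|\by^*\|^2}+\frac{1}{\beta}\mu^8_{\calH}c^2\|\by^*\|^2\le h$. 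To repair your argument you would either need to follow that function-value route, or add an assumption on $\nabla\calH$ (e.g., a uniformly controlled first-order remainder or a restricted co-monotonicity of the Jacobian with $\be_{\brho}$) that the theorem as stated does not supply; your reparametrization of the regularity condition and the closing induction that iterates remain in $\bbN_{\epsilon_{\by}}(\by^*)$ are fine and only cosmetic differences from the paper.
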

\begin{proof}
See Appendix~\ref{app:proof_theorem2}.
\end{proof}

%This theorem unveils a number of clear immediate implications. 
This theorem unveils a number of important implications. 
Most notably, the concentration bound parameter $\delta$ is no longer the sole determinant of the recovery guarantee, as for the regime in \eqref{eq:convergence_rate3} to be valid, several parameters must compositely satisfy the inequality $\delta_1 < 1$.
Once this strict bound is violated, we no longer have a feasible $\alpha, \beta$ pair to guarantee the convergence in the encoded parameter space. This, in turn, requires
\begin{align}
    \label{eq:delta_ub} \delta & < \left(\frac{\tilde{\mu}_H}{\mu_H}\right)^2\frac{(1 - \epsilon_{\brho})(2 - \epsilon_{\brho})}{\sqrt{2}\omega(2 + \epsilon_{\brho})(2  + \epsilon_{\by})},
\end{align}
within our sufficient conditions of exact recovery.
Furthermore, we can infer that $\tilde{\mu}_{\calH}$, which is smaller than $\mu_{\calH}$ by definition, should be away from $0$ and $\epsilon_{\brho}$ should be less than $1$ as both of these constants affect the feasibility of the bound in~\eqref{eq:delta_ub}.

\subsection{Sketch of Proof for Theorem~\ref{theorem:convergence}}
Proof of the exact recovery guarantee in Theorem~\ref{theorem:convergence} depends on achieving an initial encoded image within a small neighborhood of the correct encoded unknown $\by^*\in\bbY$.
For our initialization scheme described in Section~\ref{sec:DN} and under the condition from~\eqref{eq:cond_phaseless2}, we have $\rmdist^2(\brho^{(0)}, \brho^*) \leq \epsilon^2\|\brho^*\|^2$ and
\begin{align}
     \label{eq:dist_y} \rmdist^2(\by^{(0)}, \by^*) & \leq \epsilon^2_{\by}\|\by^*\|^2.
\end{align}
The inequality relation in~\eqref{eq:dist_y} is derived in Appendix~\ref{app:initialization}.
Our regularity condition states that for all $\by\in \bbN_{\epsilon_{\by}}(\by^*)$, $\calK(\by)$ satisfies the following inequality:
\begin{align}
    \label{eq:regularity_cond_main} \rmRe\left(\langle\nabla\calK(\by), \be_{\by}\rangle\right) \geq \frac{1}{\alpha}\|\be_{\by}\|^2 + \frac{1}{\beta}\|\calK(\by)\|^2,
\end{align}
where $\be_{\by} = \by - \by^*$ and $\alpha,\beta > 0$.
This ensures local strong convexity of $\calK(\by)$ within the $\epsilon_{\by}$ neighborhood of $\by^*$.
Under~\eqref{eq:cond_phaseless2} and~\eqref{eq:Hcondition}, the regularity condition~\eqref{eq:regularity_cond_main} is observed to be equivalent to
\begin{align}
    \label{eq:regularity_cond_compare2_main} \frac{1}{\alpha\|\by^*\|^2} + \frac{1}{\beta}\mu^8_{\calH}c^2(\delta, \epsilon_{\by})\| \by^*\|^2 & \leq h(\delta, \epsilon_{\by}).
\end{align}
Therefore, for~\eqref{eq:regularity_cond_main} to be satisfied by $\calK(\by)$ for all $\by\in\bbN_{\epsilon_{\by}}(\by^*)$, the left hand side of~\eqref{eq:regularity_cond_compare2_main} is required to be smaller than $h(\delta, \epsilon_{\by})$, which, in turn, leads to the condition in~\eqref{eq:4_by_ab}.
Finally, by expanding $\|\by^{(l)} - \by^*\|$ using~\eqref{eq:yl_y_l_minus_1}, and through~\eqref{eq:regularity_cond_main} and the upper bound $\frac{2}{\beta}$ on the step sizes, we arrive at the result in~\eqref{eq:convergence_rate3}.

\subsection{Key Outcomes}
\paragraph{Implications on the rate of convergence}
By using fixed step sizes $\gamma\in\mathbbm{R}^+$ for the $L$ updates and by defining $\gamma' = \frac{\gamma}{\|\by^{(0)}\|^2}\leq\frac{2}{\beta}$, we observe from~\eqref{eq:convergence_rate3} that $\frac{2\gamma'}{\alpha}$ is a convergence rate related term where the convergence rate increases with an increase in its value.
Furthermore, from Theorem~\ref{theorem:convergence}, by using the upper bound $\frac{2}{\beta}$ on the step sizes, we can upper bound $\frac{2\gamma'}{\alpha}$ by $\frac{4}{\alpha\beta}$.
Therefore, we can infer from~\eqref{eq:4_by_ab} that $\frac{h^2(\delta)}{\mu^8_{\calH}c^2(\delta, \epsilon_{\by})}$ is essentially an upper bound on $\frac{2\gamma'}{\alpha}$.
As long as ${\tilde{\mu}_{\calH}}/{\mu_{\calH}}$, $\epsilon_{\by}$, $\epsilon_{\brho}$ and $\omega$ values are such that our modified upper bound on $\frac{2\gamma'}{\alpha}$ is larger than the one for the WF algorithm, our DL based approach will converge faster to the correct solution.

\paragraph{Conditions on the Lipschitz constants}
From the definitions of $\epsilon_{\by}$ and $\epsilon_{\brho}$, it is evident that with $\chi$ equal to $\tau\in\mathbbm{R}^+$, upper bounding $\tau\mu_{\calH}$ and $\tau\mu_{\calG}\mu^2_{\calH}\mu_{\calR}(1 + \epsilon)$ by $\xi_{\by}\in\mathbbm{R}^+$ and $\xi_{\brho}\in\mathbbm{R}^+$, respectively, leads to the upper bound $\epsilon\xi_{\by}$ on $\epsilon_{\by}$ and $\epsilon\xi_{\brho}$ on $\epsilon_{\brho}$.
It is shown in Appendix~\ref{app:proof_42_44} that,
$\tau\mu_{\calH} \leq \xi_{\by} \leq 1$
and $\tau\mu_{\calG}\mu^2_{\calH}\mu_{\calR}(1 + \epsilon) \leq \xi_{\brho} \leq 1$, if
\begin{align}
    \label{eq:MGMH_ub_lb} \frac{(1 - \tau\epsilon\mu_{\calH})}{(1 + \epsilon)} \leq \mu_{\calG}\mu_{\calH} & \leq \min\left[2 - \frac{1}{\mu_{\calR}}, \frac{\xi_{\brho}}{\xi_{\by}}\right]\frac{1}{(1 + \epsilon)}, \\
    \mu_{\calH} & \leq \xi_{\by}/\tau, \\
    \label{eq:muR_ub} \mu_{\calR} & \leq 1.
\end{align}
These bounds are sufficient for upper bounding $\epsilon_{\by}$ by $\epsilon\xi_{\by}$ and $\epsilon_{\brho}$ by $\epsilon\xi_{\brho}$.
For a given $\tau$ and $\omega$, if
\begin{align}
    \label{eq:delta1_less_delta1_cond} \omega\left(\frac{\mu_{\calH}}{\tilde{\mu}_{\calH}}\right)^2\frac{(2 + \epsilon_{\brho})(2 + \epsilon_{\by})}{(1 - \epsilon_{\brho})(2 - \epsilon_{\brho})} \leq \frac{(2 + \epsilon)}{\sqrt{(1 - \epsilon)(2 - \epsilon)}},
\end{align}
then our exact recovery guarantee is valid over a larger range of $\delta$ compared to the WF algorithm.

\paragraph{Requirements on the $\bbY$-domain}
In the theorem statement, we assume that $\bbY$ is an affine subset of $\bbC^{N_y}$.
This assumption is made for mere convenience to deal the fact that the two eigenvectors of $\mathbf{E}_{\by} := \by \by^H - \by^* {\by^*}^H$ are formed by normalized affine combinations of $\by$ and $\by^*$.
This can be verified by following similar steps as the spectral analysis presented in Appendix~\ref{app:eigenvector_lifted_error}.
For contractions in the parameter domain, the concentration property we imply via the $\calH$-condition is required to hold over these eigenvectors, hence, we require that $\bbY$ is an affine set, such that $\bu_1 \in \bbY$. 
Furthermore, this requirement can actually be relaxed to instead involve a \emph{union of subspaces} model for $\bbY$, since we merely need the union of pair-wise affine combinations of these elements $\by, \by^* \in \bbY$.

This yields an interesting premise if the representations pursued for our image manifold are constrained to be \emph{sparse} in the parameter space in $\bbC^{N_y}$. To this end, a $k-$sparsity constraint on representations results in the union of all $2k$-dimensional subspaces in $\bbC^{N_y}$ for $\bbY$. Such a constraint however, must be enforced in the architecture via \emph{projection} operators in the definition of the RNN-module. 
In our architecture and implementations, we do not provide any additional structure in $\bbY$, and simply assume validity over all $\bbC^{N_y}$.

\paragraph{Spectral conditioning of $\calH$}
For convenience in presenting the theoretical results, we assume a global upper and lower Lipschitz property on $\calH$ in \eqref{eq:LipH}. However, once an $\epsilon_y$-neighborhood is guaranteed in the parameter space, it suffices that such a property is needed only locally over the neighborhood of a $\by^*$. To follow through with this relaxation, we need an additional spectral conditioning on $\calH$, such that:
\begin{align}
\label{eq:RIPH} \tilde{\sigma}_{\calH} \| \by \| \leq \| \calH (\by) \| \leq \sigma_{\calH} \| \by \|,
\end{align}
for all $\by \in \mathbb{Y}$ where $\tilde{\sigma}_{\calH}, \sigma_{\calH}\in\mathbbm{R}^+$.
This is the basic premise of assuming that $\calH$ is a \emph{frame} over $\mathbb{Y}$.
In this setting, the recovery guarantees promptly feature both ratio of $\mu_H$ and $\tilde{\mu}_H$, and the ratio of the frame coefficients, where the convergence bound becomes
\begin{align}
    \label{eq:4_by_ab_v2}\frac{4}{\alpha\beta} & \leq \left(\frac{\tilde{\mu}_{\calH}}{\mu_{\calH}} \right)^4 \left(\frac{\tilde{\sigma}_{\calH}}{\sigma_{\calH}} \right)^4 \left(\frac{h(\delta, \epsilon_{\by})}{c(\delta, \epsilon_{\by})}\right)^2,
\end{align}
with the sufficient condition
\begin{align}
    \delta & < \left(\frac{\tilde{\sigma}_{\calH} \tilde{\mu}_{\calH}}{\sigma^2_{\calH}}\right) \frac{(1 - \epsilon_{\brho})(2 - \epsilon_{\brho})}{\sqrt{2}\omega(2 + \epsilon_{\brho})(2  + \epsilon_{\by})}.
\end{align}

Most notably, with a linear $\calH$, if \eqref{eq:RIPH} is satisfied over $\mathbb{C}^{N_y}$, all the ratios reduce to that of frame coefficients.
This is highly relevant for the Gaussian linear encoder, which is the fundamental case that inspired our formulation under an arbitrary decoder. Namely, an over-determined Gaussian matrix satisfies the RIP over the whole domain in $\mathbb{C}^{N_y}$, with the RIP-constant $\delta_{\calH}\in\mathbbm{R}^+$ approaching $0$ as $M/N_y$ (i.e., the oversampling factor) grows, which increasingly well-conditions the problem, consistent with the statistical theory of phase retrieval.

\section{Training}
\subsection{Implementation of Lipschitz Constant Bounds}
\label{subsec:implement_Lipschitz_bound}
For our training set $\bbD$, let the intensity measurement vector and the associated ground truth image for the $t^{th}$ sample, where $t\in[T]$, be denoted by $\bd_t$ and $\brho^*_t$, respectively.
Training loss is computed as the average $\ell_2$-norm difference between the estimated and the ground truth images.
Moreover, since the image estimation $\brho^{(l)}_t$, calculated as $\calH(\by^{(l)}_t)$ at the $l^{th}$ RNN stage, is expected to get gradually closer to $\brho^*_t$ as $l$ increases, an additional term is typically added to the training loss function that sums the average $\ell_2$-norm differences between $\brho^{(l)}_{t}$ and $\brho^*$.
Our training loss $c_{tr}(\bbU)$ is defined as
\begin{align}
     \label{eq:training_loss} c_{tr}(\bbU) & = \frac{1}{T}\sum_{t = 1}^T\left[\|\hat{\brho}_{t} - \brho^*_t\|^2 + \sum_{l = 1}^{L}\eta_l\|\calH(\by^{(l - 1)}_{t}) - \brho^*_t\|^2\right] \nonumber \\
     & + c_0(\bbU).
\end{align}
$\eta_l\in\mathbbm{R}^+$, where $l\in[L]$, is a constant, $\bbU$ denotes the set of parameters of the overall imaging network, and $c_0(\bbU)$ is used to impose desirable properties on the trained networks.
We set $c_0(\bbU)$ as the sum of $c_i(\bbU)$, where $i\in[4]$, and define $c_i(\bbU)$ in the following discussion.

For imposing the property that $\calG(\bzero) = \bzero$ and $\calH(\bzero) = \bzero$, $c_1(\bbU)$ can be set as $\eta_1\left(\|\calG(\brho)|_{\brho = \bzero}\|^2 + \|\calH(\by)|_{\by = \bzero}\|^2\right)$ where $\eta_1\in\mathbbm{R}^+$.
In order to impose a specific Lipschitz constant value on the RNN, we define $c_2(\bbU)$ as follows:
\begin{align}
    c_2(\bbU) & = \eta_2\left(\max_{t_1, t_2\in[T]}\frac{\|\calR(\by^{(0)}_{t_1}) - \calR(\by^{(0)}_{t_2})\|}{ \|\by^{(0)}_{t_1} - \by^{(0)}_{t_2}\|} - \mu_{\calR}\right)^2,
\end{align}
where $\eta_2\in\mathbbm{R}^+$.
The Lipschitz constants of $\calG$ and $\calH$ can be set to specific values using a similar approach as~\cite{yoshida2017spectral} by first setting $c_3(\bbU)$ and $c_4(\bbU)$ equal to $\eta_3\sum_{j = 1}^J\left(\sigma(\bU_j) - \mu^j_{\calG}\right)^2$ and $\eta_4\sum_{k = 1}^K\left(\sigma(\bW_k) - \mu^k_{\calH}\right)^2$, respectively, where $\eta_3, \eta_4\in\mathbbm{R}^+$, $\prod_{j = 1}^J\mu^j_{\calG} = \mu_{\calG}$ and $\prod_{k = 1}^K\mu^k_{\calH} = \mu_{\calH}$.
$\sigma(.)$ and $\bU_j$ are defined in Subsection~\ref{subsec:lipschitz_constant_def}.
$\bW_k\in\bbC^{Q_k\times Q_{k - 1}}$ is the weight matrix at the $k^{th}$ layer of a similar $\calH$ architecture as the one presented for $\calG$ in Subsection~\ref{subsec:lipschitz_constant_def}, where $k\in[K]$, $Q_0 = N_y$ and $Q_K = N$.
While using the stochastic gradient descent to minimize $c_{tr}(\bW)$, in order to calculate the gradients of $c_3(\bbU)$ and $c_4(\bbU)$, we need to estimate the leading eigenvectors of the different weight matrices of $\calG$ and $\calH$, respectively.
A power method is implemented in~\cite{yoshida2017spectral} where the leading eigenvectors estimated during one training update is reused as the initial vectors for the next update, for which the gradient of $c_{tr}(\bbU)$ is calculated using a different mini-batch from the training set.

\subsection{Computational Complexity}
\label{subsec:complexity}
Computational complexity of our approach depends on the number of RNN stages $L$ as well as the network architectures of $\calG$ and $\calH$.
For linear activation functions for $\calG$ and $\calH$, forward propagations through these networks require $\sum_{j = 1}^{J}P_jP_{j - 1}$ and $\sum_{k = 1}^{K}Q_kQ_{k - 1}$ floating-point operations (FLOP), respectively.
For ReLU activation functions and assuming that each comparison operation requires a single FLOP, an additional $\sum_{j = 1}^{J - 1}P_j + \sum_{k = 1}^{K - 1}Q_k + N_y + N$ FLOPs are carried out.
The output of the $\calH$ network is required to be calculated $L + 1$ times.
For the initial encoded image, we calculate the leading eigenvector of $\bY$, defined in~\eqref{eq:spectral_matrix}, using the power method, and it incurs $O(N^3)$ computational cost.
Calculating $\bF\calH(\by^{(l)})$ and then $\calF(\calH(\by^{(l)})\calH(\by^{(l)})^H)$ requires $\mathcal{O}(MN) + \mathcal{O}(M)$ FLOPS in total.
From $\calF(\calH(\by^{(l)})\calH(\by^{(l)})^H)$, calculating $\frac{1}{M}\calF^H(\be)\calH(\by^{(l)})$ takes another $\mathcal{O}(MN) + \mathcal{O}(M)$ operations.
The error related term $\be$ is defined in Subsection~\ref{subsec:objective} after~\eqref{eq:nablaJ}.
$\calH(\by^{(l)})$ and its gradient $\nabla\calH(\by)|_{\by = \by^{(l)}}$ have updated values at each RNN stage, and the gradient is multiplied by an $N$ length vector requiring an additional $\mathcal{O}(NN_y)$ FLOPS per iteration.
With ReLU activation functions, $\calH(\by^{(l)})$ calculation requires $\sum_{k = 1}^{K}Q_kQ_{k - 1} + \sum_{k = 1}^{K - 1}Q_k + N$ FLOPs.
For calculating the gradient, the derivatives of the non-linear function require $N + \sum_{k = 1}^{K - 1}Q_k$ comparisons while the matrix multiplication part requires $\sum_{k = 1}^KQ_kQ_{k - 1} + \sum_{k = 0}^{K - 2}Q_kQ_{k + 1}Q_{k + 2}$ additional FLOPs.
$M$ is typically some constant multiple of $N$, where the constant is significantly smaller than $N$.
If the value of $Q_k$, for $k\in[K - 1]$, are in the order of $N$, then the computational complexity increases to $\mathcal{O}(N^3)$ per iteration.
For this case, if the number of RNN stages $L$ is significantly less than $N$, then the overall complexity remains $\mathcal{O}(N^3)$, similar to the generalized WF for interferometric inversion approach in~\cite{yonel2019generalization}.
On the other hand, for achieving an accuracy level of $\epsilon_{WF}\in\mathbbm{R}^+$, the computational cost of the WF approach is $\mathcal{O}(N^2\log N\log(\frac{1}{\epsilon_{WF}}))$~\cite{candes2015phase_IEEE}.

\section{Numerical Simulations}
\label{sec:numerical_results}
In this section, we demonstrate the feasibility of our DL-based phaseless imaging approach through the training and subsequent performance evaluations on a number of real and simulated datasets, with measurement geometries of both experimental and practical interest.
The main objectives of our numerical simulations are the following:
\begin{enumerate}
\item Demonstrating the reconstruction performance of our approach on both real and synthesized datasets, and comparing with the reconstruction results obtained using the WF algorithm~\cite{candes2015phase_IEEE, yonel2020deterministic} and comparable DL-based state-of-the-art phaseless imaging methods, in order to highlight the relative advantages of our approach over a range of image sets.
\item Numerically verifying the robustness of our approach under additive noise on the intensity measurements for relatively low $\frac{M}{N}$ values.
\item Numerically verifying a number of theoretical observations and insights presented in Section~\ref{sec:convergence}.
These include showing the improved accuracy of the initial encoded image, resulting from the inclusion of $\calG$, compared to the accuracy of the spectral estimation, observing the sample complexity improvement compared to the WF algorithm~\cite{candes2015phase_IEEE, yonel2020deterministic} as well as other DL based approaches, and observing the necessity of having ample training set sizes for $\calH$ to appropriately model various image classes of interest.
\end{enumerate}
We adopt the normalized mean squared error (MSE) as the figure of merit throughout this section, and it is defined as $\mathrm{MSE} = \frac{1}{T_s}\sum_{t = 1}^{T_s}\|\hat{\brho}_t - \brho^*_t\|^2 / \|\brho^*_t\|^2$.
$T_s$ is the number of samples in the test set, $\bbD_{test}$, and $\hat{\brho}_t$ and $\brho^*_t$ denote the reconstructed and the corresponding ground truth images, respectively, for the $t^{th}$ sample of $\bbD_{test}$.

\subsection{Dataset Descriptions}
\begin{figure}[!t]
\centering
\includegraphics[width=0.75\columnwidth]{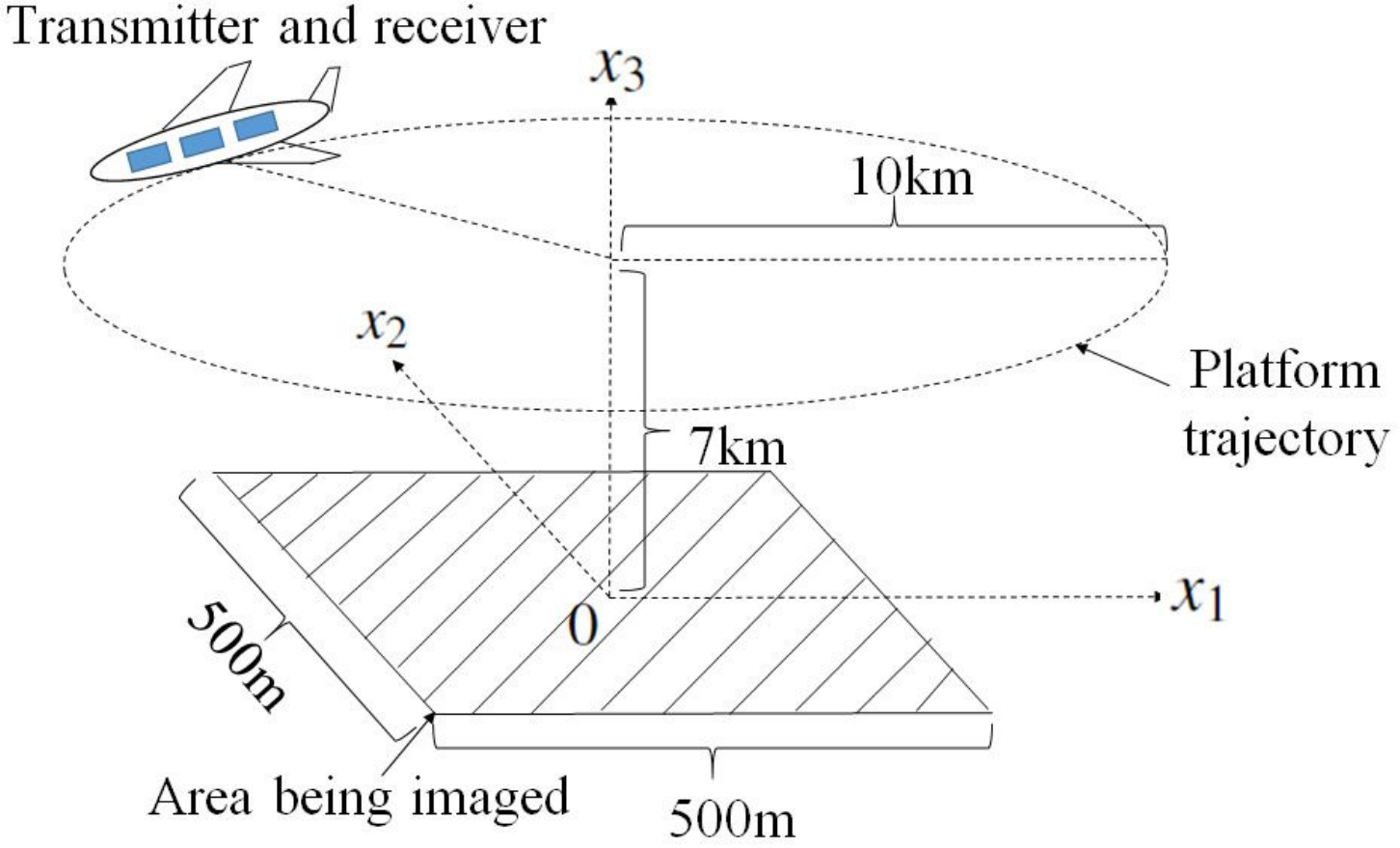}%
\caption{Data collection geometry for the synthetic aperture imaging.}
\label{fig:synthetic_aperture_geometry}
\end{figure}
\begin{figure}[!t]
\centering
\includegraphics[width=0.75\columnwidth]{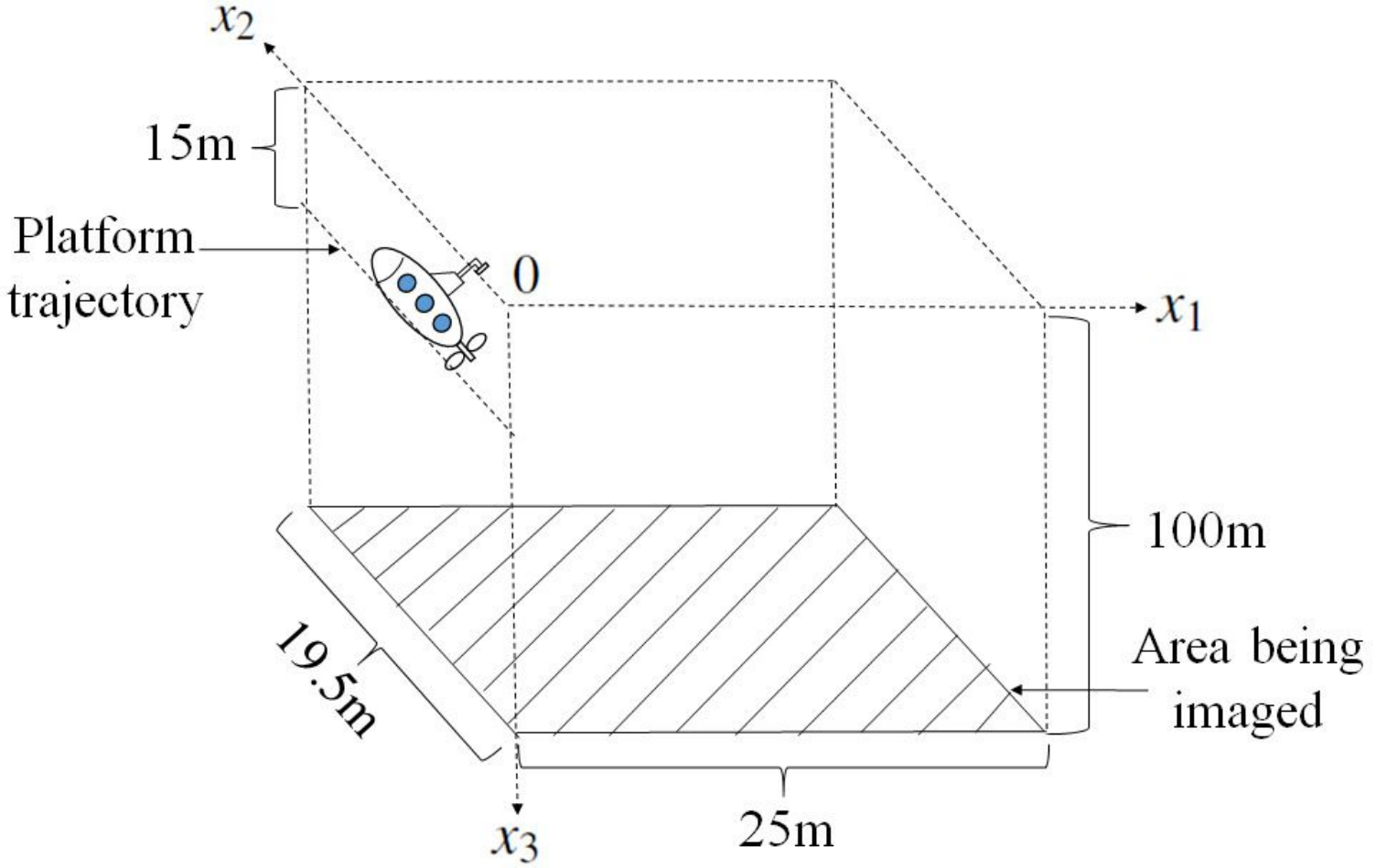}%
\caption{Data collection geometry for PCSWAT dataset for SAS imaging.}
\label{fig:sonar_geometry}
\end{figure}
\begin{figure*}[!t]
\centering
\includegraphics[width=1.4\columnwidth]{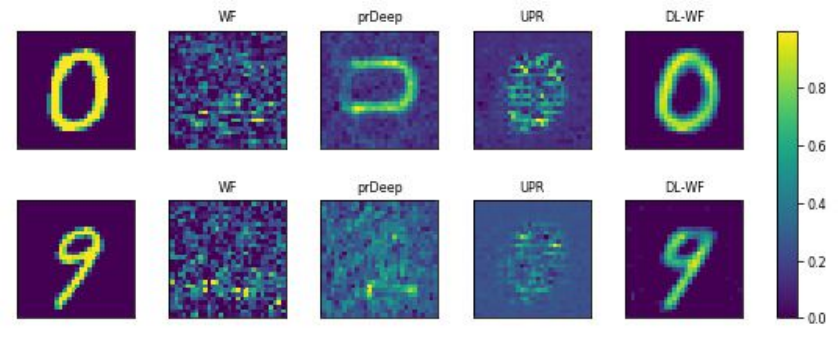}%
\caption{First column includes the original unknown images of dimension $28 \times 28$ pixels. For $M = 0.5N$ and $10000$ training samples, the reconstructed images using the WF algorithm~\cite{candes2015phase_IEEE} with $5000$ iterations are shown in the second column.
Corresponding estimated images using the prDeep~\cite{Metzler2018_prdeep} and the UPR~\cite{naimipour2020_upr} approaches are included in the third and fourth columns, respectively.
The last column shows the estimated images using our method with $10$ RNN stages.}
\label{fig:examples}
\end{figure*}
%
%\begin{figure}[!t]
%\centering
%\subfloat[]{\includegraphics[width=0.19\columnwidth]{Figures/mine_org1_2}
%\label{fig_mine_org1_2}}
%\subfloat[WF]{\includegraphics[width=0.19\columnwidth]{Figures/mine_WF1_2}
%\label{fig_mine_WF1_2}}
%\subfloat[prDeep]{\includegraphics[width=0.19\columnwidth]{Figures/est_prdeep_mine1}
%\label{fig_est_prdeep_mine1}}
%\subfloat[UPR]{\includegraphics[width=0.19\columnwidth]{Figures/est_upr_mine1}
%\label{fig_est_upr_mine1}}
%\subfloat[DL-WF]{\includegraphics[width=0.19\columnwidth]{Figures/mine_est1_2}
%\label{fig_mine_est1_2}}
%\\
%\subfloat[]{\includegraphics[width=0.19\columnwidth]{Figures/mine_org2_2}
%\label{fig_mine_org2_2}}
%\subfloat[WF]{\includegraphics[width=0.19\columnwidth]{Figures/mine_WF2_2}
%\label{fig_mine_WF2_2}}
%\subfloat[prDeep]{\includegraphics[width=0.19\columnwidth]{Figures/est_prdeep_mine2}
%\label{fig_est_prdeep_mine2}}
%\subfloat[UPR]{\includegraphics[width=0.19\columnwidth]{Figures/est_upr_mine2}
%\label{fig_est_upr_mine2}}
%\subfloat[DL-WF]{\includegraphics[width=0.19\columnwidth]{Figures/mine_est2_2}
%\label{fig_mine_est2_2}}
%\caption{Image reconstruction results for the simulated synthetic aperture dataset with $14\times 14$ pixel images and  with $\mathrm{SNR} = 10$dB. For $M = N$ and $9950$ training samples, the five columns show the original images, and the reconstructed images using the WF algorithm~\cite{candes2015phase_IEEE} with $5000$ iterations, prDeep approach~\cite{Metzler2018_prdeep}, UPR approach~\cite{naimipour2020_upr} and our DL-WF approach with $10$ RNN stages, respectively.}
%\label{fig:mine_example}
%\end{figure}
\begin{figure}[!t]
\centering
\includegraphics[width=1.0\columnwidth]{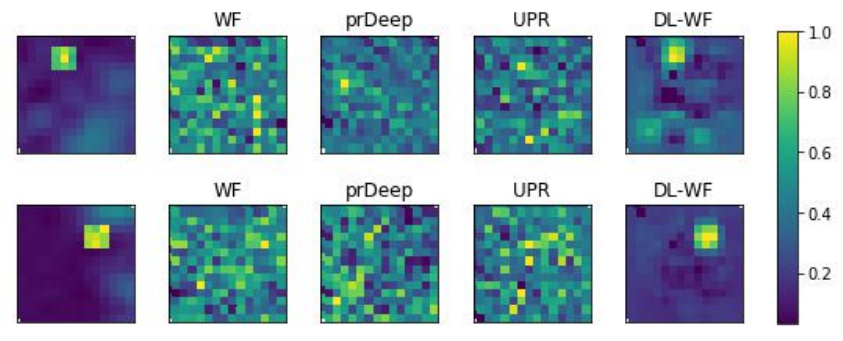}%
\caption{Image reconstruction results for the simulated synthetic aperture dataset with $14\times 14$ pixel images and $\mathrm{SNR} = 10$dB. For $M = N$ and $9950$ training samples, the five columns show the original images, and the reconstructed images using the WF algorithm~\cite{candes2015phase_IEEE} with $5000$ iterations, prDeep approach~\cite{Metzler2018_prdeep}, UPR approach~\cite{naimipour2020_upr} and our DL-WF approach with $10$ RNN stages, respectively.}
\label{fig:mine_example}
\end{figure}
%
%\begin{figure}[!t]
%\centering
%\subfloat[]{\includegraphics[width=0.20\columnwidth]{Figures/pcswat_org1}
%\label{fig_pcswat_org1}}
%\subfloat[WF]{\includegraphics[width=0.20\columnwidth]{Figures/pcswat_WF1}
%\label{fig_pcswat_WF1}}
%\subfloat[prDeep]{\includegraphics[width=0.20\columnwidth]{Figures/est_prdeep_pcswat1}
%\label{fig_est_prdeep_pcswat1}}
%\subfloat[UPR]{\includegraphics[width=0.20\columnwidth]{Figures/est_upr_pcswat1}
%\label{fig_est_upr_pcswat1}}
%\subfloat[DL-WF]{\includegraphics[width=0.20\columnwidth]{Figures/pcswat_est1}
%\label{fig_pcswat_est1}}
%\\
%\subfloat[]{\includegraphics[width=0.20\columnwidth]{Figures/pcswat_org2}
%\label{fig_pcswat_org2}}
%\subfloat[WF]{\includegraphics[width=0.20\columnwidth]{Figures/pcswat_WF2}
%\label{fig_pcswat_WF2}}
%\subfloat[prDeep]{\includegraphics[width=0.20\columnwidth]{Figures/est_prdeep_pcswat2}
%\label{fig_est_prdeep_pcswat2}}
%\subfloat[UPR]{\includegraphics[width=0.20\columnwidth]{Figures/est_upr_pcswat2}
%\label{fig_est_upr_pcswat2}}
%\subfloat[DL-WF]{\includegraphics[width=0.20\columnwidth]{Figures/pcswat_est2}
%\label{fig_pcswat_est2}}
%\caption{Image reconstruction results for the PCSWAT dataset with $22\times 31$ pixel underwater scenes and SAS measurements. For $M = 930$ and $800$ training samples, the five columns show the original images, and the estimated images using the WF algorithm~\cite{candes2015phase_IEEE} with $5000$ iterations, prDeep approach~\cite{Metzler2018_prdeep}, UPR approach~\cite{naimipour2020_upr} and our DL-based approach with $10$ RNN stages, respectively.}
%\label{fig:pcswat_example}
%\end{figure}
\begin{figure}[!t]
\centering
\includegraphics[width=1.0\columnwidth]{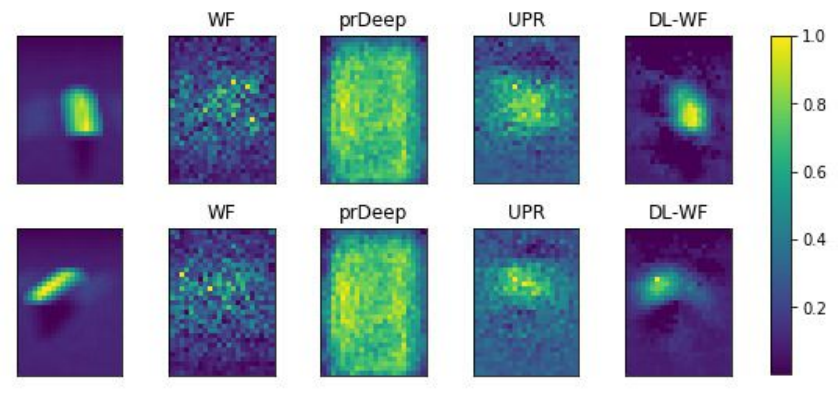}%
\caption{Image reconstruction results for the PCSWAT dataset with $22\times 31$ pixel underwater scenes and SAS measurements. For $M = 930$ and $800$ training samples, the five columns show the original images, and the estimated images using the WF algorithm~\cite{candes2015phase_IEEE} with $5000$ iterations, prDeep approach~\cite{Metzler2018_prdeep}, UPR approach~\cite{naimipour2020_upr} and our DL-based approach with $10$ RNN stages, respectively.}
\label{fig:pcswat_example}
\end{figure}
\begin{figure}[!t]
\centering
\includegraphics[width=0.7\columnwidth]{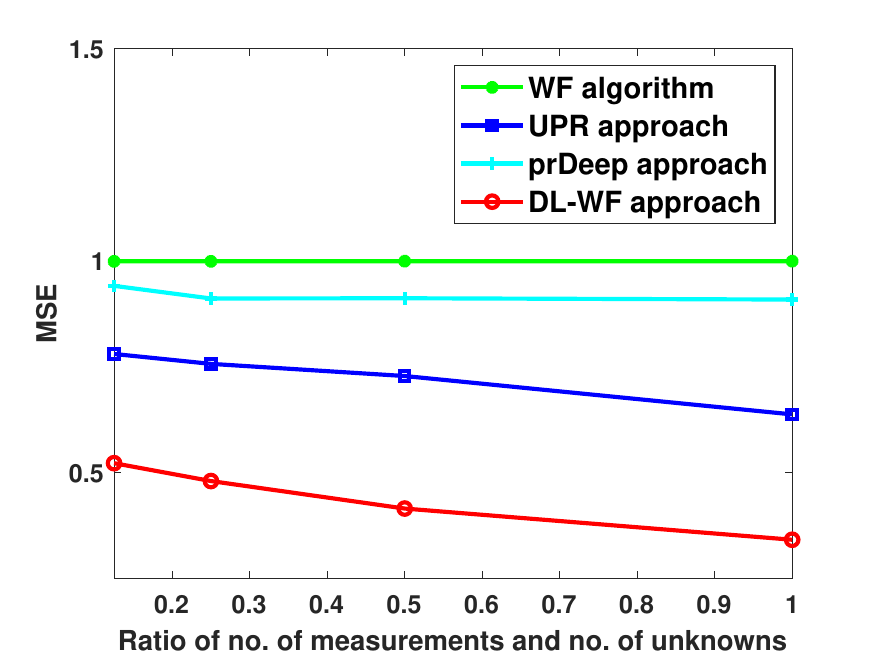}%
\caption{MSE versus $\frac{M}{N}$ for the samples in the MNIST test set.}
\label{fig:err_vs_sample_complexity}
\end{figure}
\begin{figure*}[!t]
\centering
\subfloat[]{\includegraphics[width=0.65\columnwidth]{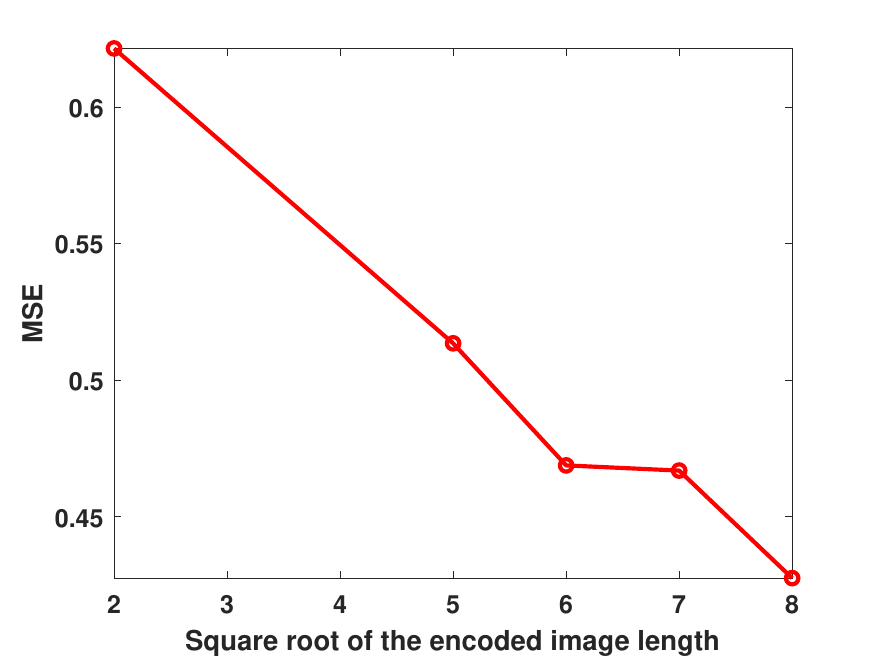}%
\label{fig_mse_vs_Nyequal}}
\hfil
\subfloat[]{\includegraphics[width=0.65\columnwidth]{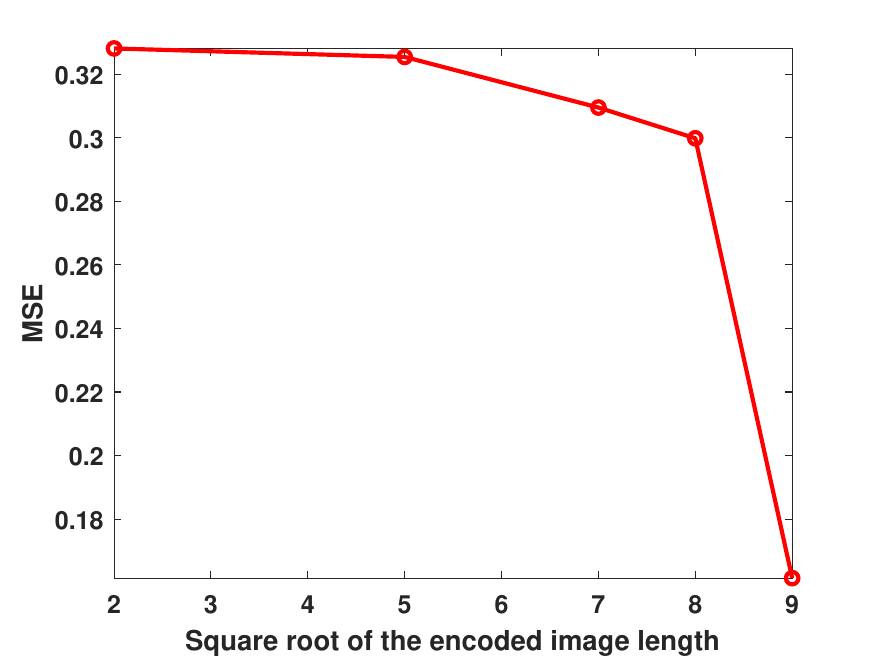}%
\label{fig_mse_vs_Nyequal_mine}}
\hfil
\subfloat[]{\includegraphics[width=0.65\columnwidth]{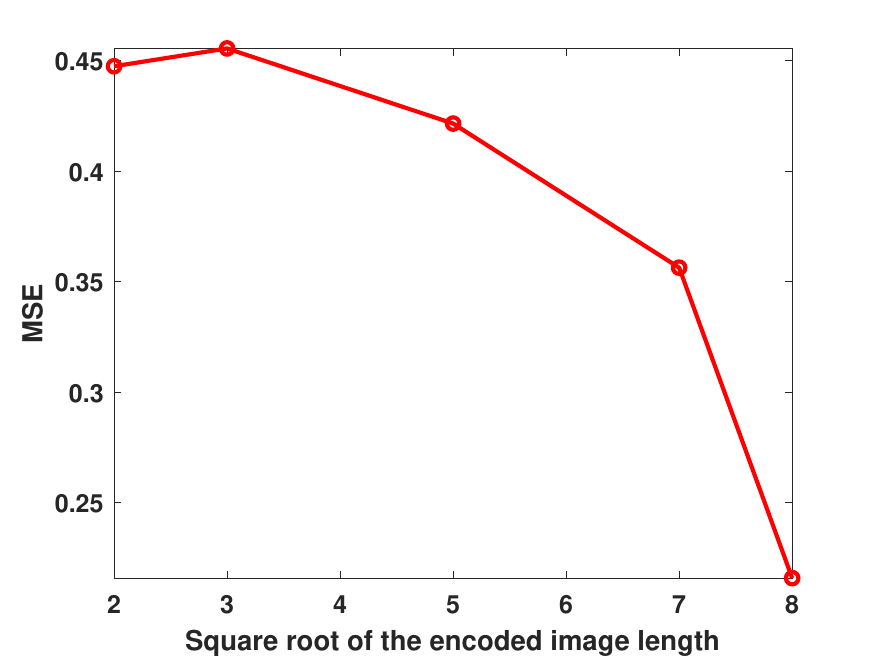}%
\label{fig_mse_vs_Nyequal_pcswat}}
\hfil
\caption{MSE versus $\sqrt{N_y}$ for fixed $M / N$ ratio for the samples in the test set (a) for the MNIST dataset with $M = 0.5N$, (b) for the synthetic aperture dataset with $M = N$, and (c) for the PCSWAT dataset with $M = 1.36N$, and the number of RNN stages $L = 5$.}
\label{fig:mse_vs_Ny_equal}
\end{figure*}
\begin{figure}[!t]
\centering
\includegraphics[width=0.7\columnwidth]{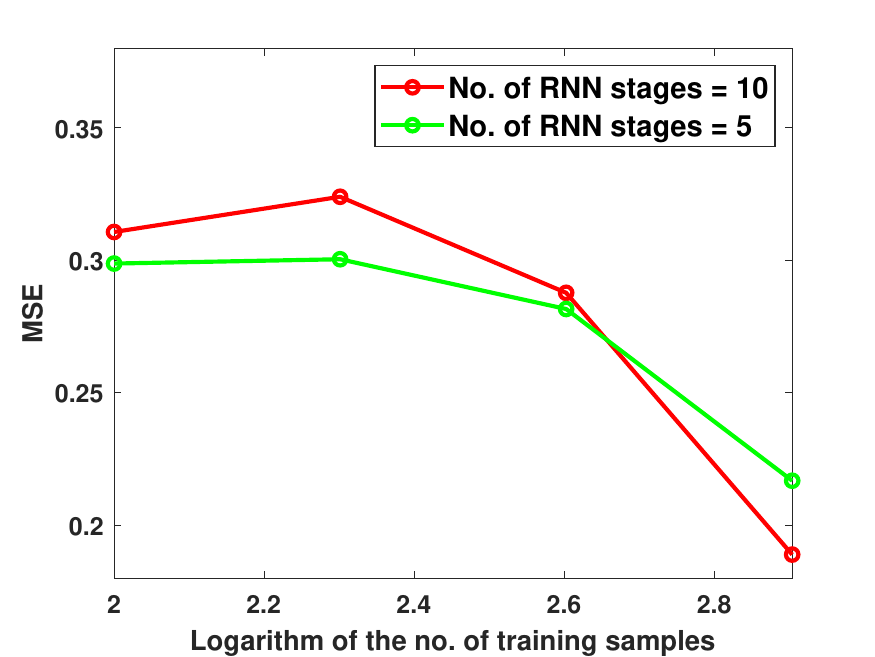}%
\caption{MSE versus the training set sizes for the PCSWAT dataset with $M = 1.36N$.}
\label{fig:mse_vs_training_size}
\end{figure}
\begin{figure}[ht]
\centering
\includegraphics[width=0.7\columnwidth]{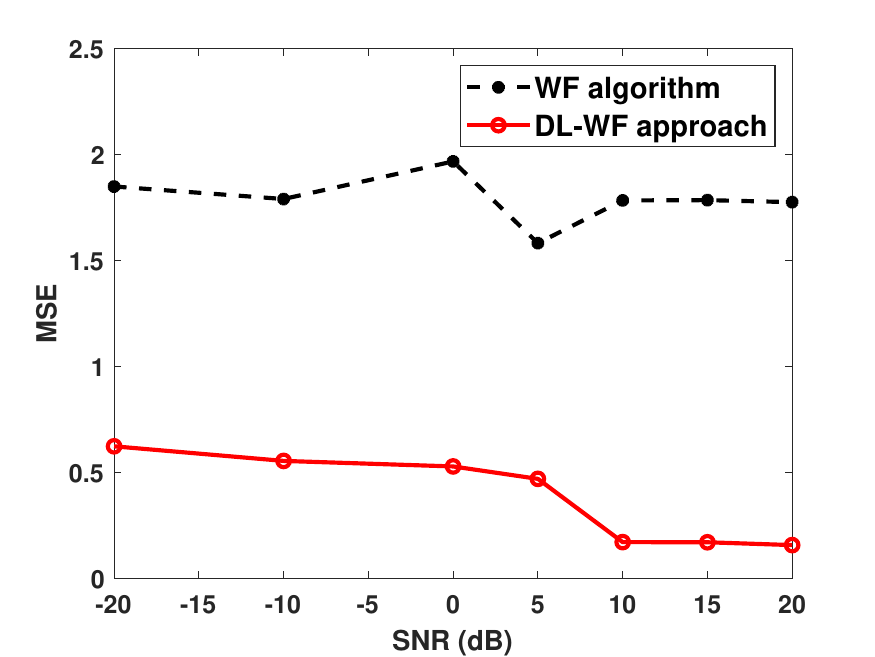}
\caption{MSE versus SNR (dB) for $M = N$ for the samples in the synthetic aperture test dataset.}
\label{fig:mse_vs_SNR_mine}
\end{figure}
In this subsection, we introduce three image sets, and the associated deterministic forward maps that results from three different data acquisition geometries.

\subsubsection{MNIST Dataset}
The first image set that we consider in this paper is MNIST, which is a publicly available dataset of handwritten digits.
Each image has a dimension of $28 \times 28$ pixels, and depicts one of the $10$ digits.
We randomly select $10000$ samples, with $1000$ samples for each digit, as the training dataset, and another randomly selected $100$ images, $10$ for each digit, constitute the test set.
For the forward mapping matrix, we use the one available with the publicly available dataset from~\cite{metzler2017coherent} for the $40 \times 40$ pixels imaging scenario.
This dataset considers a multiple scattering transmission environment with phaseless measurements, and the forward map is recovered using the prVAMP based double phase retrieval approach.
Since our images have a lower pixel count, we consider the first $784$ columns of this matrix to form our forward map $\bF$, and discard the phases of the $\bF\brho^*_t$ values to form the phaseless measurements for the images in the MNIST dataset.
The number of rows of $\bF$, which is the number of total measurements $M$, is varied for experimentation purposes, and for each case, we consider the first $M$ rows of $\bF$.

\subsubsection{Simulated Synthetic Aperture Dataset}
%The second dataset is selected with the goal of showing an example scenario where our approach is applicable in a practical setting with a deterministic forward map.
The second dataset is selected with the goal of showing a scenario where our approach is applicable in a practical setting with a deterministic forward map.
We apply our method for synthetic aperture imaging~\cite{jao2001theory} from simulated measurement under Born approximation.
Each scene being imaged has a dimension of $500$m$\times 500$m and is reconstructed as a $14 \times 14$ pixels image.
There is a single square object located at a random location within the area being imaged, and the background varies from scene to scene.
The number of samples in the training and test sets are $9950$ and $50$, respectively.
We consider a mono-static data-collection strategy, with the transmitter-receiver trajectory along a circular path at $7$km height and at a radius of $10$km.
Total number of measurements is set equal to the number of unknowns, i.e., $M = 196$, and additive Gaussian noise of zero mean and different variances is assumed to be present in the measured intensity values of the received signal.
A schematic diagram of the associated data collection geometry is shown in Fig.~\ref{fig:synthetic_aperture_geometry}.

\subsubsection{PCSWAT Generated SAS Dataset}
For the third dataset, we consider a PCSWAT 10 software generated simulated dataset for synthetic aperture sonar (SAS) imaging of underwater scenes.
PCSWAT is a tool-set developed for simulating high-fidelity SAS data~\cite{marx2000introduction}.
It offers a selection of realistic targets and underwater surface types, and allows the incorporation of varying sound-velocity profiles, marine life property, wind speed etc.
For the samples in our training and test sets, we consider that the background medium is composed of sandy gravel, and there is sparse marine life present in the medium.
Each scene contains a single hemi-spherically end-capped cylinder of varying length and fixed radius located at a random location on the scene along a random orientation.
Each area being imaged has a dimension of $19$m$\times 25$m and it is reconstructed as a $22 \times 31$ pixels image.
The number of samples in the training and test set are $800$ and $5$, respectively.
We consider a 2D environment, and the vehicle and the water depth are set to $15$m and $100$m, respectively.
The center frequency and the bandwidth of the transducers mounted on the moving vehicle are set equal to $120$kHz and $30$kHz, respectively.
The data collection geometry for the SAS operation simulated via PCSWAT is shown in Fig.~\ref{fig:sonar_geometry}.

\subsection{DL Architectures and Reconstruction Results}
\label{subsec:architecture_reconstruction}
The quality of the reconstructed images is heavily dependent on the $\calG$ and $\calH$ network architectures.
For evaluating our numerical results for the MNIST dataset, we consider the following network model: the number of RNN stages is set to $10$; for $\calG$, we use a $5$ layer CNN model with $leaky\_relu(.)$ activation functions from the tensorflow library, and output dimensions $24 \times 24 \times 4$, $20 \times 20 \times 16$, $16 \times 16 \times 16$, $12 \times 12 \times 4$ and $8 \times 8 \times 1$ for the $5$ consecutive layers; for $\calH$, we have used a $5$ layer ANN architecture with $relu(.)$ activation functions, and output vector lengths of $64$, $64$, $64$, $100$ and $784$, respectively.
Additionally, since the maximum value of each MNIST image is $255$, we consider this as prior information while applying our DL-based approach as well as the other methods that we evaluate for comparison.
This is performed by first normalizing the set of intensity vectors by $255^2$, and then adding the term $\frac{1}{T}\sum_{t = 1}^T\left(\max_{n\in[N]}\hat{\brho}_t(n) - 1\right)^2$, where $\hat{\brho}_t(n)$ denotes the $n^{th}$ element of $\hat{\brho}_t$, to the training loss functions of the end-to-end DL-based methods.
On the other hand, for the iterative methods including the WF algorithm, we normalize the updated image estimation at each iteration so that the maximum pixel value equals to $1$.

The network models implemented for the synthetic aperture and the PCSWAT generated SAS datasets are kept similar as the ones used for MNIST.
For synthetic aperture imaging, the number of filters in $\calG$ network layers are $8$, $12$, $12$, $8$ and $1$, respectively, while the output vector lengths of the $5$ consecutive layers of $\calH$ are $81$, $85$, $90$, $100$ and $196$, respectively.
For the SAS dataset, the number of filters used in the $\calG$ network layers are the same, except, in this case, the encoded image dimension $N_y$ is set to $64$.
The output vector lengths for the $5$ consecutive $\calH$ layers are $64$, $81$, $100$, $150$ and $200$, respectively.
We used ADAM optimizer for training, with a learning rate equal to $10^{-5}$, and batch sizes equal to $100$, $50$ and $5$ for the MNIST, synthetic aperture and the PCSWAT datasets, respectively.

For the remainder of this section, we refer to our approach by DL-WF.
We note that while comparing with other DL-based state-of-the-art phaseless imaging approaches, we exclude comparisons with the generative prior based methods~\cite{hand2018phase, shamshad2018robust, hand2020compressive, shamshad2020robust}, as they require us to separately train a GAN using a large amount of images from comparable image classes.
One of the motivations of our approach is to avoid the cumbersome GAN network training, and instead adopt an end-to-end training strategy that uses sample images and the corresponding intensity measurements.
Additionally, although our theoretical results do not guarantee performance improvement over the generative prior based methods, there are several advantages of our exact recovery guarantee compared to the theoretical results derived in~\cite{hand2018phase, hand2020compressive} as summarized in Section~\ref{sec:convergence}.
In this section, we instead include reconstruction results from two state-of-the-art DL-based approaches with comparable training complexities, namely, UPR~\cite{naimipour2020_upr} and prDeep~\cite{Metzler2018_prdeep}.
UPR~\cite{naimipour2020_upr} uses an end-to-end training scheme, with similar training dataset requirement as DL-WF.
On the other hand, prDeep is a regularization by denoising~\cite{romano2017little} type approach for phaseless imaging, and it implements a DnCNN~\cite{zhang2017_dncnn} for denoising.
We have used a $17$ layer DnCNN network with a similar architecture as the one presented in~\cite{zhang2017_dncnn}, where the number of channels at each intermediate layer is $64$.
Instead of patch extraction, due to the relatively small image dimensions under consideration here, we apply the entire image as input to the denoising network.
For additional implementation details for UPR and prDeep, we followed the various hyper-parameter values suggested in~\cite{naimipour2020_upr} and~\cite{Metzler2018_prdeep}, respectively.

Example reconstructed images using our DL-based method along with the reconstructed images using the WF algorithm, prDeep and UPR are shown in Fig.~\ref{fig:examples},~\ref{fig:mine_example} and~\ref{fig:pcswat_example} for the MNIST, synthetic aperture and PCSWAT datasets, respectively.
The number of training samples used for these three datasets are $10000$, $9950$ and $800$, respectively.
For all three cases, we observe that our DL-WF approach yields significant improvement in the estimated image accuracies compared to the WF algorithm, and the DL-based UPR and prDeep methods.

Despite this improvement, the reconstructed images produced by our approach still retain visual artifacts. There are two key contributors to this end. Firstly, Section~\ref{sec:recovery_guarantees} discusses exact recovery with linear convergence for elements representable in the range of $\calH$.
For the particular training dataset and the optimization algorithm used for training, whether we can estimate an $\calH$ with the properties specified in the theory of exact recovery is an additional aspect that contributes to empirically observing such guarantees. Secondly, even under the validity of these assumptions, observing exact recovery still potentially requires many iterations of gradient updates in the lower dimensional encoded space given the ill-posed problem settings under consideration. The architecture is however limited by the number of layers in the RNN unit, hence convergence to the true solution is not necessarily observed. Accordingly in Fig.~\ref{fig_mse_vs_no_of_RNN_stages}, we demonstrate the expected decaying trend in average reconstruction error as the number of RNN stages are increased. Furthermore, despite these limitations, the improvements in the reconstruction quality that our approach offers over the state-of-the-art phaseless imaging methods is still quite significant.

\vspace{-0.1in}
\subsection{Effect of Sample Complexity}
In order to observe the effect of sample complexity on our approach, MSE values for the MNIST test dataset are plotted versus the $\frac{M}{N}$ ratios in Fig.~\ref{fig:err_vs_sample_complexity}.
It is observed that, for each of the $\frac{M}{N}$ ratios, our DL-based approach performs better compared to the WF algorithm, prDeep and UPR.
Additionally, as expected intuitively, we observe reduced MSE values as $M$ is increased for a fixed image dimension.
In Fig.~\ref{fig_mse_vs_Nyequal},~\ref{fig_mse_vs_Nyequal_mine} and~\ref{fig_mse_vs_Nyequal_pcswat}, we consider the case, where the number of measurements $M$ is fixed while the encoded image dimension $N_y$ is varied, and we plot the MSE values versus $\sqrt{N_y}$ for the MNIST, synthetic aperture and the PCSWAT datasets, respectively.
For all three cases, we observe reduced MSE values with increasing $N_y$.
Our observation from Fig.~\ref{fig:mse_vs_Ny_equal} implies that the reconstruction in the encoded image space $\bbY$, reveals a latent dimension of the images smaller than the number of unknowns.
This indicates that, compared to the WF algorithm and the state-of-the-art DL-based methods, our approach has lower sample complexity requirement since it searches over a reduced space for the unknown image, as supported by the corresponding MSE values in Fig.~\ref{fig:err_vs_sample_complexity}.

\subsection{Effect of the Number of Training Samples}
Another important criteria is the necessity of having an adequate number of training samples for effective image reconstruction at the decoding network output.
In Fig.~\ref{fig:mse_vs_training_size}, we plot the MSE values for the test set versus the number of training set sizes for $M = 1.36N$ for the PCSWAT dataset.
We consider two cases with the same $\calG$ and $\calH$ network architectures, and the number of RNN stages equal to $5$ and $10$.
As expected, we observe for both cases that an increasing training set size helps $\calH$ to capture the underlying image prior more effectively as long as the $\calH$ architecture has sufficient capacity.
Additionally, it helps the encoder to learn a better encoded image space while simultaneously attaining improved RNN parameter values, which translates into lower MSE values at similar stages of the training process.
As for the two curves corresponding to the different numbers of RNN layers, we observe that as long as the overall imaging network is sufficiently trained, which is represented by the last points on both curves, increasing the number of RNN layers helps improve the reconstruction accuracies.

\vspace{-0.1in}
\subsection{Accuracy of the Initial Value}
In order to observe the effect of $\calG$ on the initialization accuracy and to indirectly verify the observation from~\eqref{eq:dist_y}, we consider three separate mean initialization error related terms for the samples in the test sets, namely, $d_1$, $d_2$ and $d_3$.
We define $d_1$, $d_2$ and $d_3$ as $d_1 = \frac{1}{T_s}\sum_{t = 1}^{T_s}\|\brho^{(0)}_t - \brho^*_t\|^2 / \|\brho^*_t\|^2$, $d_2 = \frac{1}{T_s}\sum_{t = 1}^{T_s}\|\calH(\by^{(0)}_t) - \brho^*_t\|^2 / \|\brho^*_t\|^2$ and $d_3 = \frac{1}{T_s}\sum_{t = 1}^{T_s}\|\by^{(0)}_t - \by^{(L)}_t\|^2 / \|\by^{(L)}_t\|^2$.
A more accurate calculation of the initialization error for the encoded image space, $d_3$, requires $\by^{(L)}_t$ to be replaced by $\by^*_t$.
When the $\calG$ and $\calH$ network architectures, and the numbers of training samples are set as described in Subsection~\ref{subsec:architecture_reconstruction}, and the number of RNN stages is set to $10$, we observe that for the three datasets, the three initialization error related terms have the following values: for MNIST with $M = 0.5N$, $d_1 = 209.344$, $d_2 = 0.999989$ and $d_3 = 0.000145729$; for the synthetic aperture dataset with $M = N$, $d_1 = 2.02657$, $d_2 = 0.300721$ and $d_3 = 0.00103663$; and for the PCSWAT dataset with $M = 1.36N$, $d_1 = 1.49407$, $d_2 = 0.525142$ and $d_3 = 0.00129634$.
In all three cases, we observe that the trained networks produce significantly reduced initialization errors for the encoded image space compared to the ones for the original image space.

\vspace{-0.1in}
\subsection{Effect of SNR of the Intensity Measurements}
The effect of varying SNR values, resulting from the different levels of noise detected at the receiving sensors along with the intensity values of the reflected signals, on the accuracies of the reconstructed images for the synthetic aperture dataset is shown in Fig.~\ref{fig:mse_vs_SNR_mine}.
We compare these values to the corresponding image reconstruction accuracies for the WF algorithm.
With increasing SNR, we observe some reduction in the MSE values, calculated after a fixed number of training updates.
For each case, our DL method is observed to significantly outperform the WF algorithm.

\subsection{Effect of $\calG$ and $\calH$ Architectures and No. of RNN Layers}
\begin{figure*}[!t]
\centering
\subfloat[]{\includegraphics[width=0.68\columnwidth]{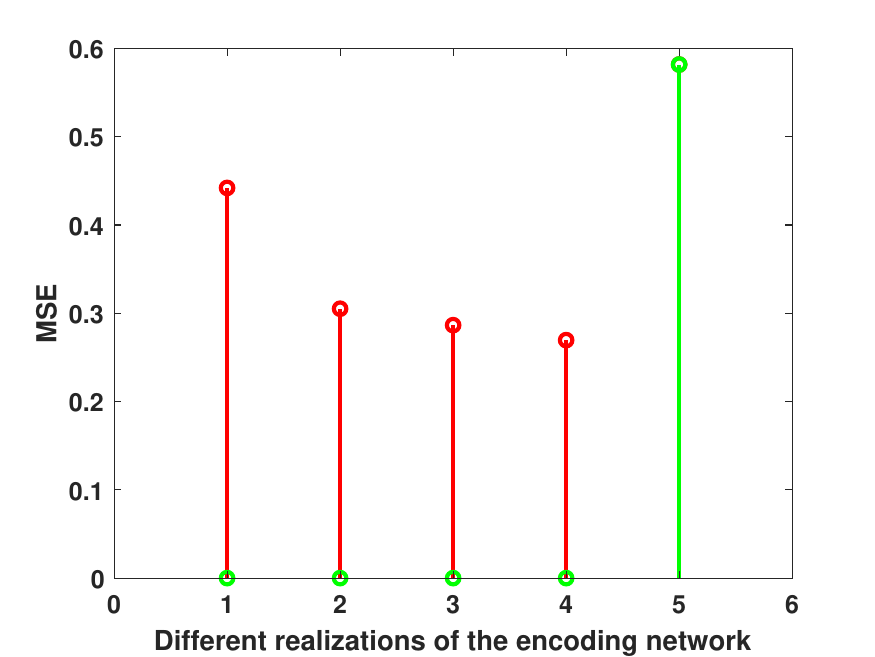}%
\label{fig_capacity_G_RNN5}}
\subfloat[]{\includegraphics[width=0.68\columnwidth]{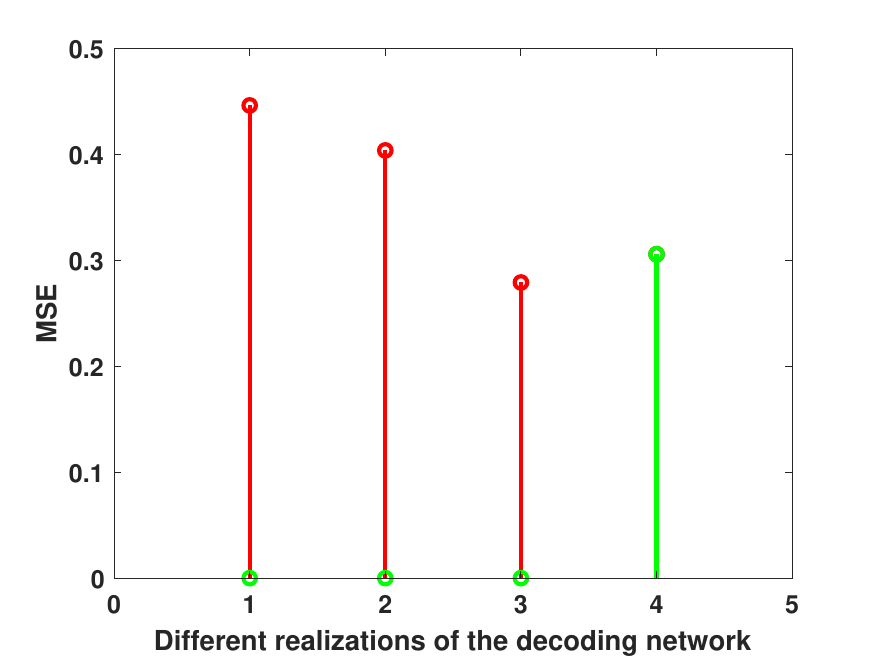}%
\label{fig_capacity_H_RNN5}}
\subfloat[]{\includegraphics[width=0.68\columnwidth]{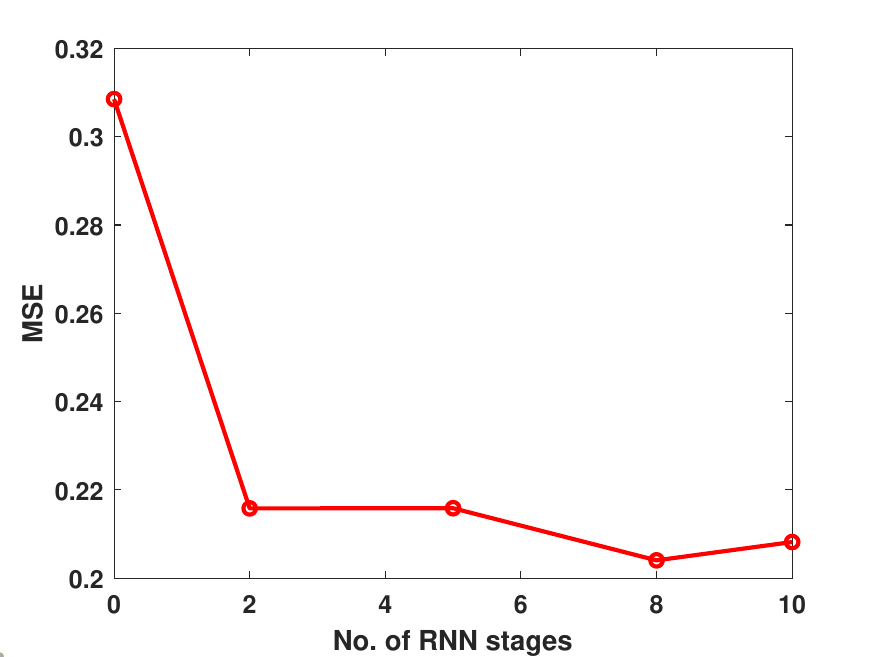}%
\label{fig_mse_vs_no_of_RNN_stages}}
\caption{MSE for the PCSWAT dataset using different (a) $\calG$ and (b) $\calH$ network architectures and $5$ RNN layers. The last points on the x-axis correspond to the linear encoding and decoding network architectures in (a) and (b), respectively. (c) shows the MSE values obtained by using different numbers of RNN stages for fixed $\calG$ and $\calH$ network architectures.}
\label{fig:capacity_G_H}
\end{figure*}
In this subsection, we show the effects of the encoding and decoding network architectures, and the number of RNN stages on the performance of our approach.
We consider the PCSWAT dataset for this purpose, and while evaluating the effect of each of these criteria, for example the $\calG$ network architecture, we keep the remaining elements, i.e. the $\calH$ network architecture and the number of RNN layers, unchanged.
In Fig.~\ref{fig_capacity_G_RNN5} and~\ref{fig_capacity_H_RNN5}, each point along the $x$-axis, corresponds to one realization of the $\calG$ and $\calH$ networks, respectively.
The number of parameters for these different architectures increase from left to right, and the linear architectures for each case are indicated by the last points, where the corresponding linear networks have the maximum number of trainable parameters.
These figures provide the empirical observation that, the MSE value reduces with $\calG$ and $\calH$ network architectures with increasing number of parameters, and a linear encoder is more detrimental than a linear decoder.
Finally, Fig.~\ref{fig_mse_vs_no_of_RNN_stages} verifies that with an increasing number of RNN layers, we can improve our reconstruction quality.

\subsection{Comparison with DL Methods for Fourier Measurements}
%\begin{figure}[!t]
%\centering
%\subfloat[prDeep]{\includegraphics[width=0.17\columnwidth]{Figures/pcswat_prdeep_M_1point5_N}
%\label{fig_pcswat_prdeep_M_1point5_N}}
%\subfloat[UPR]{\includegraphics[width=0.17\columnwidth]{Figures/pcswat_upr_fourier_MequalN}
%\label{fig_pcswat_upr_fourier_MequalN}}
%\subfloat[DL-WF]{\includegraphics[width=0.17\columnwidth]{Figures/pcswat_est_fourier_MequalN}
%\label{fig_pcswat_est_fourier_MequalN}}
%\subfloat[prDeep]{\includegraphics[width=0.17\columnwidth]{Figures/pcswat_prdeep_M_2times_N}
%\label{fig_pcswat_prdeep_M_2times_N}}
%\subfloat[UPR]{\includegraphics[width=0.17\columnwidth]{Figures/pcswat_upr_fourier_M2timesN}
%\label{fig_pcswat_upr_fourier_M2timesN}}
%\subfloat[DL-WF]{\includegraphics[width=0.17\columnwidth]{Figures/pcswat_est_fourier_M2timesN}
%\label{fig_pcswat_est_fourier_M2timesN}}
%\caption{Reconstructed images for the samples in the PCSWAT test imageset from the corresponding Fourier measurements using different DL-based methods. For the ground truth image in Fig.~\ref{fig_pcswat_org1}, (a), (b) and (c) are the reconstructed images using prDeep, UPR and DL-WF, respectively, for $M = 1.5N$; (d), (e) and (f) are the corresponding images for $M = 2N$.}
%\label{fig:pcswat_example_fourier}
%\end{figure}
\begin{figure}[!t]
\centering
\includegraphics[width=1.0\columnwidth]{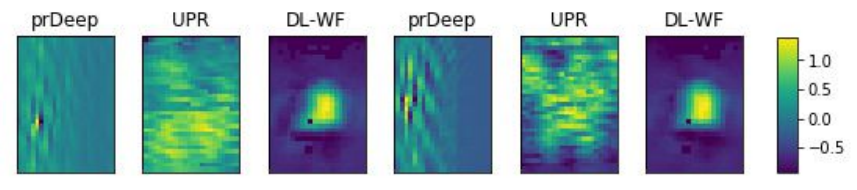}%
\caption{Reconstructed images for the samples in the PCSWAT test imageset from the corresponding Fourier measurements using different DL-based methods. For the ground truth image in the top left corner of Fig.~\ref{fig:pcswat_example}, the first three figures show the reconstructed images using prDeep, UPR and DL-WF, respectively, for $M = 1.5N$; next three figures show the corresponding images for $M = 2N$.}
\label{fig:pcswat_example_fourier}
\end{figure}
In this subsection, we compared our approach to prDeep and UPR, under Fourier measurement models.
We use the images from the PCSWAT dataset along with two cases of the Fourier forward map, where the number of measurements equal to $1.5$ times and $2$ times the number of unknowns.
For our DL-based approach, we adopt a $5$ layer RNN, with $\calG$ and $\calH$ network architectures as presented in Subsection~\ref{subsec:architecture_reconstruction} for the PCSWAT dataset.
Example reconstructed images resulting from the three approaches are shown in Fig.~\ref{fig:pcswat_example_fourier}.
For both values of $\frac{M}{N}$, we observe that our approach outperforms the state-of-the-art DL-based methods under Fourier measurements.

\subsection{Comparison of Computation Time}
Computation times during the inversion phases for our approach and the UPR method are dominated by the time required to compute the spectral initialization output.
On the other hand, for the prDeep approach, significant computational time is involved both for training the denoising network and the inversion phase.
Over all three datasets, we observe that the $20$-layer RNN network for the UPR approach has the lowest computational time, followed by our DL-WF approach, while the WF algorithm and the prDeep approach require the highest computational time during the inversion phase.
As an example, for the PCSWAT dataset, the average computational times required in the inversion phases of the WF algorithm, prDeep, UPR and DL-WF with $L = 5$ are $3.8815$, $9.8472$, $0.0147$, and $0.0148$ minutes, respectively.
For the same dataset, the training time for the UPR and the DL-WF approaches is approximately $2$-$3$ days, while the training time for the denoising network of the prDeep approach is approximately a few hours.

\section{Conclusion}
\label{sec:conclusions}
In this paper, we introduced a DL-based phaseless imaging approach that incorporates an RNN with DL-based encoding-decoding stages, and determined sufficient conditions for exact recovery guarantee.
Our theoretical results show that, depending on the Lipschitz constants of the encoding and the decoding networks, it is possible to achieve improved convergence rate as compared to the WF algorithm~\cite{candes2015phase_IEEE}.
Additionally, the valid range of forward maps for which the exact recovery guarantee holds is less restrictive than those sufficient conditions introduced in earlier works~\cite{hand2018phase, hand2020compressive, yonel2020deterministic}.
Desired spectral property of the decoder for the feasibility of our recovery guarantee reveals that our theoretical results are consistent with the observations for the case with linear Gaussian generative priors and forward maps with i.i.d. Gaussian distributed elements.
Our numerical simulations show the advantages of our approach, under low sample complexity regimes and deterministic forward maps, over the WF algorithm as well as the existing DL-based methods.
In future work, we will consider extensions to take into account partially known forward maps which relates to a multiple scattering within extended objects scenario in practical remote sensing applications.
%Additionally, establishing the existence of an $\calH$ that satisfy the integral assumption in our recovery guarantee regarding the reproducibility of the images at its output, as well as determining necessary conditions to impose on its architecture to realize it in practice are important problems that can be pursued in the future.
Additionally, we will pursue improvements to our approach with deep equilibrium architectures \cite{gilton2021deep} to facilitate more iterations on the lower dimensional encoded space towards higher accuracy in reconstructions.

\appendices
\section{Proof of the Theoretical Results}
\label{sec:proofs}
For the purpose of conciseness, we denote $\brho\brho^H$ and $\brho^*{\brho^*}^H$ by $\tilde{\bP}$ and $\tilde{\bP}^*$, respectively.
Similarly, we use $\tilde{\bY}$ and $\tilde{\bY}^*$ to represent $\by\by^H$ and $\by^*{\by^*}^H$, respectively.
We denote the error term $\calH(\by) - \brho^*$ by $\be_{\brho}$, and the error $\by - \by^*$ by $\be_{\by}$.
For a particular $\by$ in the definitions of $\be_{\by}$ and $\be_{\brho}$, the corresponding $\by^*$ refers to the encoded representation of the element in the solution set $\bbP$ that is closest to $\by$.
Additionally, we denote the multiplication of the Lipschitz constants $\mu_{\calG}$, $\mu_{\calH}$ and $\mu_{\calR}$ by $\mu_M$, and the set composed of these three constants by $\bbV$.

We start by stating important results from~\cite{yonel2020deterministic} that directly translates for the range restricted condition in~\eqref{eq:cond_phaseless2}
%as opposed to
from
the universal condition used in~\cite{yonel2020deterministic}.
Specifically, the results from Lemma III.1 and Lemma III.2 from~\cite{yonel2020deterministic} modified for our sufficient  condition in~\eqref{eq:cond_phaseless2} are restated below:
\begin{lemma}
\label{lemma:lemma_III_1}
(Lemma III.1~\cite{yonel2020deterministic}) \\
If the condition in~\eqref{eq:cond_phaseless2} is satisfied for all $\brho^*\in\bbT$, then the operator $\frac{1}{M}\calF^H\calF : \bbZ \mapsto \bbC^{N \times N}$, where $\bbZ = \{\brho^*{\brho^*}^H : \; \forall\brho^*\in\bbT\}$, can be written as
\begin{align}
    \label{eq:FHF_IRD} \frac{1}{M}\calF^H\calF = \mathcal{I} + \calQ + \Delta,
\end{align}
where $\mathcal{I}(\tilde{\bP}^*) = \tilde{\bP}^*$, $\calQ(\tilde{\bP}^*) = \|\brho^*\|^2\bI$, and for all $\brho_1, \brho_2 \in\bbT$,
\begin{align}
    \label{eq:calR_rho1_rho2} \calQ(\tilde{\bP}_1 - \tilde{\bP}_2) = (\|\brho_1\|^2 - \|\brho_2\|^2)\bI.
\end{align}
For $\Delta : \bbZ \mapsto \bbC^{N \times N}$, $\|\Delta(\tilde{\bP}^*)\| \leq \delta\|\brho^*\|^2$ with high probability.
\end{lemma}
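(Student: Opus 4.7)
The plan is to recognize that this lemma is essentially a decomposition-and-bound restatement of the concentration hypothesis~\eqref{eq:cond_phaseless2}, rather than a deep structural result. I would first \emph{define} the three components that are claimed to exist, and then verify that the claimed properties fall out of the assumption and basic linear algebra.

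First, I would set $\mathcal{I}$ to be the identity map on $\bbZ$ and define $\calQ$ on the linear span of $\bbZ$ by $\calQ(\bA) := \mathrm{Tr}(\bA)\,\bI$. This immediately gives $\mathcal{I}(\brho^*{\brho^*}^H) = \brho^*{\brho^*}^H$ and $\calQ(\brho^*{\brho^*}^H) = \mathrm{Tr}(\brho^*{\brho^*}^H)\,\bI = \|\brho^*\|^2\,\bI$, matching the stated forms. Because $\calQ$ is linear, the difference identity~\eqref{eq:calR_rho1_rho2} is then immediate:
\begin{equation*}
\calQ(\tilde{\bP}_1 - \tilde{\bP}_2) = \calQ(\tilde{\bP}_1) - \calQ(\tilde{\bP}_2) = (\|\brho_1\|^2 - \|\brho_2\|^2)\,\bI.
\end{equation*}

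Next, I would simply \emph{define} the residual operator $\Delta : \bbZ \to \bbC^{N\times N}$ by
\begin{equation*}
\Delta(\brho^*{\brho^*}^H) := \tfrac{1}{M}\calF^H\calF(\brho^*{\brho^*}^H) - \mathcal{I}(\brho^*{\brho^*}^H) - \calQ(\brho^*{\brho^*}^H),
\end{equation*}
which makes the decomposition~\eqref{eq:FHF_IRD} hold by construction. The norm bound $\|\Delta(\brho^*{\brho^*}^H)\| \leq \delta\|\brho^*\|^2$ is then just a restatement of the hypothesis~\eqref{eq:cond_phaseless2}: since for every $\brho^* \in \bbT$ there exists $\by^* \in \bbY$ with $\calH(\by^*) = \brho^*$, condition~\eqref{eq:cond_phaseless2} evaluated at $\by^*$ reads
\begin{equation*}
\left\|\tfrac{1}{M}\calF^H\calF(\brho^*{\brho^*}^H) - \bigl(\brho^*{\brho^*}^H + \|\brho^*\|^2\bI\bigr)\right\| \leq \delta\|\brho^*\|^2,
\end{equation*}
which is exactly $\|\Delta(\brho^*{\brho^*}^H)\| \leq \delta\|\brho^*\|^2$.

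Since every step is either a definition or a direct substitution, there is no real obstacle to overcome; the only point requiring any care is ensuring that $\calQ$ is treated consistently as a linear map on the span of $\bbZ$ so that~\eqref{eq:calR_rho1_rho2} is legitimate (as opposed to applying $\calQ$ to differences that are not themselves rank-1 PSD matrices). Using the trace representation $\calQ(\bA) = \mathrm{Tr}(\bA)\,\bI$ resolves this transparently, and no probabilistic argument is needed for the norm bound because the ``with high probability'' phrasing is inherited directly from whatever source establishes the hypothesis~\eqref{eq:cond_phaseless2}.
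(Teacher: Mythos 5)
Your proposal is correct and takes essentially the same route as the paper, which simply defers to Lemma III.1 of~\cite{yonel2020deterministic}: the underlying argument there is exactly this define-and-verify decomposition, with $\calQ(\bA) = \mathrm{Tr}(\bA)\,\bI$ extended linearly so that~\eqref{eq:calR_rho1_rho2} is immediate, $\Delta$ defined as the residual so that~\eqref{eq:FHF_IRD} holds by construction, and the norm bound on $\Delta$ obtained by evaluating~\eqref{eq:cond_phaseless2} at the $\by^*\in\bbY$ with $\calH(\by^*) = \brho^*$. Your closing observation is also apt: under the deterministic hypothesis~\eqref{eq:cond_phaseless2} the bound holds deterministically, and the ``with high probability'' phrasing in the lemma statement is inherited from the statistical setting of the cited source rather than requiring any probabilistic argument here.
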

\begin{proof}
See proof of Lemma III.1 in~\cite{yonel2020deterministic}.
\end{proof}
\begin{lemma}
(Lemma III.2~\cite{yonel2020deterministic})
Let the condition in~\eqref{eq:cond_phaseless2} is satisfied for all $\brho^*\in\bbT$.
If $\brho^{(0)}$ is set equal to $a_{\rho_0}\bz_0$, where $a_{\rho_0} = \frac{1}{(2M)^{1 / 4}}\sqrt{\|\bd\|}$ and $\bz_0$ is the leading eigenvector of the spectral matrix $\bY$ defined in
%Section II,
\eqref{eq:spectral_matrix},
then
\begin{align}
    \label{eq:dist_rho0_rhoc4} \rmdist(\brho^{(0)}, \brho^*) \leq \epsilon\|\brho^*\|,
\end{align}
where $\epsilon$ is defined in (21) of~\cite{yonel2020deterministic}.
\end{lemma}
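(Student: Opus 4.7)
The lemma is the exact analogue of the spectral-initialization result in~\cite{yonel2020deterministic} adapted to the range-restricted concentration setting of the present paper, so the plan is to follow the same architecture: use Lemma~\ref{lemma:lemma_III_1} to decompose the spectral matrix $\bY$ as a structured rank-perturbed identity plus a small perturbation, then transfer its operator-norm bound into a direction error for $\bz_0$ and a norm-estimate error for $a_{\rho_0}$ separately. Concretely, since $\bd=\calF(\brho^*{\brho^*}^H)$ and~\eqref{eq:cond_phaseless2} holds at $\brho^*\in\bbT$, Lemma~\ref{lemma:lemma_III_1} gives $\bY = \brho^*{\brho^*}^H + \|\brho^*\|^2\bI + \Delta(\brho^*{\brho^*}^H)$ with $\|\Delta(\brho^*{\brho^*}^H)\|\leq\delta\|\brho^*\|^2$. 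The unperturbed part $\tilde\bB^* := \brho^*{\brho^*}^H + \|\brho^*\|^2\bI$ has leading eigenvalue $2\|\brho^*\|^2$ along $\brho^*/\|\brho^*\|$ and all remaining eigenvalues equal to $\|\brho^*\|^2$, so its spectral gap is $\|\brho^*\|^2$.

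For the direction, this gap is preserved under the $\delta\|\brho^*\|^2$ perturbation for $\delta<1/2$ by Weyl's inequality, and the Davis--Kahan $\sin\theta$ theorem yields $\sin\theta(\bz_0,\brho^*/\|\brho^*\|)\leq\delta/(1-2\delta)$ up to a unit-modulus phase $\rme^{\rmi\phi^*}$, which converts into a chord-distance bound $\|\bz_0\rme^{\rmi\phi^*} - \brho^*/\|\brho^*\|\|\leq c_1(\delta)$ for an explicit $c_1(\delta)\to 0$. For the scaling, the identity $\|\bd\|^2 = \|\calF(\brho^*{\brho^*}^H)\|^2 = M\,{\brho^*}^H\bY\brho^*$, obtained by $\ell_2$-adjointness of $\calF$, is the key. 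Substituting the decomposition of $\bY$ yields $\|\bd\|^2 = 2M\|\brho^*\|^4 + M\,{\brho^*}^H\Delta(\brho^*{\brho^*}^H)\brho^*$ with the error term bounded in magnitude by $M\delta\|\brho^*\|^4$; combined with $a_{\rho_0}^2=\|\bd\|/\sqrt{2M}$ this produces $|a_{\rho_0}-\|\brho^*\||\leq c_2(\delta)\|\brho^*\|$ with an explicit $c_2(\delta)\to 0$.

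To finish, write $\brho^{(0)} - \brho^*\rme^{\rmi\phi^*} = (a_{\rho_0} - \|\brho^*\|)\bz_0\rme^{\rmi\phi^*} + \|\brho^*\|\,(\bz_0\rme^{\rmi\phi^*} - \brho^*/\|\brho^*\|)$ and apply the triangle inequality with $\|\bz_0\|=1$ to conclude $\rmdist(\brho^{(0)},\brho^*)\leq (c_1(\delta)+c_2(\delta))\|\brho^*\|$, which upon tracking the explicit constants reproduces the $\epsilon$ of equation (21) in~\cite{yonel2020deterministic}. I do not expect any genuine obstacle beyond constant-tracking: the main conceptual point is that both the direction of $\bz_0$ and the scaling $a_{\rho_0}$ are governed by the \emph{same} operator-norm bound on $\Delta$ provided by Lemma~\ref{lemma:lemma_III_1}, so once that lemma is in hand, the remainder is a Weyl/Davis--Kahan argument together with a one-line quadratic-form manipulation for $\|\bd\|^2$.
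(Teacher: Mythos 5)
Your reconstruction is correct and follows the same route as the proof the paper relies on: the paper itself gives no argument for this lemma, deferring entirely to Lemma III.2 of~\cite{yonel2020deterministic}, whose proof is exactly the decomposition $\bY = \brho^*{\brho^*}^H + \|\brho^*\|^2\bI + \Delta(\brho^*{\brho^*}^H)$ supplied here by Lemma~1, followed by an eigenvector-perturbation bound across the spectral gap $\|\brho^*\|^2$ for the direction of $\bz_0$ and the quadratic-form identity $\|\bd\|^2 = M\,{\brho^*}^H\bY\brho^*$ for the scale $a_{\rho_0}$, combined by the triangle inequality. The only cosmetic differences are that the cited work argues directly from the eigenvalue equation rather than quoting Davis--Kahan by name, and that your constants $c_1(\delta)$, $c_2(\delta)$ would need to be tracked to match the specific $\epsilon$ of their equation (21) --- whose validity requires $\delta < 0.5$, consistent with the gap-preservation condition in your Weyl step.
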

\begin{proof}
See proof of Lemma III.2 in~\cite{yonel2020deterministic}.
\end{proof}
For the expression of $\epsilon$ from~\cite{yonel2020deterministic} to be valid, $\delta$ in~\eqref{eq:cond_phaseless2} should be less than $0.5$.

\subsection{Eigenvectors and Eigenvalues of $\bE_{\brho}$}
\label{app:eigenvector_lifted_error}
Let $\lambda_i\in\mathbbm{R}$ and $\bv_i\in\bbC^N$ denote the $i^{th}$ eigenvalue and the corresponding eigenvector of $\bE_{\brho}$, defined in~\eqref{eq:E_rho_def}, for $i\in[2]$.
Since $\bE_{\brho}\bv_i$ can be expanded as $\be_{\brho}(\calH(\by)^H\bv_i) + \brho^*(\be^H_{\brho}\bv_i)$, by left multiplying both sides of
\begin{align}
    \label{eq:lambda_v} \lambda_i\bv_i & = \be_{\brho}(\calH(\by)^H\bv_i) + \brho^*(\be^H_{\brho}\bv_i),
\end{align}
by $\calH(\by)^H$ followed by a rearrangement step, $\calH(\by)^H\bv_i$ can be expressed as $\alpha_i\be^H_{\brho}\bv_i$ where $\alpha_i := \frac{\calH(\by)^H\brho^*}{\lambda_i - \calH(\by)^H\be_{\brho}}$.
By substituting this expression on the right hand side of~\eqref{eq:lambda_v}, we can express $\bv_i$ as
\begin{align}
    \label{eq:bv_i} \bv_i = \frac{1}{\lambda_i}(\be^H_{\brho}\bv_i)\left[\alpha_i\calH(\by) + (1 - \alpha_i)\brho^*\right],
\end{align}
from which it is evident that $\bv_i$ is a weighted summation of $\calH(\by)$ and $\brho^*$.
Additionally, since~\eqref{eq:bv_i} can be rearranged as
\begin{align}
\lambda_i\bv_i = \left[\alpha_i\calH(\by) + (1 - \alpha_i)\brho^*\right]\be^H_{\brho}\bv_i,
\end{align}
we observe that $\lambda_i$ and $\bv_i$ are respectively the eigenvalue and corresponding eigenvector of a rank-1 matrix $\left[\alpha_i\calH(\by) + (1 - \alpha_i)\brho^*\right]\be^H_{\brho}$.
From this observation, $\lambda_i$ and $\bv_i$ can be directly expressed as $\lambda_i = \be^H_{\brho}\bt_i$ and $\bv_i = \bt_i / \|\bt_i\|$, where $\bt_i = \alpha_i\calH(\by) + (1 - \alpha_i)\brho^*$.
Additionally, by replacing $\lambda_i$ by $\be^H_{\brho}\bt_i$ in the definition of $\alpha_i$ followed by a rearrangement step, we get the following quadratic equation of $\alpha_i$:
\begin{align}
    \|\be_{\brho}\|^2\alpha^2_i + (\be^H_{\brho}\brho^* - \calH(\by)^H\be_{\brho})\alpha_i - \calH(\by)^H\brho^* = 0.
\end{align}
The solutions to this equation provide the required expressions of $\alpha_i$'s as functions of $\calH(\by)$ and $\brho^*$.

\subsection{Derivation of~\eqref{eq:finalCond} from~\eqref{eq:Hcondition}}
\label{app:derivation_26_27}
To observe how~\eqref{eq:Hcondition} delivers the desired bound on the perturbation operator in~\eqref{eq:finalCond}, we use the definition of $\tilde{\calH}$, which extracts the rank-1 PSD decomposition of its input to generate elements in the range of $\calH$.
Under~\eqref{eq:Hcondition},
\begin{align}
    \| \Delta(\tilde{\calH}(\tilde{\bY}) -  \tilde{\calH}(\tilde{\bY}^*)) \| & \leq \omega \|  \Delta ( \calH (\sqrt{\lambda_0} \bu_0 ) \calH (\sqrt{\lambda_0} \bu_0 )^H) \|,
\end{align}
where $\lambda_0$ and $\bu_0$ are the leading eigenvalue and eigenvector of $\tilde{\bY} - \tilde{\bY}^*$, respectively.
Under the assumption that $\bbY$ is an affine set, it is evident from the spectral analysis of $\tilde{\bY} - \tilde{\bY}^*$ that $\bu_0\in\bbY$.
Therefore, by applying our sufficient condition from~\eqref{eq:cond_phaseless2}, we observe that
\begin{align}
    \| \Delta(\tilde{\calH}(\tilde{\bY}) -  \tilde{\calH}(\tilde{\bY}^*) \| \leq \omega\delta \| \calH (\sqrt{\lambda_0} \bu_0 ) \|^2.
\end{align}
Now, using the Lipschitz constant of $\calH$, $\| \calH (\sqrt{\lambda_0} \bu_0 ) \|^2$ is upper bounded by $\mu^2_{\calH}\lambda_0$.
Since $\lambda_0 = \| \tilde{\bY} - \tilde{\bY}^* \| \leq \| \tilde{\bY} - \tilde{\bY}^* \|_F $ by definition, we get inequality relation in~\eqref{eq:finalCond}.

\subsection{Derivation of~\eqref{eq:dist_y} and~\eqref{eq:k_ub_lb}}
\label{app:initialization}
We begin this subsection by defining $b_1(\mu_{\calG}, \mu_{\calH}, \epsilon)$ and $b_2(\bbV, \epsilon)$, that appear in the lower bound expression in~\eqref{eq:k_ub_lb}, as follows:
\begin{align}
    \label{eq:b1_def} b_1(\mu_{\calG}, \mu_{\calH}, \epsilon) & := \frac{1}{\epsilon}\left(\frac{1}{\mu_{\calH}} - \mu_{\calG}(1 + \epsilon)\right), \\
    \label{eq:b2_def} b_2(\bbV, \epsilon) & := \frac{\mu_{\calG}}{\epsilon \mu_M}(1 - \mu_{\calR}).
\end{align}
We denote $\brho^{(0)} - \brho^*$ and $\by^{(0)} - \by^*$ by $\be^{(0)}_{\brho}$ and $\be^{(0)}_{\by}$, respectively.
The $\ell_2$-norm of the initial distance metric for the encoded representations,  $\|\be^{(0)}_{\by}\|$, relates to $\calG$ and $\calH$ as $\|\be^{(0)}_{\by}\| = \|\calG(\brho^{(0)}) - \calH^{-1}(\brho^*))\|$ where $\by^* = \calH^{-1}(\brho^*)$.
$\be^{(0)}_{\by}$ is therefore dependent on the properties of both networks, and one of the goals during training is to reduce the value of $\rmdist(\by^{(0)}, \by^*)$, defined in~\eqref{eq:dist_definition}, compared to $\rmdist(\brho^{(0)}, \brho^*)$.

Suppose $\chi\in\mathbbm{R}^+$ is the smallest value for which $\|\by^{(0)} - \by^*\| \leq \chi\|\be^{(0)}_{\brho}\|$ for all $\brho^*\in\bbT$, i.e., there is at least one $\brho^*$ for which this upper bound holds with equality.
Then, by using the upper bound expression from~\eqref{eq:dist_rho0_rhoc4}, the Lipschitz constant of the decoder and the assumption that $\calH(\bzero) = \bzero$, we can write
\begin{align}
    \label{eq:e0y_norm_ub} \|\be^{(0)}_{\by}\| \leq \chi\epsilon\mu_{\calH}\|\by^*\|.
\end{align}
Moreover, since $\|\by^{(0)} - \by^*\| \geq |\|\by^{(0)}\| - \|\by^*\||$, we can write
\begin{align}
\label{eq:calG_rho0_calH_inv} \|\by^*\| - \chi\|\be^{(0)}_{\brho}\| & \leq \|\by^{(0)}\| \leq \|\by^*\| + \chi\|\be^{(0)}_{\brho}\|.
\end{align}
Due to the bijective property of $\calH$ over $\bbT$, we have $\|\by^*\| \geq \frac{1}{\mu_{\calH}}\|\brho^*\|$.
Therefore, from~\eqref{eq:calG_rho0_calH_inv},
\begin{align}
\label{eq:calG_rho0_calH_inv2} \frac{1}{\mu_{\calH}} - \chi\epsilon & \leq \frac{\|\by^{(0)}\|}{\|\brho^*\|} \leq \frac{\| \by^* \|}{\|\brho^*\|} + \chi\epsilon.
\end{align}
Now, from~\eqref{eq:dist_rho0_rhoc4}, we have $(1 - \epsilon)\|\brho^*\| \leq \|\brho^{(0)}\| \leq (1 + \epsilon)\|\brho^*\|$ for $0 \leq \epsilon < 1$, and $0 \leq \|\brho^{(0)}\| \leq (1 + \epsilon)\|\brho^*\|$ for $\epsilon \geq 1$.
Therefore, since $\|\by^{(0)}\| \leq \mu_{\calG}(1 + \epsilon)\|\brho^*\|$, by using the lower bound expression from~\eqref{eq:calG_rho0_calH_inv2}, we can write
\begin{align}
    \left(\frac{1}{\mu_{\calH}} - \chi\epsilon\right) & \leq \mu_{\calG}(1 + \epsilon),
\end{align}
which upon rearrangement results in $\chi \geq b_1(\mu_{\calG}, \mu_{\calH}, \epsilon)$.

Moreover, in the upper bound expression in~\eqref{eq:calG_rho0_calH_inv2}, we can use the fact that $\by^* = R \circ \calG(\brho^{(0)})$ if the gradient descent updates in the RNN part of the inversion network converge to the correct encoded image $\by^*$.
Since $\calH(\bzero) = \bzero$, from~\eqref{eq:nablaJ}, we observe that the output of a gradient descent step in the RNN will be zero for a zero input vector, and this leads to $\calR(\bzero) = \bzero$.
Therefore, by expressing $\| \calR(\by^{(0)})\|$ as $\| \calR(\by^{(0)}) - \calR(\bzero)\|$, we can upper bound it by $\mu_{\calR}\mu_{\calG}\|\brho^{(0)}\|$.
Using this expression and the upper bound relation from~\eqref{eq:calG_rho0_calH_inv2}, we observe that the following inequality relation is true:
\begin{align}
    \label{eq:calG_rho0_calH_inv3} \|\by^{(0)}\| & \leq \mu_{\calR}\mu_{\calG}\|\brho^{(0)}\| + \chi\epsilon\|\brho^*\|.
\end{align}

Also, since $\|\brho^*\| \leq \mu_M\|\brho^{(0)}\|$, the upper bound from~\eqref{eq:calG_rho0_calH_inv3} can be at most $(1 + \chi\epsilon \mu_{\calH})\mu_{\calR}\mu_{\calG}\|\brho^{(0)}\|$ for all $\brho^{(0)}\in\bbS$.
From the definition of $\mu_{\calG}$, we know that it is the lowest value for which $\|\by^{(0)}\| \leq \mu_{\calG}\|\brho^{(0)}\|$ for all $\brho^{(0)}\in\bbS$.
This indicates that $\mu_{\calG} \leq (1 + \chi\epsilon \mu_{\calH})\mu_{\calR}\mu_{\calG}$ or $\chi \geq b_2(\bbV, \epsilon)$.
Therefore, we get the inequality relation in~\eqref{eq:k_ub_lb}.
Additionally, for $0 \leq \epsilon < 1$, we know from~\eqref{eq:dist_rho0_rhoc4} that $\|\brho^{(0)}\| \geq (1 - \epsilon)\|\brho^*\|$, and since $\|\brho^*\| \leq \mu_M\|\brho^{(0)}\|$ for all $\brho^{(0)}\in\bbS$, we require that $\frac{1}{\mu_M} \geq (1 - \epsilon)$.

\subsection{Proof of Theorem 1}
\label{app:proof_theorem2}
Let $\bbN_{\epsilon_{\by}}(\by^*)$ denote the $\epsilon_{\by}$-neighborhood of $\by^*$, and it is defined as $\bbN_{\epsilon_{\by}}(\by^*) = \{\by\in\bbY:\rmdist(\by, \by^*)\leq \epsilon_{\by}\|\by^*\|\}$.
For the $l^{th}$ RNN layer, we denote $\by^{(l)} - \by^*$ by $\be^{(l)}_{\by}$ for $l\in[L]$.
Positive real-valued constants $\epsilon_{\by}$ and $\epsilon_{\brho}$ are defined as $\epsilon_{\by} := \chi\mu_{\calH}\epsilon$ and $\epsilon_{\brho} := \mu_M\epsilon_{\by}(1 + \epsilon)$, respectively, where $\chi$ satisfies the inequality relation from~\eqref{eq:k_ub_lb}.
Finally, $\epsilon$ relates to the $\delta$ constant from~\eqref{eq:cond_phaseless2} as shown in (21) in~\cite{yonel2020deterministic}.

%\vspace{0.05in}
\subsubsection{Cost Function Gradient Representation Using~\eqref{eq:cond_phaseless2}}
We begin by presenting a lemma that states two inequality relations for the output vector of the decoder that are used to derive our exact recovery guarantee result under our sufficient condition from~\eqref{eq:cond_phaseless2} and~\eqref{eq:Hcondition}.
\begin{lemma}
\label{lemma:caHy_minus_calHyc_calHy}
Suppose $\calH$ and $\calG$ satisfy the following conditions:
$\calH(\bzero) = \bzero$ and $\calG(\bzero) = \bzero$.
Additionally, $\brho^* = \calH(\by^*)$ for all $\brho^* \in \bbT$, where $\by^* = R \circ \calG(\brho^{(0)})$ and $\brho^{(0)}$ is calculated using spectral initialization scheme from the intensity measurements.
Then, for all $\by\in\bbN_{\epsilon_{\by}}(\by^*)$, we have $\|\calH(\by)\| \leq \mu_{\calH}(1 + \epsilon_{\by})\|\by^*\|$ and
\begin{align}
\label{eq:calH_calH_c} \left(1 - \epsilon_{\brho}\right)\|\calH(\by^*)\| & \leq \|\calH(\by)\| \leq \left(1 + \epsilon_{\brho}\right)\|\calH(\by^*)\|.
\end{align}
\end{lemma}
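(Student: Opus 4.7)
The plan is to prove both inequalities by repeated application of the Lipschitz property of $\calH$ combined with the neighborhood assumption $\by \in \bbN_{\epsilon_{\by}}(\by^*)$, and then translate a bound in terms of $\|\by^*\|$ into one in terms of $\|\calH(\by^*)\|$ using the composition $\by^* = \calR \circ \calG(\brho^{(0)})$ together with the spectral initialization guarantee.

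For the first inequality, I would write $\|\calH(\by)\| = \|\calH(\by) - \calH(\bzero)\| \le \mu_{\calH}\|\by\|$ using the assumption $\calH(\bzero)=\bzero$ and the global Lipschitz property in \eqref{eq:LipH}. The triangle inequality combined with the neighborhood hypothesis $\|\by - \by^*\| \le \epsilon_{\by}\|\by^*\|$ (up to a global phase, which does not affect the norm) then gives $\|\by\| \le (1+\epsilon_{\by})\|\by^*\|$, yielding the first claim directly.

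For the second inequality, I would apply the reverse triangle inequality together with Lipschitz to bound $\bigl|\|\calH(\by)\| - \|\calH(\by^*)\|\bigr| \le \mu_{\calH}\|\by - \by^*\| \le \mu_{\calH}\epsilon_{\by}\|\by^*\|$. The key step is then to replace the $\|\by^*\|$ on the right-hand side by a multiple of $\|\calH(\by^*)\| = \|\brho^*\|$. Using the hypothesis $\by^* = \calR\circ\calG(\brho^{(0)})$, the fact (established earlier in Appendix~\ref{subsec:initialization} from $\calH(\bzero) = \bzero$ and the zero output of the WF gradient at the zero vector) that $\calR(\bzero) = \bzero$, and $\calG(\bzero) = \bzero$, I would apply the Lipschitz bounds of $\calR$ and $\calG$ sequentially to obtain $\|\by^*\| \le \mu_{\calR}\mu_{\calG}\|\brho^{(0)}\|$. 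The spectral initialization bound \eqref{eq:dist_rho0_rhoc4} then supplies $\|\brho^{(0)}\| \le (1+\epsilon)\|\brho^*\|$, so that $\mu_{\calH}\epsilon_{\by}\|\by^*\| \le \mu_{\calG}\mu_{\calR}\mu_{\calH}(1+\epsilon)\,\epsilon_{\by}\|\calH(\by^*)\| = \epsilon_{\brho}\|\calH(\by^*)\|$ by the definitions $\mu_M = \mu_{\calG}\mu_{\calR}\mu_{\calH}$ and $\epsilon_{\brho} = \mu_M(1+\epsilon)\epsilon_{\by}$. Rearranging the reverse-triangle bound yields both sides of \eqref{eq:calH_calH_c}.

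The argument is essentially a chain of Lipschitz estimates, so there is no single hard step; the main care point is keeping the hypotheses straight. In particular, one has to justify $\calR(\bzero) = \bzero$ (which follows because each WF update has the form \eqref{eq:yl_y_l_minus_1}–\eqref{eq:nablaJ} and $\calH(\bzero) = \bzero$ forces $\nabla\calK(\bzero) = \bzero$, so the recursion starting at the zero vector stays at zero), and one must be careful that the distance in $\bbN_{\epsilon_{\by}}(\by^*)$ is defined modulo a global phase, so that the bound $\|\by\| \le (1+\epsilon_{\by})\|\by^*\|$ is applied after absorbing the optimal phase factor into $\by^*$. Apart from these bookkeeping points, no additional structural property of the networks is needed beyond those already assumed.
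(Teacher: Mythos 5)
Your proposal is correct and follows essentially the same chain of Lipschitz estimates as the paper's proof: bound $\|\be_{\brho}\|\le\mu_{\calH}\epsilon_{\by}\|\by^*\|$, apply the (reverse) triangle inequality, and convert $\|\by^*\|$ into $\mu_{\calR}\mu_{\calG}(1+\epsilon)\|\brho^*\|$ via $\calR(\bzero)=\bzero$, $\calG(\bzero)=\bzero$ and the spectral-initialization bound, which produces the $\epsilon_{\brho}$ factor. Your direct derivation of the first claim via $\|\calH(\by)\|\le\mu_{\calH}\|\by\|\le\mu_{\calH}(1+\epsilon_{\by})\|\by^*\|$ is a harmless (and in fact slightly cleaner) variant that matches the stated $(1+\epsilon_{\by})$ factor exactly.
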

\begin{proof}
Proof is presented in Subsection~\ref{app:calH_y_calH_yc}.
\end{proof}
This lemma states that for all $\by\in\bbN_{\epsilon_{\by}}(\by^*)$, the corresponding $\calH(\by)$ values in $\bbT$ are within an $\epsilon_{\brho}$-neighbourhood of the unknown ground truth image $\brho^*$.

By using fact that $\be = \calF(\bE_{\brho})$, where $\be$ is defined in Subsection~\ref{subsec:objective}, and the expression of $\frac{1}{M}\calF^H\calF$ from~\eqref{eq:FHF_IRD} in the gradient expression of the cost function $\calK(\by)$ in~\eqref{eq:nablaJ}, we have 
\begin{align}
    \nabla\calK(\by) & = \nabla\calH(\by)\left[\bE_{\brho} + \left(\|\calH(\by)\|^2 - \|\brho^*\|^2\right)\bI + \Delta\left(\bE_{\brho}\right)\right] \nonumber \\
    \label{eq:nablaJ1} & \times \calH(\by).
\end{align}
The relation for the operator $\calQ$ from~\eqref{eq:calR_rho1_rho2} is used in the above expression.
In the following lemma, we present an upper bound on $\|\nabla\calK(\by)\|$ by utilizing the results from Lemma~\ref{lemma:caHy_minus_calHyc_calHy} and the condition on $\calH$ from~\eqref{eq:Hcondition}:
\begin{lemma}
\label{lemma:norm_del_calK}
If the condition in~\eqref{eq:cond_phaseless2} holds for all $\brho^*\in\bbT$, then under the condition on $\calH$ from~\eqref{eq:Hcondition},
\begin{align}
    \label{eq:delJ_norm3} \|\nabla\calK(\by)\| & \leq \mu^4_{\calH}c(\delta, \epsilon_{\by})\|\by^*\|^2\|\be_{\by}\|,
\end{align}
for all $\by\in \bbN_{\epsilon_{\by}}(\by^*)$ where $\brho^* = \calH(\by^*)$ and $c(\delta, \epsilon_{\by}) := (1 + \epsilon_{\by})(2 + \epsilon_{\by})(2 + \omega\delta)$.
\end{lemma}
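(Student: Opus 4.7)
The plan is to bound $\|\nabla\calK(\by)\|$ directly from the representation in~\eqref{eq:nablaJ1} by peeling off the outer factors $\nabla\calH(\by)$ and $\calH(\by)$ via their Lipschitz/size control, and then bounding the bracketed operator term by the triangle inequality applied to its three summands: the lifted error $\bE_{\brho}$, the scalar offset $(\|\calH(\by)\|^2 - \|\brho^*\|^2)\bI$, and the perturbation $\Delta(\bE_{\brho})$. Since all three will turn out to share a common factor of $\mu_{\calH}(2+\epsilon_{\by})\|\by^*\|\|\be_{\by}\|$ (with an extra $\omega\delta$ on the last), the constant $c(\delta,\epsilon_{\by})=(1+\epsilon_{\by})(2+\epsilon_{\by})(2+\omega\delta)$ should fall out naturally.

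First, I would invoke $\|\nabla\calH(\by)\|\le\mu_{\calH}$ from the definition of the decoder's Lipschitz constant, and $\|\calH(\by)\|\le\mu_{\calH}(1+\epsilon_{\by})\|\by^*\|$ directly from Lemma~\ref{lemma:caHy_minus_calHyc_calHy}, which absorbs the factor $(1+\epsilon_{\by})$ and two powers of $\mu_{\calH}$ in the final bound. Next, I would write $\bE_{\brho}=\be_{\brho}\calH(\by)^{H}+\brho^{*}\be_{\brho}^{H}$ and use the triangle inequality to obtain $\|\bE_{\brho}\|\le(\|\calH(\by)\|+\|\brho^{*}\|)\|\be_{\brho}\|$. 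Combining $\|\calH(\by)\|\le\mu_{\calH}(1+\epsilon_{\by})\|\by^*\|$ with $\|\brho^{*}\|=\|\calH(\by^{*})\|\le\mu_{\calH}\|\by^*\|$ (using $\calH(\bzero)=\bzero$), and the Lipschitz bound $\|\be_{\brho}\|\le\mu_{\calH}\|\be_{\by}\|$, yields $\|\bE_{\brho}\|\le\mu_{\calH}^{2}(2+\epsilon_{\by})\|\by^*\|\|\be_{\by}\|$. The same factorization $\|\calH(\by)\|^{2}-\|\brho^{*}\|^{2}=(\|\calH(\by)\|-\|\brho^{*}\|)(\|\calH(\by)\|+\|\brho^{*}\|)$ together with $|\|\calH(\by)\|-\|\brho^{*}\||\le\|\be_{\brho}\|$ gives the identical upper bound for the second summand.

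The main obstacle is the third summand $\|\Delta(\bE_{\brho})\|$, which cannot be controlled by~\eqref{eq:cond_phaseless2} alone because $\bE_{\brho}$ is not a rank-1 PSD element in the range of $\calH$. The resolution is to recognize that, for rank-1 PSD inputs, $\tilde{\calH}(\by\by^{H})=\calH(\by)\calH(\by)^{H}$ and $\tilde{\calH}(\by^{*}{\by^{*}}^{H})=\brho^{*}{\brho^{*}}^{H}$, so that $\bE_{\brho}=\tilde{\calH}(\by\by^{H})-\tilde{\calH}(\by^{*}{\by^{*}}^{H})$. This allows me to invoke the $\calH$-condition~\eqref{eq:Hcondition} followed by~\eqref{eq:finalCond} with $\hat{\delta}=\omega\mu_{\calH}^{2}\delta$, which is derived in Appendix~\ref{app:derv_finalcond_Hcond}. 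Finally, $\|\by\by^{H}-\by^{*}{\by^{*}}^{H}\|_{F}\le(\|\by\|+\|\by^{*}\|)\|\be_{\by}\|\le(2+\epsilon_{\by})\|\by^*\|\|\be_{\by}\|$ by the same rank-two factorization trick, after using $\by\in\bbN_{\epsilon_{\by}}(\by^{*})$ to bound $\|\by\|\le(1+\epsilon_{\by})\|\by^*\|$.

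Assembling the pieces, the bracket in~\eqref{eq:nablaJ1} is bounded in operator norm by $\mu_{\calH}^{2}(2+\epsilon_{\by})(2+\omega\delta)\|\by^*\|\|\be_{\by}\|$, and multiplying by $\|\nabla\calH(\by)\|\|\calH(\by)\|\le\mu_{\calH}^{2}(1+\epsilon_{\by})\|\by^*\|$ produces exactly $\mu_{\calH}^{4}c(\delta,\epsilon_{\by})\|\by^*\|^{2}\|\be_{\by}\|$, giving~\eqref{eq:delJ_norm3}. The only delicate point is tracking that the $\tilde{\calH}$-formulation of~\eqref{eq:Hcondition} is applicable here precisely because the leading eigenpair of the rank-1 PSD input $\by\by^{H}$ lifts back to $\calH(\by)$, so the abstract machinery of Section~\ref{sec:convergence} specializes cleanly to the gradient bound.
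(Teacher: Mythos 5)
Your proposal is correct and follows essentially the same route as the paper's proof: it starts from the gradient representation in~\eqref{eq:nablaJ1}, splits the bracketed operator into the three summands $\bE_{\brho}$, the scalar offset, and $\Delta(\bE_{\brho})$, bounds the first two via the rank-two factorization $\bE_{\brho}=\be_{\brho}\calH(\by)^H+\brho^*\be_{\brho}^H$ together with Lemma~\ref{lemma:caHy_minus_calHyc_calHy}, handles $\|\Delta(\bE_{\brho})\|$ through~\eqref{eq:Hcondition} and $\hat{\delta}=\omega\mu^2_{\calH}\delta$ with $\|\by\by^H-\by^*{\by^*}^H\|_F\leq(2+\epsilon_{\by})\|\by^*\|\|\be_{\by}\|$, and finally peels off $\|\nabla\calH(\by)\|\leq\mu_{\calH}$ and $\|\calH(\by)\|\leq\mu_{\calH}(1+\epsilon_{\by})\|\by^*\|$ to assemble $\mu^4_{\calH}c(\delta,\epsilon_{\by})\|\by^*\|^2\|\be_{\by}\|$. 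The only cosmetic difference is that you bound the bracket's operator norm before multiplying by $\|\calH(\by)\|$, whereas the paper keeps $\|\calH(\by)\|$ inside its intermediate quantities $q_1,q_2,q_3$; the constants come out identically.
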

\begin{proof}
Proof is included in Subsection~\ref{app:norm_derivative}.
\end{proof}

%\vspace{0.05in}
\subsubsection{Regularity Condition and Its Implication}
For our DL based phaseless imaging approach, we show that if a regularity condition similar to the one used in~\cite{yonel2020deterministic} is satisfied for all $\by\in\bbN_{\epsilon_{\by}}(\by^*)$, then with appropriate learning rates, the WF updates remain contractive.
In that case, the iterative updates starting from the initial value $\by^{(0)}$ converge to $\by^*$, from which it can be mapped to the image domain by using the decoding network.
The regularity condition, modified for the encoded representations, is stated below:
\begin{condition}
For all $\by\in \bbN_{\epsilon_{\by}}(\by^*)$ and $\calH(\by^*)\in\bbT$,
\begin{align}
    \label{eq:regularity_cond} \mathrm{Re}\left(\langle\nabla\calK(\by), \be_{\by}\rangle\right) \geq \frac{1}{\alpha}\|\be_{\by}\|^2 + \frac{1}{\beta}\|\nabla\calK(\by)\|^2,
\end{align}
where $\alpha > 0$, $\beta > 0$.
\end{condition}

Using the upper bound on $\|\nabla\calK(\by)\|$ from~\eqref{eq:delJ_norm3}, we observe that this regularity condition is true if
\begin{align}
\label{eq:regularity_cond1} \mathrm{Re}\left(\langle\nabla\calK(\by), \be_{\by}\rangle\right) & \geq q(\alpha, \beta, \by^*)\|\be_{\by}\|^2,
\end{align}
where $q(\alpha, \beta, \by^*) := \frac{1}{\alpha} + \frac{1}{\beta}\mu^8_{\calH}c^2(\delta, \epsilon_{\by})\| \by^*\|^4$.
Since $\nabla\calK(\by)$ equals to $0$ at $\by = \by^*$, the relation in~\eqref{eq:regularity_cond1} can be written as
\begin{align}
    \mathrm{Re}\left(\langle\nabla\calK(\by) - \nabla\calK(\by)_{\by = \by^*}, \be_{\by}\rangle\right) \geq q(\alpha, \beta, \by^*)\|\be_{\by}\|^2
\end{align}
Therefore,~\eqref{eq:regularity_cond1} implies that $\calK(\by)$ is strongly convex in $\bbN_{\epsilon_{\by}}(\by^*)$, and using the definition of strong convexity, it is equivalent to
\begin{align}
    \label{eq:regularity_cond2} \calK(\by) \geq
    0.5q(\alpha, \beta, \by^*)\|\be_{\by}\|^2.
\end{align}

Now, from the definition of Lipschitz constant, we know that $\mu_{\calH} \geq \frac{\|\calH(\by_1) - \calH(\by_2)\|}{\|\by_1 - \by_2\|}$ for all $\by_1, \by_2\in\bbY$.
Similarly, we define another constant $\tilde{\mu}_{\calH}$ as $\tilde{\mu}_{\calH} = \min_{\by_1, \by_2\in\bbY}\frac{\|\calH(\by_1) - \calH(\by_2)\|}{\|\by_1 - \by_2\|}$, which implies that $\tilde{\mu}_{\calH} \leq \frac{\|\calH(\by_1) - \calH(\by_2)\|}{\|\by_1 - \by_2\|}$ for any $\by_1, \by_2\in\bbY$.
Since $\brho^*$ and $\brho$ are reproducible by the decoding network from $\by^*$ and $\by$, respectively, and $\calH(\mathbf{0}) = \mathbf{0}$, we can therefore write
\begin{align}
\label{eq:rho_star_l2_lb} \|\brho^*\|^2 & \geq \tilde{\mu}^2_{\calH}\|\by^*\|^2, \\
\label{eq:e_rho_l2_lb} \|\be_{\brho}\|^2 & \geq \tilde{\mu}^2_{\calH}\|\be_{\by}\|^2.
\end{align}
Then, for the condition in~\eqref{eq:cond_phaseless2}, we can show that $\calK(\by)$ is lower bounded as follows:
\begin{align}
    \label{eq:regularity_cond3} \calK(\by) & \geq
    0.5\tilde{h}(\delta, \epsilon_{\by})\|\brho^*\|^2\|\be_{\brho}\|^2,
\end{align}
the proof of which is presented in Subsection~\ref{app:calK_lb}.
$\tilde{h}(\delta, \epsilon_{\by})$ in~\eqref{eq:regularity_cond3} is defined as
\begin{align}
\tilde{h}(\delta, \epsilon_{\by}) := \left(1 - \delta_1\right)(1 - \epsilon_{\brho})(2 - \epsilon_{\brho}),
\end{align}
and
\begin{align}
    \label{eq:delta_1_def} \delta_1 & := \frac{\sqrt{2}\hat{\delta}(2 + \epsilon_{\brho})(2 + \epsilon_{\by})}{\tilde{\mu}^2_{\calH}(1 - \epsilon_{\brho})(2 - \epsilon_{\brho})}.
\end{align}
Now, for the regularity condition to be redundant, the values of $\alpha$ and $\beta$ in~\eqref{eq:regularity_cond2} should be such that it holds for all $\by^*\in\bbY$.
This can be ensured by setting $\alpha$ and $\beta$ values such that the lower bound in~\eqref{eq:regularity_cond2} is smaller than the lower bound in~\eqref{eq:regularity_cond3}, i.e.,
\begin{align}
	\label{eq:regularity_cond_compare} q(\alpha, \beta, \by^*)\|\be_{\by}\|^2 & \leq \tilde{h}(\delta, \epsilon_{\by})\|\brho^*\|^2\|\be_{\brho}\|^2.
\end{align}
Finally, by using the two lower bound expressions from~\eqref{eq:rho_star_l2_lb} and~\eqref{eq:e_rho_l2_lb}, we can infer from~\eqref{eq:regularity_cond_compare} that the regularity condition is satisfied if $\alpha$ and $\beta$ values are such that $q(\alpha, \beta, \by^*)\|\be_{\by}\|^2$ is upper bounded by $\tilde{h}(\delta, \epsilon_{\by})\tilde{\mu}^4_{\calH}\|\by^*\|^2\|\be_{\by}\|^2$ or equivalently, if
\begin{align}
    \label{eq:regularity_cond_compare2} \frac{1}{\alpha\|\by^*\|^2} + \frac{1}{\beta}\mu^8_{\calH}c^2(\delta, \epsilon_{\by})\| \by^*\|^2 & \leq h(\delta, \epsilon_{\by}),
\end{align}
where $h(\delta, \epsilon_{\by}) := \tilde{\mu}^4_{\calH}\tilde{h}(\delta, \epsilon_{\by})$.

Next, we denote ${\gamma_{l}}/{\|\by^{(0)}\|^2}$ by $\gamma'_l$ for $l\in[L]$.
Since $\|\be^{(l)}_{\by}\|^2$ can be expanded as $\|\left(\by^{(l - 1)} - \gamma'_l\nabla\calK_{\by = \by^{(l - 1)}}\right) - \by^*\|^2$, for $\alpha$ and $\beta$ values that satisfies~\eqref{eq:regularity_cond_compare2}, we can write
\begin{align}
    & \|\be^{(l)}_{\by}\|^2
    =^{(a)} \|\be^{(l - 1)}_{\by}\|^2 - 2\gamma'_l Re(\langle\nabla\calK_{\by = \by^{(l - 1)}}, \be^{(l - 1)}_{\by}\rangle) \nonumber \\
    \label{eq:Cl_2_Vert_yl_yc_Vert} & + \gamma'^2_l\|\nabla\calK_{\by = \by^{(l - 1)}}\|^2, \\
    & \leq^{(b)} \left[1 - \frac{2\gamma'_l}{\alpha} + \left(\gamma'^2_l - \frac{2\gamma'_l}{\beta}\right)\mu^8_{\calH}c^2(\delta, \epsilon_{\by})\| \by^*\|^4\right] \nonumber \\
    \label{eq:convergence_rate} & \times \|\be^{(l - 1)}_{\by}\|^2 \leq^{(c)} \left[1 - \frac{2\gamma'_l}{\alpha}\right]\|\be^{(l - 1)}_{\by}\|^2.
\end{align}
The equality in (a) arises from simple expansion of the squared $\ell_2$-norm expression.
(b) results from using the lower bound on $\rmRe(\langle\nabla\calK(\by)_{\by = \by^{(l)}}, \be^{(l - 1)}_{\by}\rangle)$ from the regularity condition in \eqref{eq:regularity_cond1}, and the upper bound on $\|\nabla\calK(\by)_{\by = \by^{(l - 1)}}\|$ from Lemma
\ref{lemma:norm_del_calK}.
The inequality in (c) results from the assumption that for $l\in[L]$, $\gamma'_l \leq \frac{2}{\beta}$ and $\gamma'_l \geq 0$.
From~\eqref{eq:convergence_rate}, by combining the expressions for
%$l$'s
all $l$
up to $j \in [L]$, we can write
\begin{align}
    \|\be^{(j)}_{\by}\|^2 \leq \left[\prod_{l = 1}^{j}\left(1 - \frac{2\gamma'_{l}}{\alpha}\right)\right]\|\be^{(0)}_{\by}\|^2.
\end{align}
By using the inequality from~\eqref{eq:dist_y}, this upper bound expression can be modified to get the relation in~\eqref{eq:convergence_rate3}.

\subsection{Proof of~\eqref{eq:MGMH_ub_lb} and~\eqref{eq:muR_ub}}
\label{app:proof_42_44}
As stated earlier, it is desirable to have a value of $\chi\mu_{\calH}$ between $0$ to $1$.
Suppose the value of $\chi$ learned through training is denoted by $\tau\in\mathbbm{R}^+$.
As shown in Appendix~\ref{app:initialization}, $\tau$ will be larger than or equal to $\max\left(b_1(\mu_{\calG}, \mu_{\calH}, \epsilon), b_2(\bbV, \epsilon)\right)$.
This implies that for a particular $\epsilon$, the Lipschitz constants of $\calG$, $\calH$ and $\calR$ should be such that the maximum of $b_1(\mu_{\calG}, \mu_{\calH}, \epsilon)$ and $b_2(\bbV, \epsilon)$ is positive and as close to $0$ as possible.
On the other hand, since our goal is to have $\epsilon_{\by}$ to be less than $\epsilon$, it is desirable that $\tau\mu_{\calH}$ is upper bounded by a real-valued constant $\xi_{\by}$ as $\tau\mu_{\calH} \leq \xi_{\by} \leq 1$.
Additionally, for $\calH(\by)$ to be within an $\epsilon_{\brho}$-neighbourhood of $\brho^*$ for all $\by\in\bbN_{\epsilon_{\by}}(\by^*)$, where $\epsilon_{\brho} < \epsilon$, we have the following additional condition:
\begin{align}
\label{eq:k_MH_MRNN_MG_1_e} \tau\mu_{\calH}\mu_M(1 + \epsilon) & \leq 1.
\end{align}
Suppose $\tau\mu_{\calH}\mu_M(1 + \epsilon)$ is upper bounded by $\xi_{\brho}$ where $0 < \xi_{\brho} \leq 1$.

For conveniently imposing bounds on the Lipschitz constants of $\calG$, $\calH$ and $\calR$ that achieve desirable encoded image initialization accuracy as well as satisfy~\eqref{eq:k_MH_MRNN_MG_1_e}, we aim to set their values so that $b_1(\mu_{\calG}, \mu_{\calH}, \epsilon) \geq b_2(\bbV, \epsilon)$ and $0 \leq b_1(\mu_{\calG}, \mu_{\calH}, \epsilon) \leq \tau$.
Now, by using the expressions of these two functions from~\eqref{eq:b1_def} and~\eqref{eq:b2_def}, we observe that for $b_1(\mu_{\calG}, \mu_{\calH}, \epsilon)$ to be greater than or equal to $b_2(\bbV, \epsilon)$, we require that
\begin{align}
    \label{eq:muG_muH_ub1} \mu_{\calG}\mu_{\calH} \leq \frac{1}{(1 + \epsilon)}\left(2 - \frac{1}{\mu_{\calR}}\right),
\end{align}
while in order to have $b_1(\mu_{\calG}, \mu_{\calH}, \epsilon) \geq 0$, we need that
\begin{align}
    \label{eq:muG_muH_ub2} \mu_{\calG}\mu_{\calH} \leq {1}/{(1 + \epsilon)}.
\end{align}
Moreover, while deriving the relation in~\eqref{eq:k_ub_lb} in Appendix~\ref{app:initialization}, we observed the necessary condition that $\frac{1}{\mu_M} \geq (1 - \epsilon)$ which always holds if $\epsilon \geq 1$, but becomes significant when $0 \leq \epsilon < 1$ as it requires that
\begin{align}
    \label{eq:muG_muH_ub3} \mu_{\calG}\mu_{\calH} \leq {1}/{[(1 - \epsilon)\mu_{\calR}]}.
\end{align}
The upper bounds in~\eqref{eq:muG_muH_ub1} and~\eqref{eq:muG_muH_ub3} depend on the Lipschitz constant of $\calR$.
Suppose we set $\mu_{\calR}$ so that it is upper bounded by $1$.
In that case, we observe that the upper bound expression in~\eqref{eq:muG_muH_ub1} is less than $\frac{1}{(1 + \epsilon)}$ while the upper bound in~\eqref{eq:muG_muH_ub3} is greater than $\frac{1}{(1 - \epsilon)}$.
This indicates that~\eqref{eq:muG_muH_ub1} is a tighter upper bound requirement than~\eqref{eq:muG_muH_ub2}.
For the case when $0\leq\epsilon<1$, $\frac{1}{(1 + \epsilon)}$ is a tighter upper bound on $\mu_{\calG}\mu_{\calH}$ compared to $\frac{1}{(1 - \epsilon)}$, and therefore, the later becomes redundant.
Therefore, as long as $\mu_{\calG}\mu_{\calH}$ is less than or equal to $\frac{1}{(1 + \epsilon)}\left(2 - \frac{1}{\mu_{\calR}}\right)$, the desired inequality relations in~\eqref{eq:muG_muH_ub1},~\eqref{eq:muG_muH_ub2} and~\eqref{eq:muG_muH_ub3} are satisfied.

Additionally, since another one of our objective is to achieve $\tau\mu_{\calH}\mu_M(1 + \epsilon) \leq \xi_{\brho}$, we first consider the maximum possible value of the term on the left hand side of this inequality and then set it to be less than or equal to $\xi_{\brho}$.
We are upper bounding $\mu_{\calR}$ by $1$ and $\tau\mu_{\calH}$ by $\xi_{\by}$.
Therefore, the maximum possible value that $\tau\mu_{\calH}\mu_M(1 + \epsilon)$ can take is $\xi_{\by}\mu_{\calG}\mu_{\calH}(1 + \epsilon)$.
Upon setting this expression to be less than $\xi_{\brho}$ and rearrangement, we get $\mu_{\calG}\mu_{\calH} \leq {\xi_{\brho}}/{[\xi_{\by}(1 + \epsilon)]}$.
Finally, by setting $b_1(\mu_{\calG}, \mu_{\calH}, \epsilon)$ to be less than $\tau$, we get a lower bound on $\mu_{\calG}\mu_{\calH}$ as $\mu_{\calG}\mu_{\calH} \geq {(1 - \tau\epsilon\mu_{\calH})}/{(1 + \epsilon)}$.

%\vspace{-0.1in}
\subsection{Proof of Lemma~\ref{lemma:caHy_minus_calHyc_calHy}}
\label{app:calH_y_calH_yc}
For all $\by\in\bbN_{\epsilon_{\by}(\by^*)}$, we have $\|\be_{\by}\|\leq\epsilon_{\by}\|\by^*\|$.
Since by using the Lipschitz constant of the decoding network, $\frac{1}{\mu_{\calH}}\|\be_{\brho}\|\leq\|\be_{\by}\|$, therefore %we have the upper bound $\|\be_{\brho}\| \leq \epsilon_{\by} \mu_{\calH}\|\by^*\|$.
$\|\be_{\brho}\|$ is upper bounded by $\epsilon_{\by} \mu_{\calH}\|\by^*\|$.
On the other hand, by using the triangular inequality, we can write $|\|\calH(\by)\| - \|\calH(\by^*)\|| \leq \|\be_{\brho}\|$.
Combining this relation with the upper bound on $\|\be_{\brho}\|$, we can write $\|\calH(\by)\| \leq \|\calH(\by^*)\| + \epsilon_{\by} \mu_{\calH}\|\by^*\|$ and $\|\calH(\by)\| \geq \|\calH(\by^*)\| - \epsilon_{\by}\mu_{\calH}\|\by^*\|$.
We modify these upper and lower bound expressions further by using the Lipschitz constant property of the encoder and the RNN.
Since $\by^* = \calR \circ \calG(\brho^{(0)})$, we use the inequality $\|\by^*\| \leq \mu_{\calR}\mu_{\calG}(1 + \epsilon)\|\brho^*\|$ and get $\|\calH(\by)\| \leq \left(1 + \epsilon_{\brho}\right)\|\calH(\by^*)\|$ and $\|\calH(\by)\| \geq \left(1 - \epsilon_{\brho}\right)\|\calH(\by^*)\|$.
Additionally, by using the Lipschitz constant of the decoding network, we can modify this upper bound expression as $\|\calH(\by)\| \leq \mu_{\calH}(1 + \epsilon_{\brho})\|\by^*\|$.

\vspace{-0.1in}
\subsection{Proof of Lemma~\ref{lemma:norm_del_calK}}
\label{app:norm_derivative}
By taking the $\ell_2$-norm of both sides of~\eqref{eq:nablaJ1}, we can upper bound $\|\nabla\calK\|$ as
\begin{align}
    \label{eq:delJ_norm} \|\nabla\calK\| & \leq \|\nabla\calH(\by)\|\left(q_1(\by) + q_2(\by) + q_3(\by)\right),
\end{align}
where
\begin{align}
    q_1(\by) & := \|\bE_{\brho}\calH(\by)\|, \\
    q_2(\by) & := \|\Delta\left(\bE_{\brho}\right)\calH(\by)\|, \\
    q_3(\by) & := \|\left(\|\calH(\by)\|^2 - \|\brho^*\|^2\right)\calH(\by)\|.
\end{align}
Now, by expressing $\bE_{\brho}$ as $\be_{\brho}\calH(\by)^H + \brho^*\be^H_{\brho}$, it is easy to verify that $q_1(\by)$ is upper bounded by $\|\calH(\by)\|\|\be_{\brho}\|(\|\calH(\by)\| + \|\brho^*\|)$.
Then, by using the upper bound on $\|\calH(\by)\|$ from Lemma~\ref{lemma:caHy_minus_calHyc_calHy} in this expression, we get
\begin{align}
    \label{eq:part1_2} q_1(\by) & \leq \mu_{\calH}(2 + \epsilon_{\by})\|\calH(\by)\|\|\be_{\brho}\|\|\by^*\|.
\end{align}

Next, we upper bound the expression of $q_2(\by)$ by $\|\Delta\left(\bE_{\brho}\right)\|\|\calH(\by)\|$.
Under our sufficient condition from~\eqref{eq:cond_phaseless2} and~\eqref{eq:Hcondition}, $\|\Delta\left(\bE_{\brho}\right)\|$ is upper bounded by $\hat{\delta}\|\tilde{\bY} - \tilde{\bY}^*\|_F$ where $\hat{\delta} = \omega\mu^2_{\calH}\delta$.
We can then determine an upper bound on $\|\tilde{\bY} - \tilde{\bY}^*\|_F$ as a constant multiplier of $\|\be_{\by}\|\|\by^*\|$ as $\|\tilde{\bY} - \tilde{\bY}^*\|_F \leq (2 + \epsilon_{\by})\|\be_{\by}\|\|\by^*\|$,
from which we conclude that
\begin{align}
    q_2(\by) & \leq \hat{\delta}(2 + \epsilon_{\by})\|\calH(\by)\|\|\be_{\by}\|\|\by^*\|.
\end{align}
Finally, by using the inequalities $\|\by\| \leq (1 + \epsilon_{\by})\|\by^*\|$ and $|\|\calH(\by)\| - \|\brho^*\|| \leq \|\be_{\brho}\|$, we can upper bound $q_3(\by)$ as follows:
\begin{align}
    q_3(\by) & \leq \mu_{\calH}(2 + \epsilon_{\by})\|\calH(\by)\|\|\be_{\brho}\|\|\by^*\|.
\end{align}

After replacing the upper bound expressions for $q_1(\by)$, $q_2(\by)$ and $q_3(\by)$ in the right hand side of~\eqref{eq:delJ_norm} and by using the Lipschitz constant of $\calH$, we get
\begin{align}
    \label{eq:delJ_norm2} & \|\nabla\calK(\by)\|
    \leq \mu^3_{\calH}c(\delta, \epsilon_{\by})\|\nabla\calH(\by)\|\|\by^*\|^2\|\be_{\by}\|.
\end{align}
From the definition of Lipschitz constant, $\mu_{\calH}\geq\|\nabla\calH(\by)\|$ for all $\by\in\bbY$.
Therefore, from~\eqref{eq:delJ_norm2}, we get the following upper bound relation:
\begin{align}
    \|\nabla\calK(\by)\| & \leq \mu^4_{\calH}c(\delta, \epsilon_{\by})\|\by^*\|^2\|\be_{\by}\|.
\end{align}

%\vspace{-0.1in}
%\subsection{Proof of the Lower Bound on $\calK(\by)$ in~\eqref{eq:regularity_cond3}}
\subsection{Proof of the Inequality Relation in~\eqref{eq:regularity_cond3}}
\label{app:calK_lb}
From the definition of $\calK(\by)$ in~\eqref{eq:calK}, $\calK(\by)$ can be expressed as a function of $\bE_{\brho}$ by
$\frac{1}{2M}\|\calF\left(\bE_{\brho}\right)\|^2$.
Since we can write $\frac{1}{M}\|\calF\left(\bE_{\brho}\right)\|^2$ as $\langle\frac{1}{M}\calF^H\calF(\bE_{\brho}), \bE_{\brho}\rangle$, by using the results from Lemma~\ref{lemma:lemma_III_1}, we can verify that
\begin{align}
    \frac{1}{M}\|\calF\left(\bE_{\brho}\right)\|^2 \geq \|\bE_{\brho}\|^2_F + \langle\Delta(\bE_{\brho}), \bE_{\brho}\rangle_F.
\end{align}
Now, since
\begin{align}
    |\langle\Delta(\bE_{\brho}), \bE_{\brho}\rangle_F| & \leq \sqrt{2}\|\bE_{\brho}\|_F\|\Delta(\bE_{\brho})\|,
\end{align}
and under our sufficient condition from~\eqref{eq:cond_phaseless2} and~\eqref{eq:Hcondition}, we know that $\|\Delta(\bE_{\brho})\| \leq \hat{\delta}\|\tilde{\bY} - \tilde{\bY}^*\|_F$, then we observe that $|\langle\Delta(\bE_{\brho}), \bE_{\brho}\rangle_F|$ is upper bounded by $\sqrt{2}\hat{\delta}\|\bE_{\brho}\|_F\|\tilde{\bY} - \tilde{\bY}^*\|_F$.
Using similar steps as the proof of Lemma III.3 in~\cite{yonel2020deterministic}, it is easy to verify that
\begin{align}
    \label{eq:Erho_F_ub} \|\bE_{\brho}\|_F & \leq (2 + \epsilon_{\brho})\|\be_{\brho}\|\|\brho^*\|, \\
    \label{eq:Erho_F_lb} \|\bE_{\brho}\|_F & \geq \sqrt{(1 - \epsilon_{\brho})(2 - \epsilon_{\brho})}\|\brho^*\|\|\be_{\brho}\|, \\
    \label{eq:Ey_F_ub} \|\tilde{\bY} - \tilde{\bY}^*\|_F & \leq (2 + \epsilon_{\by})\|\be_{\by}\|\|\by^*\|.
\end{align}

Using the two upper bounds from~\eqref{eq:Erho_F_ub} and~\eqref{eq:Ey_F_ub}, the upper bound expression on $|\langle\Delta(\bE_{\brho}), \bE_{\brho}\rangle_F|$ can be modified to $\delta_1\|\bE_{\brho}\|^2_F$.
Finally, by using the lower bound expression from~\eqref{eq:Erho_F_lb}, we observe that $\frac{1}{M}\|\calF\left(\bE_{\brho}\right)\|^2$ is lower bounded by $(1 - \delta_1)(1 - \epsilon_{\brho})(2 - \epsilon_{\brho})\|\brho^*\|^2\|\be_{\brho}\|^2$.

\bibliographystyle{IEEEtran}
\bibliography{references}

\end{document}